\newcommand\Tx[1]{\mathrm{#1}}
\newcommand\Se[1]{\mathcal{#1}}
\newcommand\Nm[1]{\lvert #1\rvert}
\newcommand\MB[1]{\left[#1\right]}
\newcommand{\RN}[1]{\textup{\uppercase\expandafter{\romannumeral#1}}}
\newtheorem{theo}{Theorem}
\newtheorem{lemma}{Lemma}
\newtheorem{exam}{Example}
\newtheorem{defi}{Definition}
\newtheorem{rem}{Remark}
\def\old@comma{,}
    \old@comma\discretionary{}{}{}%
\def\BState{\State\hskip-\ALG@thistlm}
\definecolor{apple green}{rgb}{0.17,0.75,0.13}
\definecolor{edit}{rgb}{0.0,0.0,0.0}
\definecolor{edit_ah}{rgb}{0.0,0.0,0.0}
\newcommand*{\algrule}[1][\algorithmicindent]{\makebox[#1][l]{\hspace*{.5em}\vrule height 0.9 \baselineskip depth 0.3\baselineskip}}%
\def\ALG@printindent{%
    \ifnum \theALG@nested>0
        \ifx\ALG@text\ALG@x@notext
            \addvspace{0pt}
        \else
            \unskip
            \ALG@printindent@tempcnta=1
            \loop
                \algrule[\csname ALG@ind@\the\ALG@printindent@tempcnta\endcsname]%
                \advance \ALG@printindent@tempcnta 1
            \ifnum \ALG@printindent@tempcnta<\numexpr\theALG@nested+1\relax
            \repeat
        \fi
    \fi
    }%
\patchcmd{\ALG@doentity}{\noindent\hskip\ALG@tlm}{\ALG@printindent}{}{\errmessage{failed to patch}}
\begin{document}

\title{Breaking the Computational Bottleneck:\\ Design of Near-Optimal High-Memory Spatially-Coupled Codes }
\author{\IEEEauthorblockN{Siyi Yang, \IEEEmembership{Student Member, IEEE}, Ahmed Hareedy, \IEEEmembership{Member, IEEE}, Robert Calderbank, \IEEEmembership{Fellow, IEEE}, \\ and Lara Dolecek, \IEEEmembership{Senior Member, IEEE}}
\thanks{S. Yang and L. Dolecek are with the Electrical and Computer Engineering Department, University of California, Los Angeles, Los Angeles, CA 90095 USA (e-mail: siyiyang@ucla.edu and dolecek@ee.ucla.edu).}
\thanks{A. Hareedy and R. Calderbank are with the Electrical and Computer Engineering Department, Duke University, Durham, NC 27708 USA (e-mail: ahmed.hareedy@duke.edu and robert.calderbank@duke.edu).}
\thanks{The research was supported in part by UCLA Dissertation Year Fellowship, in part by the Air Force Office of Scientific Research (AFOSR) under Grant FA 9550-20-1-0266, and by the National Science Foundation (NSF) under Grants CCF-FET no. 2008728 and CCF 2106213. Part of the paper was presented at the 2021 IEEE International Symposium on Information Theory (ISIT) \cite{Yang2020GRADE}.}
}
\maketitle

\begin{abstract}
Spatially-coupled (SC) codes, known for their threshold saturation phenomenon and low-latency windowed decoding algorithms, are ideal for streaming applications and data storage systems. SC codes are constructed by partitioning an underlying block code, followed by rearranging and concatenating the partitioned components in a convolutional manner. The number of partitioned components determines the memory of SC codes. In this paper, we investigate the relation between the performance of SC codes and the density distribution of partitioning matrices. While adopting higher memories results in improved SC code performance, obtaining finite-length, high-performance SC codes with high memory is known to be computationally challenging. We break this computational bottleneck by developing a novel probabilistic framework that obtains (locally) optimal density distributions via gradient descent. Starting from random partitioning matrices abiding by the obtained distribution, we perform low-complexity optimization algorithms that minimize the number of detrimental objects to construct high-memory, high-performance quasi-cyclic SC codes. We apply our framework to various objects of interests, from the simplest short cycles, to more sophisticated objects such as concatenated cycles aiming at finer-grained optimization. Simulation results show that codes obtained through our proposed method notably outperform state-of-the-art SC codes with the same constraint length and optimized SC codes with uniform partitioning. The performance gain is shown to be universal over a variety of channels, from canonical channels such as additive white Gaussian noise and binary symmetric channels, to practical channels underlying flash memory and magnetic recording systems.

\end{abstract} 

\begin{IEEEkeywords}
LDPC codes, spatially-coupled codes, absorbing sets, edge distribution, gradient descent, near-optimal partitioning, data storage, Flash memories, magnetic recording, communications.
\end{IEEEkeywords}

\IEEEpeerreviewmaketitle

\section{Introduction}
\label{sectoin: introduction}

Spatially-coupled (SC) codes, also known as low-density parity-check (LDPC) codes with convolutional structures, are an ideal choice for streaming applications and data storage devices thanks to their threshold saturation phenomenon \cite{5695130,kumar2014threshold,olmos2015scaling,hareedy2017high,lentmaier2010iterative} and amenability to low-latency windowed decoding \cite{Iyengar2013windowed}. SC codes are constructed by partitioning the parity-check matrix of an underlying block code, followed by rearranging the component matrices in a \textit{convolutional} manner. In particular, component matrices are vertically concatenated into a \textit{replica}, and then multiple replicas are horizontally placed together, resulting in a \textit{coupled} code. The number of component matrices minus one is referred to as the \textit{memory} of the SC code \cite{mitchell2015spatially,esfahanizadeh2018finite,hareedy2020channel,pusane2011deriving}.

It is known that the performance of an SC code improves as its memory increases. This is a byproduct of improved node expansion and additional degrees of freedom that can be utilized to decrease the number of short cycles and detrimental objects \cite{esfahanizadeh2018finite,hareedy2020channel,dolecek2010analysis,naseri2020spatially,naseri2021construction}. A plethora of existing works \cite{esfahanizadeh2018finite,hareedy2020channel,battaglioni2017design,9112247} focus on minimizing the number of short cycles in the graph of the SC code. Although the optimization problem of designing SC codes with memory less than $4$ has been efficiently solved \cite{esfahanizadeh2018finite,hareedy2020channel}, there is still an absence of efficient algorithms that construct good enough SC codes with high memories systematically. Esfahanizadeh \textit{et al.} \cite{esfahanizadeh2018finite} proposed a combinatorial framework to develop optimal quasi-cyclic (QC) SC codes, comprising so-called optimal overlap (OO) to search for the optimal partitioning matrices, and lifting optimization (CPO) to optimize the lifting parameters, which was extended by Hareedy \textit{et al.} \cite{hareedy2020channel}. However, this method is hard to execute in practice for high-memory codes due to the increasing computational complexity. Heuristic methods that search for good SC codes with high memories are derived in \cite{battaglioni2017design,beemer2017generalized,9112247,naseri2021construction}. However, high-memory codes designed by purely heuristic methods are unable to reach the potential performance gain that can be achieved through high memories due to lack of theoretical properties; several of these codes can even be beat by optimally designed QC-SC codes with lower memories under the same constraint length \cite{9112247}. Therefore, a method that theoretically identifies an avenue to a near-optimal construction of SC codes with high memories is of significant interest. 

Inspired by the excellent performance and the low computational complexity offered by approaches comprising theoretical analysis guiding heuristic methods, we propose a two-step hybrid optimization framework that has these advantages. The framework first specifies a search subspace that is theoretically proved to be locally optimal, followed by a semi-greedy algorithm within this targeted search space. By analogy with threshold optimization approaches that search for LDPC ensembles with the optimal degree distribution, our first step is to obtain an SC ensemble with the optimal edge distribution (i.e., density distribution of component matrices). The associated metric is the expected number of targeted detrimental objects in the protograph of the code. Having reached a locally optimal edge distribution through gradient descent, we then apply a semi-greedy algorithm to search for a locally optimal partitioning matrix that satisfies this edge distribution. Our probabilistic framework is referred to as \textbf{gradient-descent distributor, algorithmic optimizer (GRADE-AO)}. 

Preliminary version of this work was presented in \cite{Yang2020GRADE}, where we focused only on the minimization of the number of short cycles. In this work, we develop a general framework that handles arbitrary objects. While cycles are detrimental in codes with low variable node (VN) degrees, such as regular codes with VN degree $2$ or $3$, objects that dominate the error profiles in higher-degree codes and irregular codes are typically more advanced. In particular, we focus on the concatenation of two short cycles in this paper. These concatenated cycles are common subgraphs of the detrimental objects, which are absorbing sets (ASs) \cite{dolecek2010analysis}), that govern the performance of LDPC codes with VN degree $\geq 3$ in error floor region. These detrimental objects are also the major source of undesirable dependencies that undermine the performance in the waterfall region. While focusing on cycles for simplicity, which is the case for the majority of existing works, unnecessary degrees of freedom could be exhausted on isolated cycles that are much less problematic. To the best of our knowledge, we are the first to provide a framework that systematically eliminates objects other than cycles from the Tanner graph of an SC code with mathematical guarantees, which is important for a variety of applications including storage systems. Hareedy \textit{et al.} \cite{hareedy2016general,hareedy2019combinatorial} proposed the so-called weight consistency matrix (WCM) framework to search for edge-weight assignments that minimize the number of ASs in non-binary (NB) LDPC codes with a given underlying topology, and demonstrated performance gains in data storage systems. Simulation results show that our framework leads to codes with excellent performance in flash memory and magnetic recording systems. The proposed GRADE-AO framework not only opens a door to fine-grained optimization over detrimental objects in SC codes, but also can be applied in optimizing the unweighted graphs of non-binary (NB) SC codes, which can lead to excellent NB-SC codes when combined with the WCM framework. Because of the improved threshold and waterfall performance, GRADE-AO has potential to produce SC codes for communication systems as well.

In this paper, we propose a probabilistic framework that efficiently searches for near-optimal SC codes with high memories. In \Cref{section: preliminaries}, we introduce preliminaries of SC codes and the performance-related metrics. In \Cref{section: framework}, we develop the theoretical basis of GRADE, which derives an edge distribution that determines a locally optimal SC ensemble. In \Cref{section: generalization of GRADE}, we introduce the theoretical details of how GRADE is generalized to more sophisticated objects. The distribution obtained through GRADE leads to effective initialization and specifies the search space of the semi-greedy algorithm adopted in AO afterwards. In \Cref{section: construction}, we introduce two examples of GRADE-AO that result in near-optimal SC codes: the so-called \textbf{gradient-descent (GD) codes} and \textbf{topologically-coupled (TC) codes}. In summary, we generalize GRADE to a rich class of relevant objects and present examples of GRADE-AO that focus on concatenated cycles. Our proposed framework is supported in \Cref{section: simulation} by simulation results of seven groups of codes, with the best code in each obtained from GRADE-AO. Finally, we make concluding remarks and introduce possible future work in \Cref{section: conclusion}.

\section{Preliminaries}
\label{section: preliminaries}

In this section, we recall the typical construction of SC codes with quasi-cyclic (QC) structure. Any QC code with a parity-check matrix $\mathbf{H}$ is obtained by replacing each nonzero (zero) entry of some binary matrix $\mathbf{H}^{\textup{P}}$ with a circulant (zero) matrix of size $z$, $z\in\mathbb{N}$. The matrix $\mathbf{H}^{\textup{P}}$ and $z$ are referred to as the protograph and the circulant size of the code, respectively. In particular, the protograph $\mathbf{H}^{\textup{P}}_{\textup{SC}}$ of an SC code has a convolutional structure composed of $L$ replicas, as presented in Fig.~\ref{fig: SC protograph}. Each replica is obtained by stacking the disjoint component matrices $\{\mathbf{H}^{\textup{P}}_i\}_{i=0}^{m}$, where $m$ is the memory and $\bm{\Pi}=\mathbf{H}^{\textup{P}}_{0}+\mathbf{H}^{\textup{P}}_1+\cdots+\mathbf{H}^{\textup{P}}_m$ is the protograph of the underlying block code. 

In this paper, we constrain $\bm{\Pi}$ to be an all-one matrix of size $\gamma\times \kappa$, $\gamma,\kappa\in\mathbb{N}$. An SC code is then uniquely represented by its partitioning matrix $\mathbf{P}$ and lifting matrix $\mathbf{L}$, where $\mathbf{P}$ and $\mathbf{L}$ are all $\gamma\times \kappa$ matrices. The matrix $\mathbf{P}$ has $(\mathbf{P})_{i,j}=a$ if $(\mathbf{H}^{\textup{P}}_a)_{i,j}=1$. The matrix $\mathbf{L}$ is determined by replacing each circulant matrix by its associated exponent. Here, this exponent represents the power to which the matrix $\bm{\sigma}$ defined by $(\bm{\sigma})_{i,i+1}=1$ is raised, where $(\bm{\sigma})_{z,z+1}=(\bm{\sigma})_{z,1}$.

\begin{figure}
\centering
\includegraphics[width=0.4\textwidth]{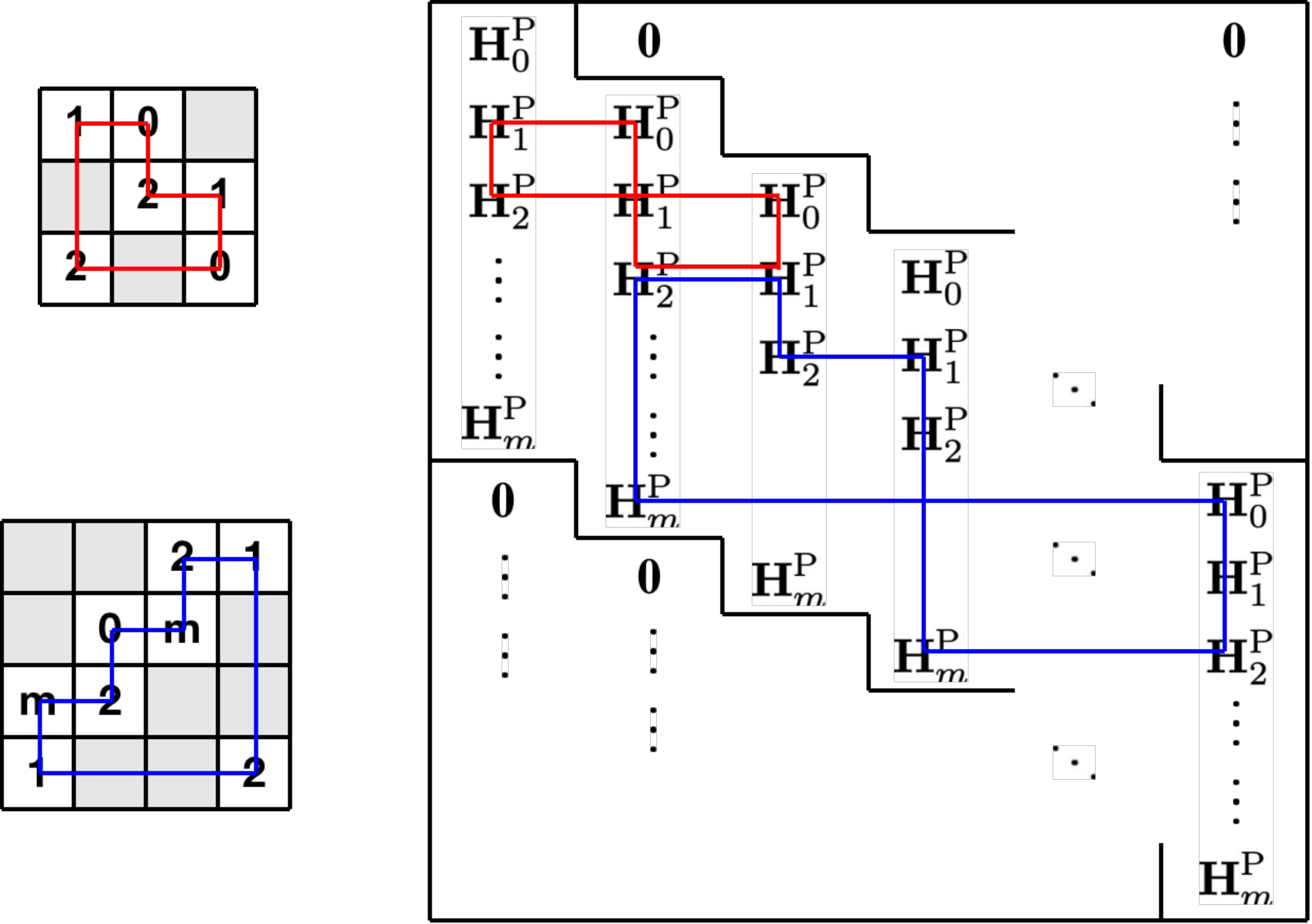}
\caption{Cycles in the protograph (right panel) and their corresponding structures in the partitioning matrices (left panel).}
\label{fig: SC protograph}
\end{figure}

The performance of finite-length LDPC codes is strongly affected by the number of detrimental objects that are subgraphs with certain structures in the Tanner graphs of those codes. Two major classes of detrimental objects are trapping sets and absorbing sets. Since enumerating and minimizing the number of detrimental objects is complicated, existing work typically focuses on common substructures of these objects: the short cycles \cite{esfahanizadeh2018finite,hareedy2020channel,battaglioni2017design}. A cycle-$2g$ candidate in $\mathbf{H}_{\textup{SC}}^{\textup{P}}$ ($\bm{\Pi}$) is a path of traversing a structure to generate cycles of length $2g$ after lifting (partitioning) \cite{hareedy2020channel}. In an SC code, each cycle in the Tanner graph corresponds to a cycle candidate in the protograph $\mathbf{H}_{\textup{SC}}^{\textup{P}}$, and each cycle candidate in $\mathbf{H}_{\textup{SC}}^{\textup{P}}$ corresponds to a cycle candidate $C$ in the base matrix $\mathbf{\Pi}$. \Cref{lemma: cycle condition} specifies a necessary and sufficient condition for a cycle candidate in $\bm{\Pi}$ to become a cycle candidate in the protograph and then a cycle in the final Tanner graph.

\begin{lemma}\label{lemma: cycle condition} Let $C$ be a cycle-$2g$ candidate in the base matrix, where $g\in\mathbb{N}$, $g\geq 2$. Denote $C$ by $(j_1,i_1,j_2,i_2,\dots,j_g,i_g)$, where $(i_{k},j_{k})$, $(i_{k},j_{k+1})$, $1\leq k\leq g$, $j_{g+1}=j_1$, are nodes of $C$ in $\bm{\Pi}$, $\mathbf{P}$, and $\mathbf{L}$. Then $C$ becomes a cycle candidate in the protograph if and only if the following condition follows \cite{battaglioni2017design}:
\begin{equation}\label{eqn: partition cycle}
\sum\nolimits_{k=1}^{g}\mathbf{P}(i_{k},j_{k})=\sum\nolimits_{k=1}^{g}\mathbf{P}(i_{k},j_{k+1}).
\end{equation}
This cycle candidate becomes a cycle in the Tanner graph if and only if \cite{fossorier2004quasicyclic}:
\begin{equation}\label{eqn: final cycle}
\sum\nolimits_{k=1}^{g}\mathbf{L}(i_{k},j_{k})\equiv\sum\nolimits_{k=1}^{g}\mathbf{L}(i_{k},j_{k+1}) \mod z.
\end{equation}
\end{lemma}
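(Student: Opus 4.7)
The plan is to prove both equivalences by the same telescoping-displacement technique, handling \eqref{eqn: partition cycle} in the protograph and \eqref{eqn: final cycle} in the lifted Tanner graph. I will walk around the candidate $C$ step by step, tracking a single scalar index per visit (a replica index for the protograph, a circulant-copy index for the Tanner graph) and demand that this index returns to its starting value upon closure of the walk.

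For \eqref{eqn: partition cycle}, recall the convolutional placement rule: an entry $(i,j)$ of $\bm{\Pi}$ with partition label $a=\mathbf{P}(i,j)$ sits in the component matrix $\mathbf{H}^{\textup{P}}_a$, and in $\mathbf{H}^{\textup{P}}_{\textup{SC}}$ this couples the variable node $(j,t)$ in replica $t$ to the check node $(i,t+a)$. Starting the walk at $(j_1,t_1)$ and following the alternating edges of $C$, the edge $(i_k,j_k)$ shifts the check replica index up by $\mathbf{P}(i_k,j_k)$, while the returning edge $(i_k,j_{k+1})$ shifts the next variable replica index by $-\mathbf{P}(i_k,j_{k+1})$. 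Summing these increments over the $g$ check visits and imposing $t_{g+1}=t_1$ rearranges exactly to \eqref{eqn: partition cycle}, and conversely any valid $t_1$ turns the telescoped walk into an honest protograph cycle candidate.

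For \eqref{eqn: final cycle}, take a protograph cycle candidate as given and lift each of its edges to the $z\times z$ circulant $\bm{\sigma}^{\mathbf{L}(i,j)}$. In the $z$-fold cover, the variable copy $(j,t,s)$ attaches to the check copy $(i,\,t+\mathbf{P}(i,j),\,s+\mathbf{L}(i,j)\bmod z)$. Re-running the same telescoping argument in the $s$-coordinate yields $s_{g+1}\equiv s_1+\sum_{k=1}^{g}[\mathbf{L}(i_k,j_k)-\mathbf{L}(i_k,j_{k+1})]\pmod z$, and the closure condition $s_{g+1}\equiv s_1$ rearranges to \eqref{eqn: final cycle}.

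The main obstacle I expect is not the algebra but a careful check that the constructed walks are genuine cycles rather than more degenerate closed walks: the visited variable and check vertices must all be distinct. This rests on the standing hypothesis that $C$ is a cycle-$2g$ candidate in $\bm{\Pi}$, so the pairs $(i_k,j_k)$ and $(i_k,j_{k+1})$ are pairwise distinct; since the replica and copy shifts act as additive bijections on each fiber of the protograph-to-$\bm{\Pi}$ and Tanner-to-protograph projections, distinctness lifts cleanly. A minor boundary issue is that $t_1$ must be chosen so every intermediate replica index lands in $\{1,\ldots,L\}$, which is guaranteed whenever $L$ exceeds the signed replica width of $C$, matching the standard SC cycle-candidate convention.
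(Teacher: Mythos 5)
Your telescoping argument is correct and is exactly the standard proof of these conditions; the paper itself does not prove \Cref{lemma: cycle condition} but imports it from \cite{battaglioni2017design} and \cite{fossorier2004quasicyclic}, whose proofs proceed the same way (tracking the replica index, respectively the circulant-copy index, around the closed walk and requiring it to return to its start). One small caution: your claim that the node pairs $(i_k,j_k)$, $(i_k,j_{k+1})$ of a base-matrix cycle candidate are pairwise distinct fails for degenerate candidates (e.g., cycle-$8$ candidates of structure $S_1$ traverse a $2\times 2$ block twice), but this does not harm the argument, since distinctness is only required of the lifted walk, which your observation that the shifts act as bijections on each fiber already supplies.
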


As shown in Fig.~\ref{fig: SC protograph}, a cycle-$6$ candidate and a cycle-$8$ candidate in the partitioning matrix with assignments satisfying condition (\ref{eqn: partition cycle}), and their corresponding cycle candidates in the protograph are marked by red and blue, respectively. An optimization of a QC-SC code is typically divided into two major steps: optimizing $\mathbf{P}$ to minimize the number of cycle candidates in the protograph, and optimizing $\mathbf{L}$ to further reduce that number in the Tanner graph given the optimized $\mathbf{P}$ \cite{esfahanizadeh2018finite,hareedy2020channel}. The latter goal has been achieved in \cite{esfahanizadeh2018finite} and \cite{hareedy2020channel}, using an algorithmic method called lifting optimization (CPO), while the former goal is yet to be achieved for large $m$. We note that the step separation highlighted above notably reduces the overall optimization complexity.

In the remainder of this paper, we first focus on QC-SC codes for the additive white Gaussian noise (AWGN) channel, where the most detrimental objects are the low weight absorbing sets (ASs) \cite{esfahanizadeh2018finite}. The ASs are defined in \Cref{defi: AS}.

\begin{defi}\label{defi: AS} \textbf{(Absorbing Sets)} Consider a subgraph induced by a subset $\mathcal{V}$ of VNs in the Tanner graph of a code. Set all the VNs in $\mathcal{V}$ to values in $\textup{GF}(q)$$\setminus$$\{0\}$ and set all other VNs to $0$. The set $\mathcal{V}$ is said to be an $(a, b)$ \textbf{absorbing set (AS)} over $\textup{GF}(q)$ if the size of $\mathcal{V}$ is $a$, the number of unsatisfied neighboring CNs of $\mathcal{V}$ is $b$, and each VN in $\mathcal{V}$ is connected to strictly more satisfied than unsatisfied neighboring CNs, for some set of VN values.

An $(a, b)$ \textbf{elementary AS} $\mathcal{V}$ over $\textup{GF}(q)$ is an $(a, b)$ AS  with the additional property that all the satisfied (resp., unsatisfied (if any)) neighboring CNs of $\mathcal{V}$ have degree $2$ (resp., degree $1$); otherwise the AS is referred to as an $(a, b)$ \textbf{non-elementary AS}.

Consider a subgraph induced by a subset $\mathcal{V}$ of VNs in the Tanner graph of a binary code. The set $\mathcal{V}$ is said to be an $(a, b)$ \textbf{binary AS} if the size of $\mathcal{V}$ is $a$, the number of odd-degree neighboring CNs of $\mathcal{V}$ is $b$, and each VN in $\mathcal{V}$ is connected to strictly more even-degree than odd-degree neighboring CNs.

Observe that the unlabeled configuration (all edge weights set to $1$) underlying an $(a, b)$ non-binary elementary AS is itself an $(a, b)$ binary elementary AS.

\end{defi}

Consequently, a simplified optimization focuses on cycle candidates of lengths $4$, $6$, and $8$ \cite{esfahanizadeh2018finite,hareedy2020channel}. Existing literature shows that the optimal $\mathbf{P}$ for an SC code with $m\leq 2$ typically has a balanced (uniform) edge distribution among component matrices \cite{esfahanizadeh2018finite}. However, in the remaining sections, we show that the edge distribution for optimal SC codes with large $m$ is not uniform, and we propose the GRADE-AO framework that explores a locally optimal solution. With the success in cycle optimization, we step forward to a finer-grained optimization over more advanced objects, which can be applied in higher degree codes and irregular codes. While GRADE can be generalized for arbitrary objects, we present constructions obtained though GRADE-AO focusing on concatenated cycles. Simulation results show that our proposed codes have excellent performance on practical channel models derived from flash memories and magnetic recording (MR), in both waterfall and error floor region.

\section{A Probabilistic Optimization Framework}
\label{section: framework}

In this section, we present a probabilistic framework that searches for a locally optimal edge distribution for the partitioning matrices of SC codes with given memories through the gradient-descent algorithm. 

\begin{defi}\label{defi: coupling pattern} Let $\gamma,\kappa,m,m_t\in\mathbb{N}$ and $\mathbf{a}=\left(a_0,a_1,\dots,a_{m_t}\right)$, where $0=a_0<a_1<\cdots<a_{m_t}=m$. A $(\gamma,\kappa)$ SC code with memory $m$ is said to have \textbf{coupling pattern} $\mathbf{a}$ if and only if $\mathbf{H}^{\textup{P}}_i\neq\mathbf{0}^{\gamma\times \kappa}$, for all $i\in\{a_0,a_1,\dots,a_{m_t}\}$, and $\mathbf{H}^{\textup{P}}_i=\mathbf{0}^{\gamma\times \kappa}$, otherwise. The value $m_t$ is called the \textbf{pseudo-memory} of the SC code.
\end{defi}

\subsection{Probabilistic Metric}
\label{subsec: metric}
In this subsection, we define metrics relating the edge distribution to the expected number of cycle candidates in the protograph in \Cref{theo: cycle 6 probability} and \Cref{theo: cycle 8 probability}. While Schmalen \textit{et al.} have shown in \cite{Schmalen} that nonuniform coupling (nonuniform edge distribution in our paper) yields an improved threshold, our work differs in two areas: 1) Explicit optimal coupling graphs were exhaustively searched and were restricted to small memories in \cite{Schmalen}, whereas our method produces near-optimal SC protographs for arbitrary memories. 2) Work \cite{Schmalen} focused on the asymptotic analysis for the threshold region, while our framework is dedicated to the finite-length construction and has additional demonstrable gains in the error floor region.

\begin{defi}\label{defi: coupling polynommial}
Let $m, m_t\in\mathbb{N}$ and $\mathbf{a}=\left(a_0,a_1,\dots,a_{m_t}\right)$, where $0=a_0<a_1<\cdots<a_{m_t}=m$. Let $\mathbf{p}=\left(p_0,p_1\dots,p_{m_t}\right)$, where $0<p_i\leq 1$, $p_0+p_1+\cdots+p_{m_t}=1$: each $p_i$ specifies the probability of a `$1$' in $\bm{\Pi}$ going to the component matrix $\mathbf{H}_{a_i}^{\textup{P}}$, thus $\mathbf{p}$ is referred to as \textbf{edge distribution} under random partition later on. Then, the following $f(X;\mathbf{a},\mathbf{p})$, which is abbreviated to $f(X)$ when the context is clear, is called the \textbf{coupling polynomial} of an SC code with coupling pattern $\mathbf{a}$, associated with probability distribution~$\mathbf{p}$:
\begin{equation}\label{eqn: coupling polynomial}
f(X;\mathbf{a},\mathbf{p})\triangleq \sum\nolimits_{0\leq i\leq m_t} p_i X^{a_i}.
\end{equation}
\end{defi}

\begin{theo}\label{theo: cycle 6 probability} Let $\left[\cdot\right]_i$ denote the coefficient of $X^{i}$ of a polynomial. Denote by $P_{6}(\mathbf{a},\mathbf{p})$ the probability of a cycle-$6$ candidate in the base matrix becoming a cycle-$6$ candidate in the protograph under random partitioning with edge distribution $\mathbf{p}$. Then,
\begin{equation} 
P_{6}(\mathbf{a},\mathbf{p})=\left[f^3(X)f^3(X^{-1})\right]_0.
\end{equation}
\end{theo}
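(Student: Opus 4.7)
The plan is to recognize this as a generating-function computation for the probability that two sums of i.i.d.\ random variables coincide. Fix a cycle-$6$ candidate $C = (j_1, i_1, j_2, i_2, j_3, i_3)$ in $\bm{\Pi}$. By \Cref{lemma: cycle condition}, $C$ becomes a cycle-$6$ candidate in the protograph if and only if
\begin{equation*}
\sum_{k=1}^{3} \mathbf{P}(i_k, j_k) = \sum_{k=1}^{3} \mathbf{P}(i_k, j_{k+1}).
\end{equation*}
Because the six nodes of a $6$-cycle correspond to six distinct edges of the underlying bipartite graph, the six relevant entries of $\mathbf{P}$ are pairwise distinct positions. Under the random partitioning defined by $\mathbf{p}$, each of these entries is independently drawn from the support $\{a_0,\dots,a_{m_t}\}$, taking value $a_\ell$ with probability $p_\ell$. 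Let $X_1, X_2, X_3$ and $Y_1, Y_2, Y_3$ denote the two triples of entries so obtained; they are six i.i.d.\ copies of a common random variable $Z$ with $\Pr[Z = a_\ell] = p_\ell$.

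Next, I would encode the distribution of $Z$ by its probability generating function $f(X) = \sum_{\ell} p_\ell X^{a_\ell}$, so that $[f(X)]_i = \Pr[Z = i]$. Since the $X_k$ are independent, the generating function of $S := X_1 + X_2 + X_3$ is $f(X)^3$, and likewise for $T := Y_1 + Y_2 + Y_3$. Conditioning on the common value of the two sums gives
\begin{equation*}
P_6(\mathbf{a},\mathbf{p}) = \Pr[S = T] = \sum_{s} \Pr[S = s]\,\Pr[T = s] = \sum_{s} \bigl[f^3(X)\bigr]_s^{\,2}.
\end{equation*}

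Finally, I would identify this sum as the constant coefficient of $f^3(X) f^3(X^{-1})$. Multiplying the two Laurent polynomials and extracting the $X^0$ term yields
\begin{equation*}
\bigl[f^3(X) f^3(X^{-1})\bigr]_0 = \sum_{s} \bigl[f^3(X)\bigr]_s \bigl[f^3(X^{-1})\bigr]_{-s} = \sum_{s} \bigl[f^3(X)\bigr]_s^{\,2},
\end{equation*}
where the second equality uses $[f^3(X^{-1})]_{-s} = [f^3(X)]_s$. Combining the two displays gives the claimed formula. I do not expect a serious obstacle; the only point requiring mild care is justifying independence of the six entries of $\mathbf{P}$ involved in the cycle (which follows from the fact that the six edges of a $6$-cycle occupy six distinct positions in the base matrix) and ensuring that the correspondence between formal Laurent coefficients and probabilities of sums is applied correctly.
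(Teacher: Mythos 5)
Your proposal is correct and follows essentially the same route as the paper's proof: both exploit the independence of the six distinct entries of $\mathbf{P}$ on the cycle and read off the matching condition as the constant coefficient of $f^3(X)f^3(X^{-1})$. Your intermediate step of naming the sums $S$ and $T$ and writing $\Pr[S=T]=\sum_s [f^3(X)]_s^2$ is just a mild repackaging of the paper's direct expansion over assignments, and your explicit remark that the six cycle edges occupy distinct positions (hence independent entries) is a point the paper uses implicitly.
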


\begin{proof} According to \Cref{lemma: cycle condition}, suppose the cycle-$6$ candidate in the base matrix is represented by $C(j_1,i_1,j_2,i_2,j_3,i_3)$. Then,
\begin{equation*}\label{eqn: equation}
\begin{split}
&P_{6}(\mathbf{a},\mathbf{p})=\mathbb{P}\left[\sum\nolimits_{k=1}^{3}\mathbf{P}(i_{k},j_{k})=\sum\nolimits_{k=1}^{3}\mathbf{P}(i_{k},j_{k+1})\right]\\
=&\sum\limits_{\sum\nolimits_{k=1}^{3}x_k=\sum\nolimits_{k=1}^{3}y_k} \prod_{k=1}^3\mathbb{P}\left[ \mathbf{P}(i_{k},j_{k})=x_k,\mathbf{P}(i_{k},j_{k+1})=y_k \right]\\
=&\sum\limits_{\sum\nolimits_{k=1}^{3}x_k=\sum\nolimits_{k=1}^{3}y_k} p_{x_1}p_{x_2}p_{x_3}p_{y_1}p_{y_2}p_{y_3}\\
=&\left[\sum\limits_{x_k,y_k\in \Tx{vals}(\mathbf{a})} p_{x_1}p_{x_2}p_{x_3}p_{y_1}p_{y_2}p_{y_3} X^{x_1+x_2+x_3-y_1-y_2-y_3}\right]_0\\
=&\left[f^3(X)f^3(X^{-1})\right]_0,\\
\end{split}
\end{equation*}
where $\Tx{vals}(\mathbf{a})$ is the set $\{a_0, a_1, \dots, a_{m_t}\}$. Thus, the theorem is proved.
\end{proof}

\begin{exam}\label{exam: SC probability} Consider SC codes with full memories and uniform partition, i.e., $\mathbf{a}=(0,1,\dots,m)$ and $\mathbf{p}=\frac{1}{m+1} \mathbf{1}_{m+1}$. When $m=2$, $P_6(\mathbf{a},\mathbf{p})=0.1934$; when $m=4$, $P_6(\mathbf{a},\mathbf{p})=0.1121$.
\end{exam}

\begin{exam}\label{exam: SC_m2_opt_edge_dist} First, consider SC codes with $m=m_t=2$. Let $\mathbf{a}_1=(0,1,2)$ and $\mathbf{p}_1=(2/5,1/5,2/5)$. According to \Cref{theo: cycle 6 probability}, $f(X)=(2+X+2X^2)/5$, $f^3(X)f^3(X^{-1})=0.0041(X^6+X^{-6})+0.0123(X^5+X^{-5})+0.0399(X^4+X^{-4})+0.0717(X^3+X^{-3})+0.1267(X^2+X^{-2})+0.1544(X+X^{-1})+0.1818$. Therefore, $P_6(\mathbf{a}_1,\mathbf{p}_1)=0.1818$. Second, consider SC codes with $m=m_t=4$. Let $\mathbf{a}_2=(0,1,2,3,4)$ and $\mathbf{p}_2=(0.31,0.13,0.12,0.13,0.31)$. According to \Cref{theo: cycle 6 probability}, $P_6(\mathbf{a}_2,\mathbf{p}_2)=0.0986$.
\end{exam}

\begin{figure}
\centering
\includegraphics[width=0.6\textwidth]{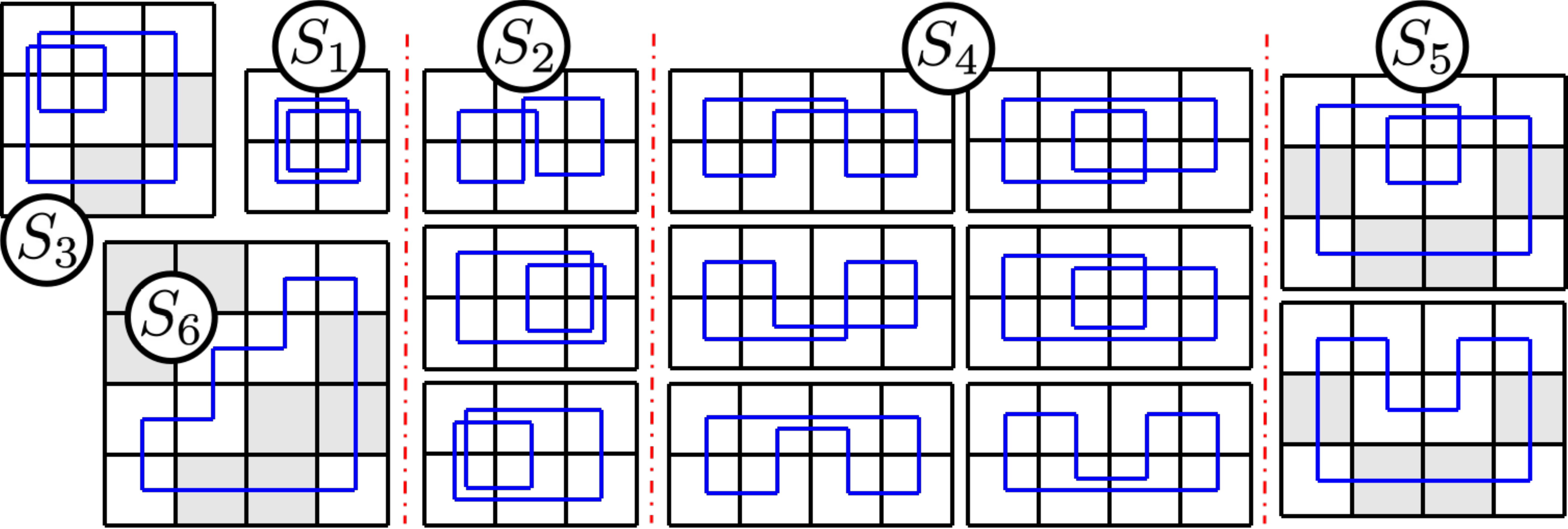}
\caption{Structures and cycle candidates for cycle-$8$.}
\label{fig: cycle8}
\end{figure}

After we have derived the metric for cycle-$6$ candidates in the protograph, we now turn to the case of cycle-$8$ candidates. As shown in Fig.~\ref{fig: cycle8}, cycle candidates in the base matrix that result in cycle-$8$ candidates in the protograph can be categorized into $6$ different structures, labeled $S_1,\dots,S_6$. Different cases are differentiated by the number of rows and columns (without order) the structures span in the partitioning matrix \cite{hareedy2020channel}. Specifically, $S_1,\dots,S_6$ denote the structures that span submatrices of size $2\times 2$, $2\times 3$ or $3\times 2$, $3\times 3$, $2\times 4$ or $4\times 2$, $3\times 4$ or $4\times 3$, and $4\times 4$, respectively. Any structure that belongs to $S_2,S_4, S_5$ has multiple cycle-$8$ candidates, and these distinct candidates are marked by blue in Fig.~\ref{fig: cycle8}. 

\begin{lemma}\label{lemma: cycle 8 pattern probability} Let $P_{8;i}(\mathbf{a},\mathbf{p})$, $1\leq i\leq 6$, denote the probability of a cycle-$8$ candidate of structure $S_i$ in the base matrix becoming a cycle-$8$ candidate in the protograph, under random partition with edge distribution $\mathbf{p}$. Then,
\begin{equation}\label{eqn: lemma cycle 8 pattern probability}
\begin{split}
\hspace{-0.5em}P_{8;1}(\mathbf{a},\mathbf{p})&=\left[f^2(X)f^2(X^{-1})\right]_0,\\
\hspace{-0.5em}P_{8;2}(\mathbf{a},\mathbf{p})&=\left[f(X^2)f(X^{-2})f^2(X)f^2(X^{-1})\right]_0,\\
\hspace{-0.5em}P_{8;3}(\mathbf{a},\mathbf{p})&=\left[f(X^2)f^2(X)f^4(X^{-1})\right]_0,\text{ and}\\
\hspace{-0.5em}P_{8;4}(\mathbf{a},\mathbf{p})&=P_{8;5}(\mathbf{a},\mathbf{p})=P_{8;6}(\mathbf{a},\mathbf{p})=\left[f^4(X)f^4(X^{-1})\right]_0. \nonumber
\end{split}
\end{equation}
\end{lemma}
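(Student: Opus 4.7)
The plan is to mirror the strategy used in the proof of \Cref{theo: cycle 6 probability}: for each of the six structures $S_1, \dots, S_6$, I first identify which entries of $\mathbf{P}$ traversed by the cycle candidate must coincide because of the limited row/column span, and then evaluate the probability that condition~(\ref{eqn: partition cycle}) holds by expressing it as the coefficient of $X^0$ in an appropriate product of factors $f(X^c)$.

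Fix a canonical labeling of the walk as $(j_1, i_1, j_2, i_2, j_3, i_3, j_4, i_4)$. Since the walk alternates rows and columns and a cycle candidate cannot backtrack on consecutive steps, two row (resp.\ column) indices can coincide only when they are separated by two positions; the only admissible identifications therefore come from the list $\{i_1=i_3,\, i_2=i_4,\, j_1=j_3,\, j_2=j_4\}$. Each structure forces a prescribed subset of these: $S_1$ forces all four; $S_4$ forces both $i$-identifications (or, symmetrically, both $j$-identifications); $S_2$ forces both identifications of one type together with exactly one identification of the other type; $S_3$ forces exactly one identification of each type; $S_5$ forces exactly one identification; and $S_6$ forces none. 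For each admissible pattern I would list the eight entries of $\mathbf{P}$ appearing on the LHS and RHS of~(\ref{eqn: partition cycle}) and collect repetitions.

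Because the entries of $\mathbf{P}$ at distinct positions are iid with distribution $\mathbf{p}$, each distinct position contributing a net signed multiplicity $c$ (positive for appearances on the LHS, negative for the RHS) contributes a factor $f(X^c)$ to the generating function of $\mathrm{LHS}-\mathrm{RHS}$; the sought probability is then the coefficient of $X^0$ in the full product. A case-by-case accounting yields: for $S_6$ all eight entries are distinct, giving $[f^4(X) f^4(X^{-1})]_0$; for $S_4$ and $S_5$ a shared row or column alone cannot identify any two entries of $\mathbf{P}$, so the eight entries are still distinct and the same formula results; for $S_2$ the pattern produces six distinct entries with exactly one doubled on each side, giving $[f(X^2) f(X^{-2}) f^2(X) f^2(X^{-1})]_0$; and for $S_1$ every entry appears twice on the same side, so after using $2a=2b\Leftrightarrow a=b$ in $\mathbb{Z}$ to cancel the common factor of $2$, only four independent entries remain with two on each side, yielding $[f^2(X) f^2(X^{-1})]_0$.

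The main obstacle is $S_3$: the admissible identifications split into four sub-configurations, each of which places a doubled entry on \emph{one} side and four distinct singletons on the other, so the naive computation gives $[f^4(X) f(X^{-2}) f^2(X^{-1})]_0$ or its $X \leftrightarrow X^{-1}$ image depending on which side carries the doubling. I would reconcile these forms with the stated $[f(X^2) f^2(X) f^4(X^{-1})]_0$ via the symmetry $[p(X) q(X^{-1})]_0 = [p(X^{-1}) q(X)]_0$, which is immediate from substituting $X \mapsto X^{-1}$ under coefficient extraction. A final bookkeeping step is to verify that within each structure all admissible identifications produce the same polynomial, so that the probability depends only on the structure label $S_i$ and no additional case-splitting is needed in the statement.
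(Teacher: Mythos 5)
Your proposal is correct and follows essentially the same route as the paper: reduce each structure $S_i$ to its signed cycle condition on the (possibly repeated) entries of $\mathbf{P}$ and read off the probability as the constant coefficient of the corresponding product of factors $f(X^c)$, exactly as in the proof of \Cref{theo: cycle 6 probability}. Your treatment is somewhat more explicit than the paper's (you derive the entry multiplicities from the admissible index identifications and verify uniformity within each structure, including the $X\leftrightarrow X^{-1}$ reconciliation for $S_3$, whereas the paper simply states the reduced conditions for $S_1$, $S_2$, $S_3$), but the substance is identical.
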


\begin{proof} For structures where the nodes of the cycle-$8$ candidates are pairwise different, namely, $S_4,S_5,S_6$, the result can be derived by following the logic in the proof of \Cref{theo: cycle 6 probability}. 

For $S_1$, suppose the indices of the rows and columns are $i_1,i_2$, and $j_1,j_2$, respectively. Then, the cycle condition in \Cref{lemma: cycle condition} is $\mathbf{P}(i_1,j_1)+\mathbf{P}(i_2,j_2)=\mathbf{P}(i_1,j_2)+\mathbf{P}(i_2,j_1)$. 

For $S_2$, suppose the indices of the rows and columns are $i_1,i_2$, and $j_1,j_2,j_3$, respectively. Then, the cycle condition in \Cref{lemma: cycle condition} is $2\mathbf{P}(i_1,j_1)-2\mathbf{P}(i_2,j_1)+\mathbf{P}(i_2,j_2)+\mathbf{P}(i_2,j_3)-\mathbf{P}(i_1,j_2)-\mathbf{P}(i_1,j_3)=0$. 

For $S_3$, suppose the indices of the rows and columns are $i_1,i_2,i_3$, and $j_1,j_2,j_3$, respectively. Then, the cycle condition in \Cref{lemma: cycle condition} is $2\mathbf{P}(i_1,j_1)+\mathbf{P}(i_2,j_2)+\mathbf{P}(i_3,j_3)-\mathbf{P}(i_1,j_2)-\mathbf{P}(i_2,j_1)-\mathbf{P}(i_1,j_3)-\mathbf{P}(i_3,j_1)=0$. 

Following the logic in the proof of \Cref{theo: cycle 6 probability}, the case for $S_1,S_2,S_3$ can be proved.
\end{proof}

\begin{theo}\label{theo: cycle 8 probability} Denote $N_8(\mathbf{a},\mathbf{p})$ as the expectation of the number of cycle-$8$ candidates in the protograph. Then,
\begin{equation}\label{eqn: lemma cycle 8 probability}
\begin{split}
N_{8}(\mathbf{a},\mathbf{p})&=w_1\left[f^2(X)f^2(X^{-1})\right]_0+w_2\left[f(X^2)f(X^{-2})f^2(X)f^2(X^{-1})\right]_0\\
&+w_3\left[f(X^2)f^2(X)f^4(X^{-1})\right]_0+w_4\left[f^4(X)f^4(X^{-1})\right]_0,\\
\end{split}
\end{equation}
where $w_1=\binom{\gamma}{2}\binom{\kappa}{2}$, $w_2=3\binom{\gamma}{2}\binom{\kappa}{3}+3\binom{\gamma}{3}\binom{\kappa}{2}$, $w_3=18\binom{\gamma}{3}\binom{\kappa}{3}$, $w_4=6\binom{\gamma}{2}\binom{\kappa}{4}+6\binom{\gamma}{4}\binom{\kappa}{2}+36\binom{\gamma}{3}\binom{\kappa}{4}+36\binom{\gamma}{4}\binom{\kappa}{3}+24\binom{\gamma}{4}\binom{\kappa}{4}$.
\end{theo}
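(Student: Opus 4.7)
The plan is to apply linearity of expectation over all cycle-$8$ candidates in the base matrix $\bm{\Pi}$, classify them by their structure type $S_1,\ldots,S_6$, and combine the probability from \Cref{lemma: cycle 8 pattern probability} with a combinatorial count of how many candidates of each type live inside the $\gamma\times\kappa$ all-one base matrix. Since $\bm{\Pi}$ is all-one, every choice of rows and columns induces a complete submatrix, so counting cycle candidates reduces to pure combinatorics on row/column index sets.

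First I would write
\begin{equation*}
N_8(\mathbf{a},\mathbf{p}) \;=\; \sum_{i=1}^{6} M_i \cdot P_{8;i}(\mathbf{a},\mathbf{p}),
\end{equation*}
where $M_i$ is the number of cycle-$8$ candidates of structure $S_i$ in $\bm{\Pi}$. Then, for each $i$, I would decompose $M_i = N_i^{\mathrm{sub}} \cdot C_i$, where $N_i^{\mathrm{sub}}$ is the number of submatrices of the appropriate shape (distinguishing the ``tall'' and ``wide'' orientations of $S_2$, $S_4$, $S_5$), and $C_i$ is the number of distinct cycle-$8$ candidates that can be traced within one such submatrix. The values $N_i^{\mathrm{sub}}$ are immediate: for an $r\times c$ shape we get $\binom{\gamma}{r}\binom{\kappa}{c}$, summing tall and wide variants when they differ. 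The nontrivial content lies in the multiplicities $C_i$.

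The main obstacle, and the step that needs genuine care, is computing the $C_i$ correctly. Using the enumeration illustrated in Fig.~\ref{fig: cycle8}, I would verify $C_1=1$ (the single ``double-traversal'' octagon inside a $2\times 2$), $C_2=3$ for both orientations of $S_2$, $C_3=18$ for $S_3$ (corresponding to the choice of the doubly-visited corner together with the cyclic orderings of the remaining nodes in a $3\times 3$), and the distinct-node counts $C_4=6$, $C_5=36$, $C_6=24$ obtained by counting Hamiltonian $8$-cycles in the appropriate complete bipartite graphs while quotienting by the dihedral symmetries of an unoriented cycle (dividing by $2g=8$ for rotations/reflections and being careful about the distinguishability of variable-side vs.\ check-side nodes). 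Matching these multiplicities against the formulas for $w_1,w_2,w_3,w_4$ then yields exactly the stated weights, after grouping the three structures $S_4,S_5,S_6$ that share the same probability expression.

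Finally, I would substitute the probabilities from \Cref{lemma: cycle 8 pattern probability}, grouping the coefficients of the four distinct polynomial-coefficient expressions, which produces the stated decomposition with $w_1 = \binom{\gamma}{2}\binom{\kappa}{2}$, $w_2 = 3\binom{\gamma}{2}\binom{\kappa}{3}+3\binom{\gamma}{3}\binom{\kappa}{2}$, $w_3 = 18\binom{\gamma}{3}\binom{\kappa}{3}$, and $w_4$ obtained by summing the contributions of $S_4,S_5,S_6$. The probability factor used for $S_1$, $S_2$, $S_3$ follows directly from \Cref{lemma: cycle 8 pattern probability}, so the remaining work is purely bookkeeping once the multiplicities $C_i$ are pinned down.
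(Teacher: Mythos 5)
Your overall route is exactly the paper's: linearity of expectation over the six structure classes, the probabilities from \Cref{lemma: cycle 8 pattern probability}, and the factorization of each count into (number of $r\times c$ index sets) $\times$ (candidates per submatrix), with $\binom{\gamma}{r}\binom{\kappa}{c}$ immediate because $\bm{\Pi}$ is all-one; the paper likewise verifies only the $S_5$ multiplicity ($3\cdot\binom{4}{2}\cdot 2=36$) and invokes ``similar logic'' for the rest. The gap is in the step you yourself identify as the main obstacle, the per-submatrix multiplicities $C_i$. Your stated rule --- count directed, rooted closed traversals and quotient by the $2g=8$ dihedral symmetries --- does reproduce $C_2=3$, $C_3=18$, $C_4=6$ (from $2\cdot 4!/8$), and $C_5=36$ (from $3\cdot 2\cdot 2\cdot 4!/8$), but applied to $S_6$ it yields $4!\cdot 4!/8=72$, which is also the standard count $n!(n-1)!/2$ of Hamiltonian cycles of $K_{4,4}$, not the $24$ you assert and that appears in $w_4$. (Separately, for $S_1$ the naive quotient gives $4/8$, and you need Burnside because the doubly-traversed $4$-cycle has a nontrivial stabilizer; that case happens to come out to $1$, but it shows the blanket division by $8$ is not a uniform argument.) So as written, the derivation of $C_6$ does not produce the coefficient $24$ in the theorem statement: you must either exhibit the additional identification of traversals that reduces $72$ to $24$, or enumerate the $S_6$ candidates directly against the convention of Fig.~\ref{fig: cycle8}, before the final bookkeeping can be considered closed.
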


\begin{proof} Provided the results in \Cref{lemma: cycle 8 pattern probability}, we just need to prove that the numbers of cycle candidates of structures $S_1,S_2,\dots,S_6$ in a $\gamma\times \kappa$ base matrix are $\binom{\gamma}{2}\binom{\kappa}{2}$, $3\binom{\gamma}{2}\binom{\kappa}{3}+3\binom{\gamma}{3}\binom{\kappa}{2}$, $18\binom{\gamma}{3}\binom{\kappa}{3}$, $6\binom{\gamma}{2}\binom{\kappa}{4}+6\binom{\gamma}{4}\binom{\kappa}{2}$, $36\binom{\gamma}{3}\binom{\kappa}{4}+36\binom{\gamma}{4}\binom{\kappa}{3}$, and $24\binom{\gamma}{4}\binom{\kappa}{4}$, respectively. 

Take $i=5$ as an example. The number of cycle candidates of structure $S_5$ in any $3\times 4$ or $4\times 3$ matrix is $3\cdot \binom{4}{2} \cdot 2=36$. The total number of $3\times 4$ or $4\times 3$ matrices in a $\gamma\times \kappa$ base matrix is $\binom{\gamma}{3}\binom{\kappa}{4}+\binom{\gamma}{4}\binom{\kappa}{3}$. Therefore, the total number of cycle candidates of structure $S_5$ is $36\binom{\gamma}{3}\binom{\kappa}{4}+36\binom{\gamma}{4}\binom{\kappa}{3}$. By a similar logic, we can prove the result for the remaining structures.
\end{proof}

\begin{rem} Note that each cycle candidate satisfying the cycle condition in the partitioning matrix can result in multiple cycle candidates in the protograph; the multiplicity is determined by its width, i.e., the number of replicas each resultant cycle candidate spans in the protograph. We ignore the number of replicas a cycle candidate spans in $\mathbf{H}_{\textup{SC}}^{\textup{P}}$. We address this number in the CPO stage.
\end{rem}


\subsection{Gradient-Descent Distributor}
\label{subsec: gradient descent}
By contrasting Examples \ref{exam: SC probability} and \ref{exam: SC_m2_opt_edge_dist} it is clear that for a given coupling pattern, an optimal edge distribution is not necessarily reached by a uniform partition. In this subsection, we develop an algorithm that obtains a locally optimal distribution by gradient descent.

\begin{lemma}\label{lemma: cycle 6 distribution} Given $m_t\in\mathbb{N}$ and $\mathbf{a}=(a_0,a_1,\dots,a_{m_t})$, a necessary condition for $P_6(\mathbf{a},\mathbf{p})$ to reach its minimum value is that the following equation holds for some $c_0\in\mathbb{R}$:
\begin{equation}\label{eqn: cycle 6 condition}
\left[f^3(X)f^2(X^{-1})\right]_{a_i}=c_0,\ \forall i, 0\leq i\leq m_t.
\end{equation}
\end{lemma}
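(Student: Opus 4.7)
The plan is to recognize that this lemma is a first-order necessary condition for a constrained minimization problem, and so to derive it via Lagrange multipliers applied to the explicit formula for $P_6(\mathbf{a},\mathbf{p})$ provided by the Cycle-$6$ Probability Theorem. The constraint set is the simplex $\sum_{i=0}^{m_t} p_i = 1$ with $p_i > 0$, which is an open relative interior, so at any local minimum the gradient of the objective must be a scalar multiple of the gradient of the linear constraint. I will therefore form the Lagrangian
\begin{equation*}
L(\mathbf{p},\lambda) = \left[f^3(X)f^3(X^{-1})\right]_0 - \lambda\left(\sum\nolimits_{i=0}^{m_t} p_i - 1\right),
\end{equation*}
and impose $\partial L/\partial p_i = 0$ for each $i$.

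The first computational step is to differentiate $P_6$ in $p_i$. Because $f(X;\mathbf{a},\mathbf{p})$ is linear in $\mathbf{p}$, I have $\partial f(X)/\partial p_i = X^{a_i}$ and $\partial f(X^{-1})/\partial p_i = X^{-a_i}$. Applying the product rule and extracting the constant coefficient gives
\begin{equation*}
\frac{\partial P_6}{\partial p_i} = 3\left[X^{a_i} f^2(X)f^3(X^{-1})\right]_0 + 3\left[X^{-a_i} f^3(X)f^2(X^{-1})\right]_0.
\end{equation*}
The second term is already $3[f^3(X)f^2(X^{-1})]_{a_i}$ (using $[X^{-a_i}g(X)]_0 = [g(X)]_{a_i}$).

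The key step — and the one most deserving of care — is to show that the first term equals the second, so that both collapse to the single quantity appearing in the statement. For this I will use the Laurent-polynomial identity: if $g(X) \triangleq f^2(X)f^3(X^{-1})$, then $g(X^{-1}) = f^2(X^{-1})f^3(X) = f^3(X)f^2(X^{-1})$, and the substitution $X \to X^{-1}$ sends the coefficient of $X^{-a_i}$ in $g(X)$ to the coefficient of $X^{a_i}$ in $g(X^{-1})$. Hence
\begin{equation*}
\left[X^{a_i} f^2(X)f^3(X^{-1})\right]_0 = \left[f^2(X)f^3(X^{-1})\right]_{-a_i} = \left[f^3(X)f^2(X^{-1})\right]_{a_i}.
\end{equation*}

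Putting the pieces together yields $\partial P_6/\partial p_i = 6[f^3(X)f^2(X^{-1})]_{a_i}$, so the stationarity condition $\partial L/\partial p_i = 0$ becomes $[f^3(X)f^2(X^{-1})]_{a_i} = \lambda/6$, which is independent of $i$. Setting $c_0 \triangleq \lambda/6$ then gives \eqref{eqn: cycle 6 condition}, completing the proof. Since the constraint $\sum p_i = 1$ alone is used and the $p_i>0$ inequality constraints are slack at an interior minimum, no additional KKT conditions need to be handled. The principal obstacle is the symmetry reindexing above; once that is in place, the Lagrange multiplier argument is mechanical.
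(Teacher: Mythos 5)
Your proposal is correct and follows essentially the same route as the paper: a Lagrange multiplier on the simplex constraint, differentiation of $\left[f^3(X)f^3(X^{-1})\right]_0$ via the product rule using $\partial f(X)/\partial p_i = X^{a_i}$, and the $X\mapsto X^{-1}$ symmetry to merge the two resulting terms into $6\left[f^3(X)f^2(X^{-1})\right]_{a_i}$. If anything, you make the symmetry reindexing step more explicit than the paper does, which is a welcome clarification rather than a deviation.
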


\begin{proof} Consider the gradient of $L_6(\mathbf{a},\mathbf{p})=P_6(\mathbf{a},\mathbf{p})+c(1-p_{0}-p_{1}-\dots-p_{{m_t}})$. 
\begin{equation}\label{eqn: lemma cycle 6 gradient}
\begin{split}
&\nabla_{\mathbf{p}}L_6(\mathbf{a},\mathbf{p})\\
=&\nabla_{\mathbf{p}}\left(P_6(\mathbf{a},\mathbf{p})+c(1-p_{0}-p_{1}-\dots-p_{{m_t}})\right)\\
=&\nabla_{\mathbf{p}}\left[f^3(X)f^3(X^{-1})\right]_0-c\mathbf{1}_{m_t+1}\\
=&\left[\nabla_{\mathbf{p}}\left(f^3(X)f^3(X^{-1})\right)\right]_0-c\mathbf{1}_{m_t+1}\\
=&3\left[f^2(X)f^2(X^{-1})f(X)\nabla_{\mathbf{p}}f(X^{-1})\right]_0+3\left[f^2(X)f^2(X^{-1})f(X^{-1})\nabla_{\mathbf{p}}f(X)\right]_0-c\mathbf{1}_{m_t+1}\\
=&6\left[f^3(X)f^2(X^{-1})\left(X^{-a_{0}},X^{-a_{1}},\dots,X^{-a_{m_t}}\right)\right]_0-c\mathbf{1}_{m_t+1}.
\end{split}
\end{equation}
When $P_8(\mathbf{a},\mathbf{p})$ reaches its minimum, $\nabla_{\mathbf{p}}\left[L(\mathbf{a},\mathbf{p})\right]=\mathbf{0}_{m_t+1}$, which is equivalent to (\ref{eqn: cycle 6 condition}) by defining $c_0=c/6$.
\end{proof}

\begin{lemma}\label{lemma: cycle 8 distribution} Given $\gamma,\kappa,m_t\in\mathbb{N}$ and $\mathbf{a}=(a_0,a_1,\dots,a_{m_t})$, a necessary condition for $N_8(\mathbf{a},\mathbf{p})$ to reach its minimum value is that the following equation holds for some $c_0\in\mathbb{R}$:
\begin{equation*}\label{eqn: cycle 8 condition}
\begin{split}
&\left[4f^2(X)f(X^{-1})\right])_{a_i}+\bar{w}_2\left[2f(X^{2})f^2(X)f^2(X^{-1})\right]_{2a_i}+\bar{w}_2\left[4f(X^2)f(X^{-2})f^2(X)f(X^{-1})\right]_{a_i}\\
+&\bar{w}_3\left[f^2(X)f^4(X^{-1})\right]_{-2a_i}+\bar{w}_3\left[2f(X^2)f(X)f^4(X^{-1})\right]_{-a_i}+\bar{w}_3\left[4f(X^2)f^2(X)f^3(X^{-1})\right]_{a_i}\\
+&\bar{w}_4\left[8f^4(X)f^3(X^{-1})\right]_{a_i}=c_0,\ \forall i, 0\leq i\leq m_t,
\end{split}
\end{equation*}
where $\bar{w}_2=\gamma+\kappa-4$, $\bar{w}_3=2(\gamma-2)(\kappa-2)$, and $\bar{w}_4=\frac{1}{2}\left[(\gamma-2)(\gamma-3)+(\kappa-2)(\kappa-3)\right]+(\gamma-2)(\kappa-2)(\gamma+\kappa-6)+\frac{1}{6}(\gamma-2)(\gamma-3)(\kappa-2)(\kappa-3)$.
\end{lemma}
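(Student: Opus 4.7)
The plan is to follow exactly the Lagrange-multiplier template established in the proof of \Cref{lemma: cycle 6 distribution}, but applied to the four-term expression for $N_8(\mathbf{a},\mathbf{p})$ in \Cref{theo: cycle 8 probability}. Define the Lagrangian $L_8(\mathbf{a},\mathbf{p}) = N_8(\mathbf{a},\mathbf{p}) + c(1-p_0-p_1-\cdots-p_{m_t})$ to enforce the simplex constraint $\sum_i p_i = 1$, and compute $\nabla_{\mathbf{p}} L_8$ one term at a time, using the elementary observation that
\begin{equation*}
\nabla_{\mathbf{p}} f(X^k) = \left(X^{k a_0}, X^{k a_1}, \dots, X^{k a_{m_t}}\right)
\end{equation*}
for every integer $k$, so that $\partial_{p_i} f(X^k) = X^{k a_i}$. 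Setting $\nabla_{\mathbf{p}} L_8 = \mathbf{0}_{m_t+1}$ yields a necessary condition at every minimizer, which after clearing a common factor will be identified with the displayed equation.

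First I would differentiate each of the four summands of $N_8$ using the product rule. For the $w_1$-term, $\partial_{p_i}[f^2(X)f^2(X^{-1})]_0 = 2[f(X)f^2(X^{-1})]_{-a_i} + 2[f^2(X)f(X^{-1})]_{a_i}$; the substitution $X\leftrightarrow X^{-1}$ (which preserves the constant coefficient and sends $[g(X)]_k$ to $[g(X^{-1})]_{-k}$) collapses this to $4[f^2(X)f(X^{-1})]_{a_i}$. The $w_2$-term gives four summands, two of which collapse under the same symmetry into $2[f(X^2)f^2(X)f^2(X^{-1})]_{2a_i}$ and the other two into $4[f(X^2)f(X^{-2})f^2(X)f(X^{-1})]_{a_i}$. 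The $w_3$-term is asymmetric (no $f(X^{-2})$ factor is present), so its three product-rule pieces remain as the three distinct contributions $[f^2(X)f^4(X^{-1})]_{-2a_i}$, $2[f(X)f(X^2)f^4(X^{-1})]_{-a_i}$, and $4[f(X^2)f^2(X)f^3(X^{-1})]_{a_i}$. Finally, the $w_4$-term mimics the $w_1$ case and yields $8[f^4(X)f^3(X^{-1})]_{a_i}$.

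Assembling these seven contributions, setting the $i$-th component of $\nabla_{\mathbf{p}} L_8$ equal to zero, and dividing by $w_1 = \binom{\gamma}{2}\binom{\kappa}{2}$ produces precisely the displayed equation once we put $c_0 = c/w_1$. The only remaining verification is that $w_k/w_1 = \bar{w}_k$ for $k=2,3,4$. This is pure binomial bookkeeping: expanding $\binom{\kappa}{3}/\binom{\kappa}{2} = (\kappa-2)/3$, $\binom{\kappa}{4}/\binom{\kappa}{2} = (\kappa-2)(\kappa-3)/12$, and the corresponding identities for $\gamma$, one recovers $\bar{w}_2 = \gamma+\kappa-4$, $\bar{w}_3 = 2(\gamma-2)(\kappa-2)$, and the stated four-piece expression for $\bar{w}_4$.

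The main obstacle is strictly clerical rather than conceptual: there are seven Laurent-coefficient terms to track, some indexed at $a_i$, some at $-a_i$, some at $\pm 2a_i$, and one must be careful to apply the $X\leftrightarrow X^{-1}$ symmetry exactly where it is legitimate (i.e., in the $w_1$, $w_2$, and $w_4$ terms, which are invariant under $X\leftrightarrow X^{-1}$) while leaving the $w_3$ term unfolded (since $[f(X^2)f^2(X)f^4(X^{-1})]_0$ is \emph{not} invariant under this substitution). Once that distinction is respected and the division by $w_1$ is carried out, the claimed necessary condition follows immediately from the first-order optimality $\nabla_{\mathbf{p}} L_8 = \mathbf{0}_{m_t+1}$, in complete parallel with \Cref{lemma: cycle 6 distribution}.
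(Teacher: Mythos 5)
Your proposal is correct and follows essentially the same route as the paper: form the Lagrangian $L_8=N_8+c(1-\sum_i p_i)$, differentiate each of the four summands of $N_8$ via the product rule using $\partial_{p_i}f(X^k)=X^{ka_i}$, fold symmetric pairs under $X\leftrightarrow X^{-1}$ (correctly leaving the $w_3$ term unfolded), and divide by $w_1$ with $c_0=c/w_1$; your binomial identities confirming $w_k/w_1=\bar{w}_k$ match the weights in \Cref{theo: cycle 8 probability}. Nothing further is needed.
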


\begin{proof} Consider the gradient of $L_8(\mathbf{a},\mathbf{p})=N_8(\mathbf{a},\mathbf{p})+c(1-p_{0}-p_{1}-\dots-p_{{m_t}})$. 
\begin{equation}\label{eqn: lemma cycle 8 gradient}
\begin{split}
&\nabla_{\mathbf{p}}L_8 (\mathbf{a},\mathbf{p})\\
=&\nabla_{\mathbf{p}}\left(N_8(\mathbf{a},\mathbf{p})+c(1-p_{0}-p_{1}-\dots-p_{{m_t}})\right)\\
=&w_1\left[\nabla_{\mathbf{p}}\left(f^2(X)f^2(X^{-1})\right)\right]_0+w_2\left[\nabla_{\mathbf{p}}\left(f(X^2)f(X^{-2})f^2(X)f^2(X^{-1})\right)\right]_0\\
&+w_3\left[\nabla_{\mathbf{p}}\left(f(X^2)f^2(X)f^4(X^{-1})\right)\right]_0+w_4\left[\nabla_{\mathbf{p}}\left(f^4(X)f^4(X^{-1})\right)\right]_0-c\mathbf{1}_{m_t+1}\\
=&w_1\{\left[4f^2(X)f(X^{-1})(X^{-a_0},X^{-a_1},\dots,X^{-a_{m_t}})\right])_0+\bar{w}_2\left[2f(X^2)f^2(X)f^2(X^{-1})(X^{-2a_0},\dots,X^{-2a_{m_t}})\right]_0\\
&+\bar{w}_2\left[4f(X^2)f(X^{-2})f^2(X)f(X^{-1})(X^{-a_0},\dots,X^{-a_{m_t}})\right]_0+\bar{w}_3\left[f^2(X)f^4(X^{-1})(X^{2a_0},X^{2a_1},\dots,X^{2a_{m_t}})\right]_0\\
&+\bar{w}_3\left[2f(X^2)f(X)f^4(X^{-1})(X^{a_0},X^{a_1},\dots,X^{a_{m_t}})\right]_0+\bar{w}_3\left[4f(X^2)f^2(X)f^3(X^{-1})(X^{-a_0},X^{-a_1},\dots,X^{-a_{m_t}})\right]_0\\
&+\bar{w}_4\left[8f^4(X)f^3(X^{-1})(X^{-a_0},X^{-a_1},\dots,X^{-a_{m_t}})\right]_0\}-c\mathbf{1}_{m_t+1}.
\end{split}
\end{equation}
When $P_8(\mathbf{a},\mathbf{p})$ reaches its minimum, $\nabla_{\mathbf{p}}\left[L(\mathbf{a},\mathbf{p})\right]=\mathbf{0}_{m_t+1}$, which is equivalent to (\ref{eqn: cycle 8 condition}) by defining $c_0=c/w_1$.
\end{proof}

Based on \Cref{lemma: cycle 6 distribution} and \Cref{lemma: cycle 8 distribution}, we adopt the gradient-descent algorithm to obtain a locally optimal edge distribution for SC codes with coupling pattern $\mathbf{a}$, starting from the uniform distribution inside $\mathbf{P}$ as presented in \Cref{algo: GRADE}. Note that $\Tx{conv}(\cdot)$ and $\Tx{flip}(\cdot)$ refer to convolution and reverse of vectors, respectively.

\begin{algorithm}
\caption{Gradient-Descent Distributor (GRADE) for Cycle Optimization} \label{algo: GRADE}
\begin{algorithmic}[1]
\Require 
\Statex $\gamma,\kappa,m_t,m,\mathbf{a}$: parameters of the SC code;
\Statex $w$: weight of each cycle-$6$ candidate;
\Statex $\epsilon,\alpha$: accuracy and step size of gradient descent;
\Ensure
\Statex $\mathbf{p}$: a locally optimal edge distribution over $\Tx{vals}(\mathbf{a})$;
\State $\bar{w}_1\gets \frac{2w}{3}(\gamma-2)(\kappa-2)$, obtain $\{\bar{w}_i\}_{i=2}^4$ in \Cref{lemma: cycle 8 distribution}; 
\State $v_{prev}=1$; $v_{cur}=1$;
\State $\mathbf{p},\mathbf{g}\gets \mathbf{0}_{m_t+1}$, $\mathbf{f},\bar{\mathbf{f}}\gets \mathbf{0}_{m+1}$, $\mathbf{f}_2,\bar{\mathbf{f}}_2\gets \mathbf{0}_{2m+1}$;
\State $\mathbf{p}\gets \frac{1}{m_t+1} \mathbf{1}_{m_t+1}$;
\State $\mathbf{f}\left[a_0,\dots,a_{m_t}\right]\gets\mathbf{p}$, $\bar{\mathbf{f}}\gets \Tx{flip}(\mathbf{f})$;
\State $\mathbf{f}_2\left[1,3,\dots,2m+1\right]\gets\mathbf{f}$, $\bar{\mathbf{f}}_2\gets \Tx{flip}(\mathbf{f}_2)$;
\State $\mathbf{q}_1 \gets \bar{w}_1\Tx{conv}(\mathbf{f},\mathbf{f},\mathbf{f},\bar{\mathbf{f}},\bar{\mathbf{f}},\bar{\mathbf{f}})$, $\mathbf{q}_2\gets \Tx{conv}(\mathbf{f},\mathbf{f},\bar{\mathbf{f}},\bar{\mathbf{f}})$;
\State $\mathbf{q}_3\gets \bar{w}_2\Tx{conv}(\mathbf{f}_2,\bar{\mathbf{f}}_2,\mathbf{f},\mathbf{f},\bar{\mathbf{f}},\bar{\mathbf{f}})+\bar{w}_3\Tx{conv}(\mathbf{f}_2,\mathbf{f},\mathbf{f},\bar{\mathbf{f}},\bar{\mathbf{f}},\bar{\mathbf{f}},\bar{\mathbf{f}})+\bar{w}_4\Tx{conv}(\mathbf{f},\mathbf{f},\mathbf{f},\mathbf{f},\bar{\mathbf{f}},\bar{\mathbf{f}},\bar{\mathbf{f}},\bar{\mathbf{f}})$;
\State $v_{prev}=v_{cur}$, $v_{cur}=\mathbf{q}_1\left[3m\right]+\mathbf{q}_2\left[2m\right]+\mathbf{q}_3\left[4m\right]$;
\State $\mathbf{g}_1\gets 6\bar{w}_1\Tx{conv}(\mathbf{f},\mathbf{f},\mathbf{f},\bar{\mathbf{f}},\bar{\mathbf{f}})$, $\mathbf{g}_2\gets 4\Tx{conv}(\mathbf{f},\mathbf{f},\bar{\mathbf{f}})$;
\State $\mathbf{g}_3\gets 4\bar{w}_2\Tx{conv}(\mathbf{f}_2,\bar{\mathbf{f}}_2,\mathbf{f},\mathbf{f},\bar{\mathbf{f}})+2\bar{w}_3\Tx{conv}(\bar{\mathbf{f}}_2,\mathbf{f},\mathbf{f},\mathbf{f},\mathbf{f},\bar{\mathbf{f}})+4\bar{w}_3\Tx{conv}(\mathbf{f}_2,\mathbf{f},\mathbf{f},\bar{\mathbf{f}},\bar{\mathbf{f}},\bar{\mathbf{f}})+8\bar{w}_4\Tx{conv}(\mathbf{f},\mathbf{f},\mathbf{f},\mathbf{f},\bar{\mathbf{f}},\bar{\mathbf{f}},\bar{\mathbf{f}})$;
\State $\mathbf{g}_4\gets 2\bar{w}_2\Tx{conv}(\mathbf{f}_2,\mathbf{f},\mathbf{f},\bar{\mathbf{f}},\bar{\mathbf{f}})+\bar{w}_3\Tx{conv}(\mathbf{f},\mathbf{f},\mathbf{f},\mathbf{f},\bar{\mathbf{f}},\bar{\mathbf{f}})$;
\State $\mathbf{g}\gets \mathbf{g}_1\left[2m+\mathbf{a}\right]+\mathbf{g}_2\left[m+\mathbf{a}\right]+\mathbf{g}_3\left[3m+\mathbf{a}\right]+\mathbf{g}_4\left[2m+2\mathbf{a}\right]$, $\mathbf{g}\gets \mathbf{g}-\Tx{mean}(\mathbf{g})$;
\If{$|v_{prev}-v_{cur}|>\epsilon$} 
\State $\mathbf{p}\gets \mathbf{p}-\alpha\frac{\mathbf{g}}{||\mathbf{g}||}$;
\State \textbf{goto} step 5;
\EndIf
\State \textbf{return} $\mathbf{p}$;
\end{algorithmic}
\end{algorithm}

\section{Generalization of GRADE}
\label{section: generalization of GRADE}

We have explained the basic idea of GRADE in optimizing the edge distribution of SC code ensembles with respect to the expected number of cycles. Cycles have been studied extensively in related literature (see e.g., \cite{girth1,girth2,girth3}) due to their simplicity and presence in problematic objects. However, cycles alone do not always account for typical decoding failures; for example, isolated cycles are not as harmful as concentrated cycles in codes with VN degree $4$ since single cycles on their own do not lead to decoding failures (as captured by e.g., ASs \cite{hareedy2020channel}), rather concatenated cycles do. An excessive focus on the removal of isolated cycles can lead to remarkably less degrees of freedom for the removal of dominant problematic objects. In this section, we therefore extend the theory of GRADE to arbitrary subgraphs.

\subsection{Probabilistic Metric}
\label{subsec: objects_probablistic metric}

In this subsection, we generalize the results presented in \Cref{subsec: metric} to obtain closed-form representations of the expected number of objects with arbitrary topologies. The key idea is that the dependency among nodes within each object can be fully described by a minimal set of fundamental cycles (or basic cycles), which is referred to as the \textbf{cycle basis} of the object (see \Cref{defi: cycle basic}) \cite{6387307,amiri2014analysis,hareedy2020minimizing}.

\begin{defi}\label{defi: cycle basic} (\textbf{Cycle Basis}) A \textbf{cycle basis} of an object is a minimum-cardinality set of cycles using disjunctive unions of which, each cycle in the object can be obtained; we call the cycles in this set \textbf{fundamental cycles}.
\end{defi}

In the remainder of this paper, we define a \textbf{prototype} of an object, for simplicity, as an assignment of the indices of its variable nodes (VNs) and its check nodes (CNs) in the base matrix. Prototype is a natural extension of \textit{pattern}, i.e., a prototype of an object is exactly a pattern (discussed in \cite{hareedy2020channel}) when the object is a cycle. According to \cite{hareedy2020minimizing}, we call a prototype \textbf{active} if all the fundamental cycles satisfy the cycle condition simultaneously, which means the detrimental object will be created in the protograph after partitioning the base matrix. The probability of a prototype becoming active under a random partition is proved to be represented by the constant term of a multi-variate polynomial, where each variable is associated with a cycle in the cycle basis: we refer to this polynomial as the \textbf{characteristic polynomial} of the object associated with fixed prototype. The overall characteristic polynomial of the object without specifying the prototype is then obtained as an average over the characteristic polynomials associated with all possible prototypes.

We start with a motivating example.

\begin{figure}
\centering
\includegraphics[width=0.96\textwidth]{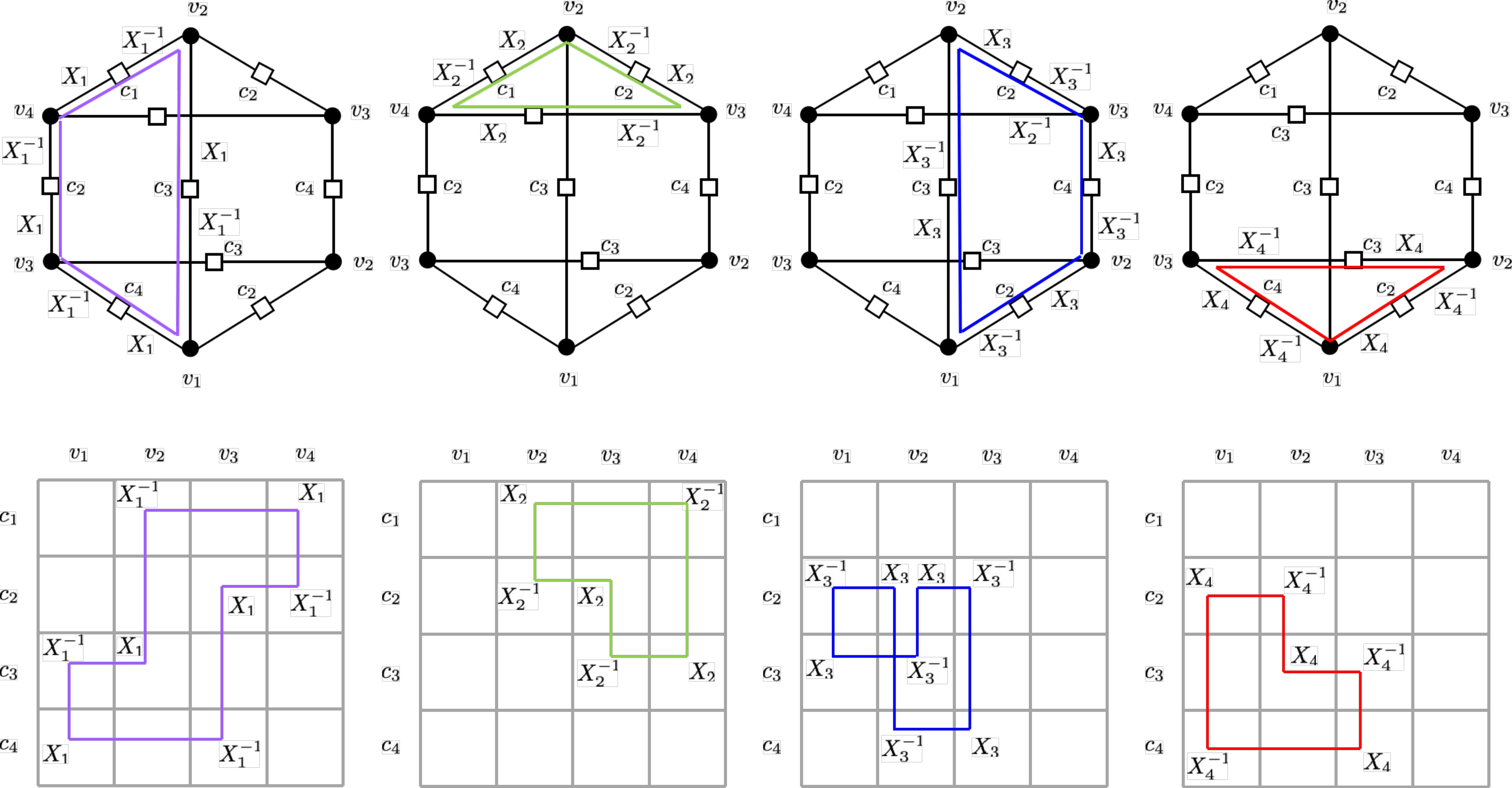}
\caption{The cycle basis of a typical $(6,0)$ ($(6,6)$)-AS in SC codes with $\gamma=3$ ($\gamma=4$) and their corresponding cycle candidates while pulled back to the base matrix. The cycle basis has $4$ fundamental cycles as shown in the top $4$ panels; each cycle decides a cycle candidate in the base matrix and an independent variable in the characteristic polynomial, as shown in the bottom panels. }
\label{fig: cycle_basis}
\end{figure}

\begin{figure}
\centering
\includegraphics[width=0.6\textwidth]{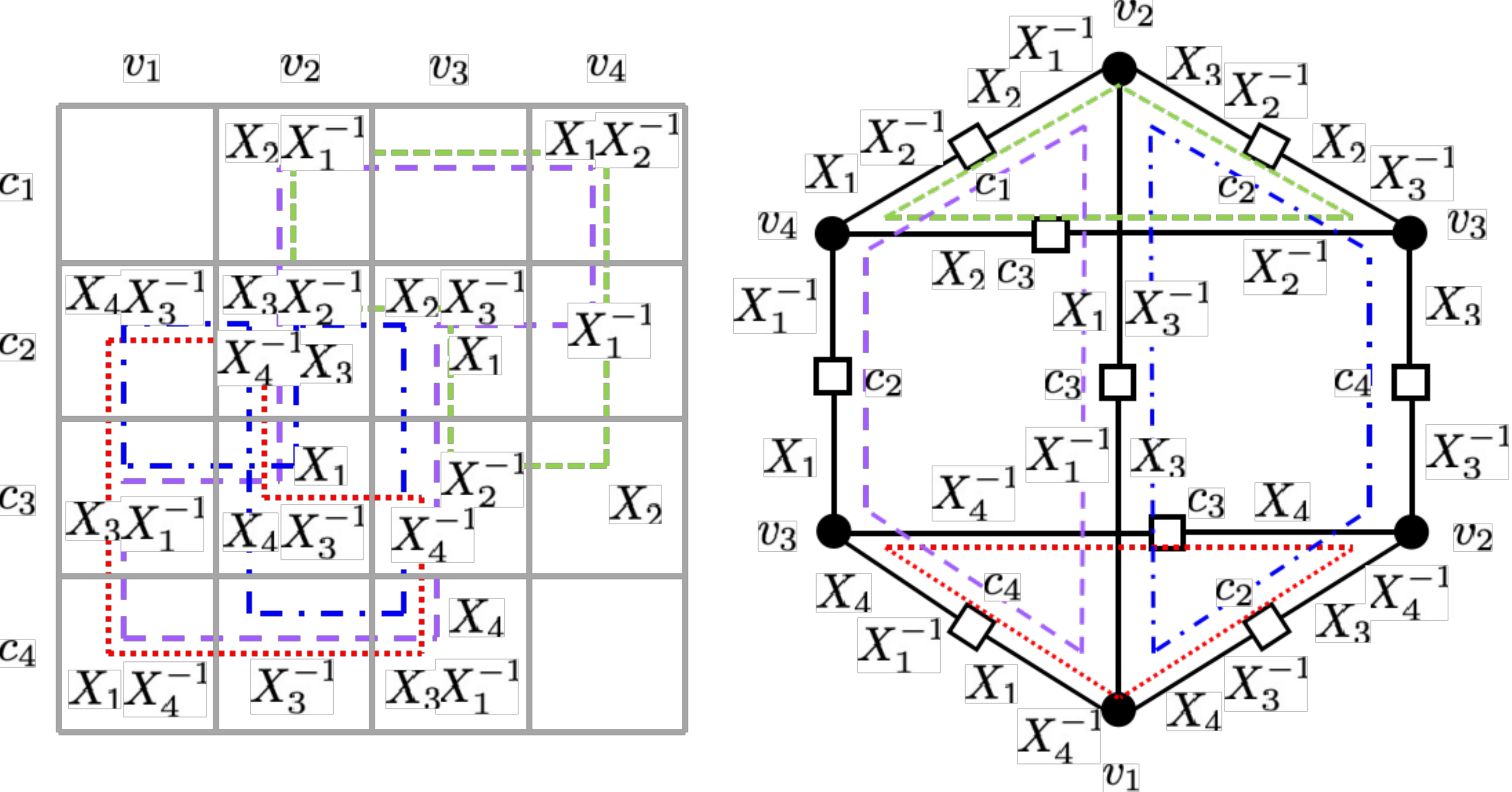}
\caption{The matrix representation of the characteristic polynomial of the AS in Fig.~\ref{fig: cycle_basis}. The monomial in each entry corresponds to a factor in the characteristic polynomial in (\ref{eqn: exam characteristic polynomial}).}
\label{fig: matrix representation}
\end{figure}

\begin{exam} \label{exam: exam general characteristic polynomial} Take the object (AS) with the node assignment shown in the top panel of Fig.~\ref{fig: cycle_basis} as an example.\footnote{Note that we only keep nodes with intrinsic connections, i.e., we ignored the degree $1$ CNs as they are not involved in any cycles and thus do not affect the probability of a prototype becoming active.} Consider the cycle basis consisting of the $4$ cycles highlighted in Fig.~\ref{fig: cycle_basis} and their associated variables $X_i$, $1\leq i\leq 4$. We refer to the cycle associated with $X_i$, $1\leq i\leq 4$, as cycle $i$. In the bottom panel of Fig.~\ref{fig: cycle_basis}, the labels $X_i$ and $X^{-1}_i$ are placed alternately on the cycle candidate corresponding to cycle $i$ in the base matrix.

We next briefly and intuitively explain how the characteristic polynomial of the object is specified as follows:
\begin{equation}
\label{eqn: exam characteristic polynomial}
\begin{split}
h(\mathbf{X})=&f(X_2^{-1}X_3^2X_4^{-1})f(X_1X_2X_3^{-1})f(X_1X_3^{-1}X_4)f(X_1^{-1}X_3X_4)\\
&f(X_1X_2^{-1})f(X_1^{-1}X_2)f^2(X_1^{-1}X_3)f(X_1X_4^{-1})f(X_2^{-1}X_4^{-1})f(X_3^{-1}X_4)\\
&f(X_1^{-1})f(X_2)f(X_3^{-1}),
\end{split}
\end{equation}
where $f(\cdot)$ is the coupling polynomial of a cycle as specified in \Cref{defi: coupling polynommial}. 

As shown in Fig.~\ref{fig: matrix representation}, we place the labels on all the cycle candidates (see Fig.~\ref{fig: cycle_basis}) altogether in the base matrix. Then, each entry of the matrix becomes associated with the product of all the labels contained in it. Take the entry at the intersection of row $c_3$ and column $v_2$ as an example. This entry is labeled with $X_1$, $X_3^{-1}$, $X_4$ on the cycle candidates for cycles $1$, $3$, and $4$, respectively. Therefore, the entry is associated with $X_1X_3^{-1}X_4$. Each product is the monomial corresponding to the matrix entry. The characteristic polynomial in (\ref{eqn: exam characteristic polynomial}) is exactly the product of all the factors obtained by replacing the variable in the coupling polynomial by the monomials corresponding to each entry.

In a way similar to the process described in the proof of \Cref{theo: cycle 6 probability}, expanding the right-hand side (RHS) of (\ref{eqn: exam characteristic polynomial}) results in terms of the form $q_iX_1^{k_1}X_2^{k_2}X_3^{k_3}X_4^{k_4}$ for each, where $q_i$ is the probability of a unique assignment to vertices, i.e., matrix entries, on the prototype of the AS such that the alternating sum of entries on cycle $i$ associated with $X_i$ in the cycle basis is $k_i$, $1\leq i\leq 4$. Therefore, the constant term is exactly the sum of the probabilities of all possible assignments (of the partitioning matrix) such that the cycle candidates of all the fundamental cycles satisfy their cycle conditions (all $k_i$'s are zeros). In other words, the constant term is exactly the probability of the prototype becoming active in the Tanner graph. 

\end{exam}

In \Cref{exam: exam general characteristic polynomial}, we have briefly introduced the idea of how we define the characteristic polynomial of an object associated with a fixed prototype. However, as shown in the case of cycle-$8$ candidates, an object is typically associated with multiple prototypes (referred to as cycle candidates when the object is a cycle). We next present an efficient method to obtain the expected number of all possible prototypes corresponding to an object.

The major idea is described as follows. Each prototype of an object leads to an equivalence relation on the CNs and VNs of the object, in which nodes with identical indices are regarded as being equivalent. The set consisting of all the prototypes describing the same equivalence relation is referred to as a \textbf{prototype class}. The characteristic polynomials of the prototypes belonging to the same prototype class are identical, and the cardinality of each prototype class is determined by their associated equivalence relation. Therefore, the key steps to obtain the characteristic polynomial of an object are: 1) to enumerate all the possible prototype classes of (or non-isomorphic equivalence relations on) a given object, and then 2) to obtain their associated characteristic polynomials and cardinalities. 

For example, consider the prototype class described by the graph in the left panel of Fig.~\ref{fig: graph representation}. Throughout this paper, we use $\MB{n}$ to represent the set $\{1,2,\dots,n\}$ for any $n\in\mathbb{N}$. Any assignment of $c_1,c_2,c_3,c_4\in\MB{\gamma}$ and $v_1,v_2,v_3,v_4\in\MB{\kappa}$ such that $c_1,c_2,c_3,c_4$ are mutually different, $v_1,v_2,v_3,v_4$ are also mutually different belongs to a unique prototype in the prototype class described by the graph. Note that the uniqueness follows from the fact that the \textbf{automorphism group} of the prototype class has only the identity element, thus the cardinality of this prototype class is $4!\binom{\gamma}{4}4!\binom{\kappa}{4}$. The automorphism group of a prototype is defined later in \Cref{defi: auto}; however, the aforementioned cardinality intuitively implies that each row/column permutation results in a unique prototype. We then move on to obtain the characteristic polynomial directly through the prototype class. The equivalence relation on CNs and VNs induces an equivalence relation on edges, in which edges with VNs and CNs all from the same equivalence class are referred to as being equivalent. As shown in Fig.~\ref{fig: graph representation}, edges from the same equivalence class are highlighted by identical markers. 

Note that each equivalence class on edges corresponds to a unique entry in the base matrix. Recall that each entry is associated with the product of all labels contained in it, which corresponds to a separate factor in the characteristic polynomial. In a similar way, if we represent each equivalence class by the product of all labels on all edges contained in it, the resultant product will be exactly the monomial associated with the entry corresponding to this equivalence class. Therefore, each factor of the characteristic polynomial in (\ref{eqn: exam characteristic polynomial}) is associated with an equivalence class on edges. For example, in Fig.~\ref{fig: graph representation}, the two edges highlighted by red triangles belong to the same equivalence class and are labeled with $X_1$ and $X_2X_3^{-1}$, respectively; and they altogether correspond to the factor $f(X_1X_2X_3^{-1})$ in the characteristic polynomial.

In the remaining text, we represent the equivalence relation by $\sim$.

\begin{defi} (\textbf{Prototype Class}) Let $\gamma,\kappa\in\mathbb{N}$. Consider an object represented by the bipartite graph $G(V,C,E)$, where $V$ and $C$ denote the set of VNs and CNs, respectively. The set 
$E$ is the set of all edges, where each edge is represented by $e_{i,j}$, for some $i\in V$, $j\in C$, connecting nodes $i$ and $j$. Let $\Se{V},\Se{C}$ represent equivalence classes on $V$ and $C$, respectively. 
A \textbf{prototype} is an assignment $P=(f,g)$, where $f: V\to \MB{\kappa}$, $g: C\to \MB{\gamma}$ such that: 
\begin{enumerate}
\item For any $c\in C$ and $v_1,v_2\in V$ such that $e_{v_1,c},e_{v_2,c}\in E$, $f(v_1)\neq f(v_2)$;
\item For any $v\in V$ and $c_1,c_2\in C$ such that $e_{v,c_1},e_{v,c_2}\in E$, $g(c_1)\neq g(c_2)$.
\end{enumerate}

The set consisting of all the prototypes $P=(f,g)$ that satisfy the following conditions is referred to as the \textbf{prototype class} associated with $(\Se{V},\Se{C})$, denoted by $\Se{P}(\Se{V},\Se{C})$:
\begin{enumerate}
\item For any $v_1,v_2\in V$, $f(v_1)=f(v_2)$ iff. $v_1\sim v_2$ in $\Se{V}$ (same column in the matrix);
\item For any $c_1,c_2\in C$, $g(c_1)=g(c_2)$ iff. $c_1\sim c_2$ in $\Se{C}$ (same row in the matrix).
\end{enumerate}

Denote the equivalence class induced by the relation: $e_{v_1,c_1}\sim e_{v_2,c_2}$ iff. $v_1\sim v_2$ and $c_1\sim c_2$, for all $v_1, v_2 \in V$ and $c_1, c_2 \in C$, by $\Se{E}(\Se{V},\Se{C})$, which is referred to as the equivalence class induced by $\Se{V}$ and $\Se{C}$.
\end{defi}

\begin{figure}
\centering
\includegraphics[width=0.6\textwidth]{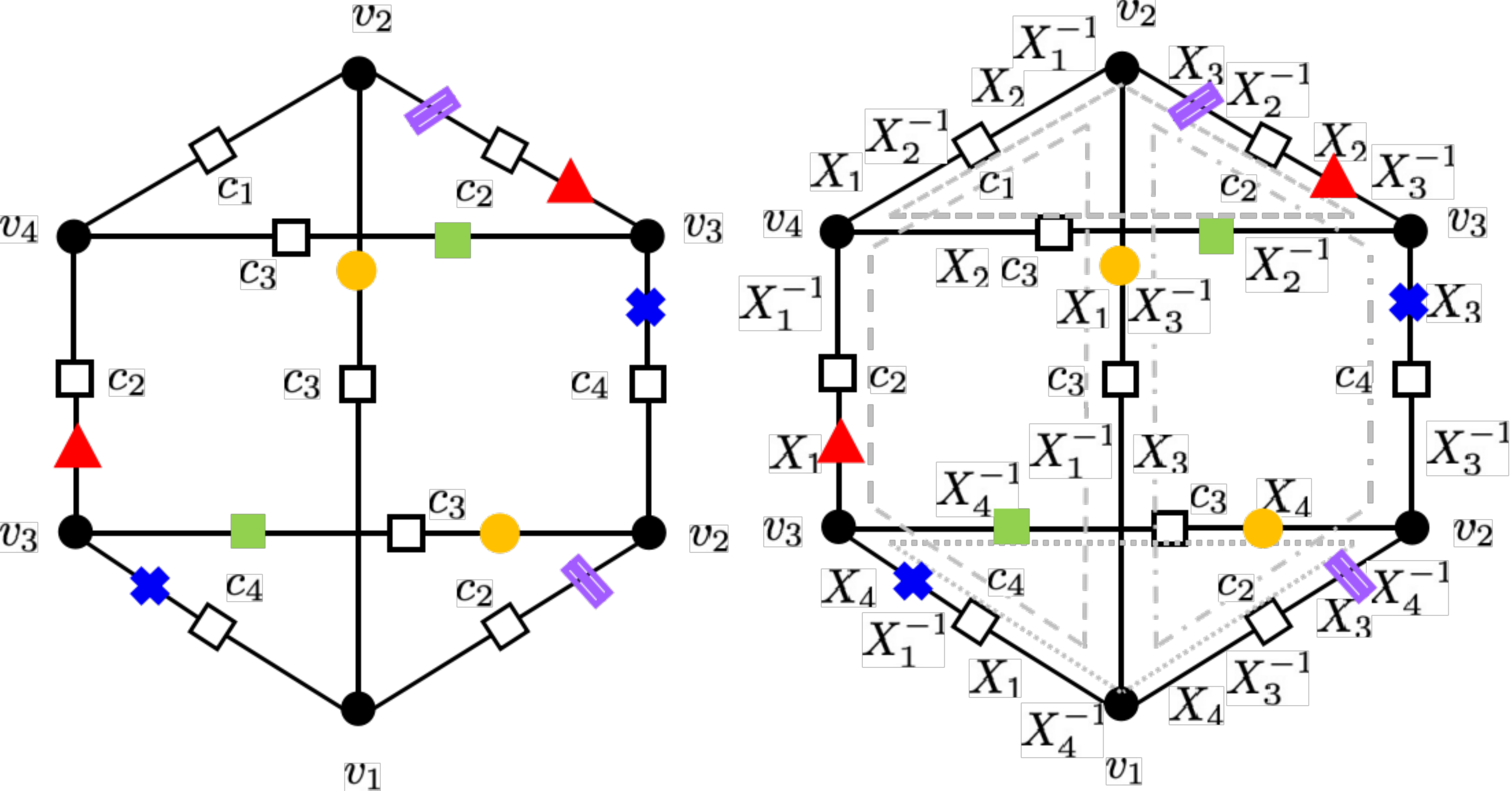}
\caption{The graph representation of the characteristic polynomial of the AS in Fig.~\ref{fig: cycle_basis}}
\label{fig: graph representation}
\end{figure}

\begin{lemma} \label{theo: Characteristic Polynomial of Prototypes}(\textbf{Characteristic Polynomial of Prototypes}) Consider the bipartite graph $G(V,C,E)$ of an object and the prototype class $\Se{P}(\Se{V},\Se{C})$. Suppose $\Se{E}(\Se{V},\Se{C})$ is the equivalence class on edges induced by $\Se{V}$ and $\Se{C}$.

Let $S$ denote the cycle basis of $G$. Define $\delta: E \times S \to \{-1,0,1\}$ as follows: for any $s\in S$, $s=(v_1,c_1,v_2,c_2,\dots,v_{g},c_g)$, and $e\in E$, $\delta_{e,s}=1$ if $e=e_{v_i,c_i}$ for some $i\in \MB{g}$, $\delta_{e,s}=-1$ if $e=e_{v_{i+1},c_i}$ for some $i\in \MB{g}$, otherwise $\delta_{e,s}=0$.

Define $h(\mathbf{X};G|\Se{V},\Se{C})$ as a polynomial of $G$ associated with $\Se{P}(\Se{V},\Se{C})$ and given by:
\begin{equation}\label{c_ch_poly}
h(\mathbf{X};G|\mathcal{V},\mathcal{C})=\prod_{\bar{e}\in\mathcal{E}(\mathcal{V},\mathcal{C})}f\left(\prod_{e\in \bar{e}}\prod_{s\in S} X_s^{\delta_{e,s}}\right).
\end{equation}
Then, the constant term of $h(\mathbf{X};G|\Se{V},\Se{C})$ is the probability that a prototype belonging to the class $\Se{P}(\Se{V},\Se{C})$ is active in the Tanner graph after partitioning.
\end{lemma}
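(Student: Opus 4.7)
The plan is to generalize the proof of \Cref{theo: cycle 6 probability} from a single cycle condition to a system of conditions indexed by the cycle basis $S$, tracking this system with one formal variable $X_s$ per fundamental cycle $s \in S$. The generating-function identity at the heart of the argument will be that each independent matrix entry contributes one factor of the coupling polynomial $f$, evaluated at a monomial that encodes how that entry participates in each fundamental cycle.

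First, I would fix an arbitrary prototype $P=(f,g) \in \Se{P}(\Se{V},\Se{C})$. Under random partitioning with distribution $\mathbf{p}$, each entry of $\mathbf{P}$ is drawn independently from $\Tx{vals}(\mathbf{a})$, with the value $a_i$ chosen with probability $p_i$. Two edges $e_{v_1,c_1}, e_{v_2,c_2}$ of $G$ are mapped to the same entry of $\mathbf{P}$ iff $f(v_1)=f(v_2)$ and $g(c_1)=g(c_2)$, which is exactly the relation defining $\Se{E}(\Se{V},\Se{C})$. Hence the randomness reduces to choosing one independent partition value $x_{\bar{e}} \in \Tx{vals}(\mathbf{a})$ per equivalence class $\bar{e}$ according to $\mathbf{p}$; this reduction is the only place where the specific representative $P$ enters, so the resulting probability depends only on the class $\Se{P}(\Se{V},\Se{C})$.

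Next, I would apply \Cref{lemma: cycle condition} to each fundamental cycle $s=(v_1,c_1,\ldots,v_g,c_g) \in S$. The cycle condition for $s$ reads $\sum_{k=1}^{g} \mathbf{P}(g(c_k),f(v_k)) - \sum_{k=1}^{g} \mathbf{P}(g(c_k),f(v_{k+1}))=0$, which by the definition of $\delta$ equals $\sum_{e\in E}\delta_{e,s}\,x_{[e]} = \sum_{\bar{e}} x_{\bar{e}}\,d_{\bar{e},s}$, where $d_{\bar{e},s} := \sum_{e \in \bar{e}} \delta_{e,s}$. Because every cycle in $G$ is a disjunctive union of elements of $S$ and the cycle-condition equation is linear over $\mathbb{Z}$, solving these $|S|$ equations simultaneously is equivalent to the prototype being active in the sense of \Cref{defi: cycle basic} and the surrounding discussion.

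Finally, I would encode this linear system with a multi-variate generating function in the variables $\mathbf{X}=(X_s)_{s\in S}$. For each class $\bar{e}$, the contribution of a single independent draw $x_{\bar{e}}$ is
\begin{equation*}
\sum_{x\in\Tx{vals}(\mathbf{a})} p_x \prod_{s\in S} X_s^{x\,d_{\bar{e},s}} \;=\; f\!\left(\prod_{s\in S} X_s^{d_{\bar{e},s}}\right) \;=\; f\!\left(\prod_{e\in\bar{e}}\prod_{s\in S} X_s^{\delta_{e,s}}\right),
\end{equation*}
which is exactly the factor appearing in $h(\mathbf{X};G|\Se{V},\Se{C})$. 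Multiplying across the independent classes $\bar{e}$ and expanding, the coefficient of $\prod_s X_s^0$ picks out precisely those joint assignments satisfying $\sum_{\bar{e}} x_{\bar{e}}\,d_{\bar{e},s}=0$ for every $s\in S$, weighted by $\prod_{\bar{e}} p_{x_{\bar{e}}}$; this constant term therefore equals the required activeness probability. The main obstacle I anticipate is purely bookkeeping: verifying carefully that the sign convention in $\delta$ correctly realizes the alternating structure of the cycle condition along each fundamental cycle, and ensuring that collapsing all edges of a class $\bar{e}$ into a single factor $f(\cdot)$ is legitimate because those edges share the \emph{same} random draw rather than being independently sampled.
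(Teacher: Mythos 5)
Your proposal is correct and follows essentially the same route as the paper's proof: reduce the randomness to one independent draw per edge-equivalence class, expand the product of coupling-polynomial factors as a multivariate generating function in the $X_s$, and observe that the constant term collects exactly the assignments whose alternating sums vanish on every fundamental cycle, weighted by their probabilities. The only cosmetic difference is that you phrase the per-class factor via the exponents $d_{\bar{e},s}=\sum_{e\in\bar{e}}\delta_{e,s}$ before expanding, whereas the paper expands first and identifies $l_s(\mathbf{i})$ afterward; the content is identical.
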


\begin{proof} For simplicity, we write $\mathcal{E}$ instead of $\mathcal{E}(\mathcal{V},\mathcal{C})$ in the proof. Any assignment on the set of edges $E$ can be represented by $\mathbf{x}\in (x_1,x_2,\dots,x_{|E|})\in \Tx{vals}(\mathbf{a})^{|E|}$ ($\Tx{vals}(\mathbf{a})$ is defined in \Cref{theo: cycle 6 probability} as the set $\{a_0,a_1,\dots,a_{m_t}\}$), where $x_e$ denotes the assignment on edge $e$ for any $e\in E$. Consider that all the edges belonging to the same equivalence class in $\mathcal{E}$ correspond to the same entry in the base matrix (and the partitioning matrix); these edges need to be assigned with an identical number in the partitioning matrix. Therefore, the assignment on the set of edges is essentially an assignment on the equivalence classes. Let $\mathbf{i}\in\{0,1,\dots,m_t\}^{|\mathcal{E}|}$ denote an assignment on the equivalence classes $\mathcal{E}$, where each element of $\mathbf{i}$ is represented by $i_{\bar{e}}$ for some $\bar{e}\in\mathcal{E}$; all the edges in the equivalence class $\bar{e}$ are assigned with $a_{i_{\bar{e}}}$ in the partitioning matrix, for any $\bar{e}\in\mathcal{E}$. 

We know that
\begin{equation}
\begin{split}
h(\mathbf{X};G|\mathcal{V},\mathcal{C})&=\prod_{\bar{e}\in\mathcal{E}}f\left(\prod_{e\in \bar{e}}\prod_{s\in S} X_s^{\delta_{e,s}}\right)\\
&=\sum_{\mathbf{i}\in\{0,1,\dots,m_t\}^{|\mathcal{E}|}}\hspace{0.5em}\prod_{\bar{e}\in\mathcal{E}} \left[p_{i_{\bar{e}}} \prod_{e\in \bar{e}}\prod_{s\in S} X_s^{\delta_{e,s}a_{i_{\bar{e}}}}\right]\\
&=\sum_{\mathbf{i}\in\{0,1,\dots,m_t\}^{|\mathcal{E}|}}\left(\prod_{\bar{e}\in\mathcal{E}} p_{i_{\bar{e}}}\right) \prod_{\bar{e}\in\mathcal{E}}\prod_{e\in \bar{e}}\prod_{s\in S} X_s^{\delta_{e,s}a_{i_{\bar{e}}}}\\
&=\sum_{\mathbf{i}\in\{0,1,\dots,m_t\}^{|\mathcal{E}|}}\left(\prod_{\bar{e}\in\mathcal{E}} p_{i_{\bar{e}}}\right) \prod_{s\in S} X_s^{\sum_{\bar{e}\in\mathcal{E}} a_{i_{\bar{e}}}\sum_{e\in\bar{e}}\delta_{e,s}}\\
&=\sum_{\mathbf{i}\in\{0,1,\dots,m_t\}^{|\mathcal{E}|}}\left(\prod_{\bar{e}\in\mathcal{E}} p_{i_{\bar{e}}}\right) \prod_{s\in S} X_s^{l_s(\mathbf{i})},\\
\end{split}
\end{equation}
where $l_s(\mathbf{i})=\sum_{\bar{e}\in\mathcal{E}} a_{i_{\bar{e}}}\sum_{e\in\bar{e}}\delta_{e,s}=\sum_{e\in s, e\in \bar{e}} \delta_{e,s}a_{i_{\bar{e}}}$ is exactly the alternating sum of the assignment $\mathbf{i}$ on cycle $s$.

Denote by $\mathcal{Z}(G)$ the set of assignments of the prototype in the partitioning matrix such that this prototype becomes active in the Tanner graph of the code after partitioning. Then,

\begin{equation}
\begin{split}
\left[h(\mathbf{X};G|\mathcal{V},\mathcal{C})\right]_0=&\sum_{\mathbf{i}\in\{0,1,\dots,m_t\}^{|\mathcal{E}|}: l_s(\mathbf{i})=0, \forall s\in S}\hspace{0.5em}\prod_{\bar{e}\in \mathcal{E}} p_{i_{\bar{e}}}\\
=&\sum_{\mathbf{i}\in\{0,1,\dots,m_t\}^{|\mathcal{E}|}: l_s(\mathbf{i})=0, \forall s\in S}\mathbb{P}\left[x_{e}=a_{i_{\bar{e}}},\forall \bar{e}\in\mathcal{E}, e\in \bar{e}\right]\\
=&\mathbb{P}\left[\mathcal{Z}(G)\right],
\end{split}
\end{equation}
which indicates that the constant term of $h(\mathbf{X};G|\mathcal{V},\mathcal{C})$ is the probability we are seeking.
\end{proof}

In fact, the coefficients of other terms of $h(\mathbf{X};G|\mathcal{V},\mathcal{C})$ also specify the probabilities of partitioning assignments other than $\mathcal{Z}(G)$. Consequently, $h(\mathbf{X};G|\Se{V},\Se{C})$ in (\ref{c_ch_poly}) represents the characteristic polynomial of $G$ associated with $\Se{P}(\Se{V},\Se{C})$.

\begin{figure}
\centering
\includegraphics[width=0.6\textwidth]{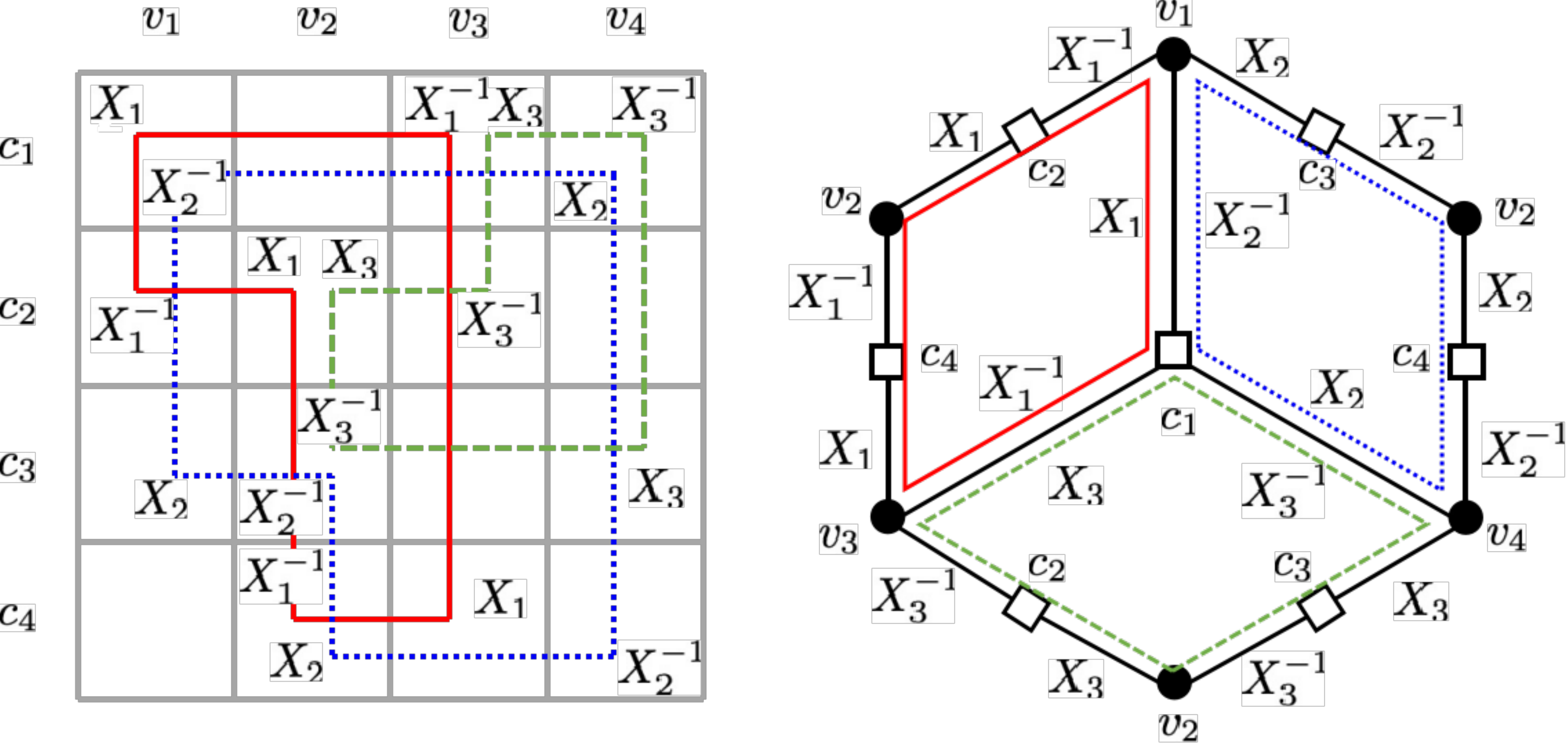}\vspace{-0.7em}
\caption{A matrix representation of the characteristic polynomial of the AS in \Cref{rem: nele}.}
\label{fig: nonelementary objects}
\end{figure}

While elementary objects (absorbing sets in particular) dominate the error floor of binary LDPC codes and NB-LDPC codes over the AWGN channel, non-elementary objects are observed to notably contribute to the error floor of NB-LDPC codes over non-canonical channels, e.g., practical magnetic recording and Flash channels \cite{hareedy2016nonbinary,hareedy2016general,hareedy2019combinatorial}. 

\begin{rem} \label{rem: nele}(\textbf{Non-Elementary Objects}) Note that \Cref{theo: Characteristic Polynomial of Prototypes} extends beyond elementary objects. Fig.~\ref{fig: nonelementary objects} shows a prototype of a non-elementary object. This prototype can still be described by a set of $3$ elementary cycles, as shown in Fig.~\ref{fig: nonelementary objects} via colors. According to \Cref{theo: Characteristic Polynomial of Prototypes}, the characteristic polynomial of the prototype is:
\begin{equation}
\begin{split}
h(\mathbf{X};G|\Se{V},\Se{C})=&f(X_1X_2^{-1})f(X_2X_3^{-1})f(X_3X_1^{-1})f(X_1X_3)f(X_1^{-1}X_2)f(X_2^{-1}X_3^{-1})\\
&f(X_1)f(X_1^{-1})f(X_2)f(X_2^{-1})f(X_3)f(X_3^{-1}).
\end{split}
\end{equation}

In combination with the WCM framework proposed in \cite{hareedy2019combinatorial} that optimizes the edge weights of NB-LDPC codes on fixed unweighted graphs, our method can open a door to systematically optimizing NB-SC codes with high memories, which have potential to be adopted in storage systems among other applications.  
\end{rem}

After obtaining the characteristic polynomial of any object associated with a fixed prototype class, we proceed to obtain the expectation of the number of active prototypes over all prototype classes. The essential step here is to calculate the cardinality of each prototype class. A natural property here is that each prototype from a specific class corresponds to assigning non-repeated elements with order from $\MB{\kappa}$ and $\MB{\gamma}$ to the equivalence classes in $\Se{V}$ and $\Se{C}$, respectively. However, specific permutations of values assigned to the nodes can lead to some other assignments that are isomorphic to each other because of the intrinsic symmetry of the prototypes. For example, in Fig.~\ref{fig: cc_example}(a), the assignment that exchanges values $v_1$ and $v_2$ while keeping values on remaining VNs as they are is equivalent to the original assignment. We call this exchange operation an \textbf{automorphism} over $G$ under $\Se{P}(\Se{V},\Se{C})$ and denote it by $(v_1v_2)$. The automorphisms over $G$ under each prototype class form a group, which is defined in \Cref{defi: auto}.

\begin{defi} \label{defi: auto} (\textbf{Automorphism Group of an Object Under a Prototype Class}) For any object represented by a bipartite graph $G(V,C,E)$, let $\mathcal{P}(\mathcal{V},\mathcal{C})$ be a prototype class of $G$. An \textbf{automorphism} over $G$ under $\mathcal{P}(\mathcal{V},\mathcal{C})$ is a pair of bijections $(\pi_V,\pi_C)$ written as $\pi_V\pi_C$, where $\pi_V: V\to V$ and $\pi_C: C\to C$ are bijections such that
\begin{enumerate}
  \item $\forall v\in V$, $c\in C$, $e_{v,c}\in E$ iff. $e_{\pi_V(v),\pi_C(c)}\in E$;
  \item $\forall v_1,v_2\in V$, $v_1\sim v_2$ iff. $\pi_V(v_1)\sim \pi_V(v_2)$;
  \item $\forall c_1,c_2\in C$, $c_1\sim c_2$ iff. $\pi_C(c_1)\sim \pi_C(c_2)$.
  \end{enumerate}
  The set containing all automorphisms over $G$ under $\mathcal{P}(\mathcal{V},\mathcal{C})$ is referred to as the \textbf{automorphism group} of $G$ under $\mathcal{P}(\mathcal{V},\mathcal{C})$.
\end{defi}

\begin{rem} From \Cref{defi: auto}, we know that any automorphism over $G$ under $\mathcal{P}(\mathcal{V},\mathcal{C})$ preserves the equivalence relation specified by $(\Se{V},\Se{C})$, i.e., $\{\pi_V(\bar{v}), \forall \bar{v}\in \Se{V}\}=\Se{V}$, $\{\pi_C(\bar{c}), \forall \bar{c}\in \Se{C}\}=\Se{C}$. Therefore, each automorphism can be simply represented as a pair of permutations over $\Se{V}$ and $\Se{C}$.
\end{rem}

\begin{lemma}{\label{lemma: general_couting}} Given the bipartite graph $G(V,C,E)$ of an object, let $\Se{B}(G)$ denote the set consisting of all prototype classes of $G$. Define the characteristic polynomial $h(\mathbf{X};G)$ of object $G$ as follows:
\begin{equation}
h(\mathbf{X};G)=\sum_{\Se{P}(\mathcal{V},\mathcal{C})\in \mathcal{B}(G)} \frac{|\mathcal{V}|!|\mathcal{C}|!}{|\textup{Aut}(G|\mathcal{V},\mathcal{C})|}\binom{\kappa}{|\mathcal{V}|}\binom{\gamma}{|\mathcal{C}|} h(\mathbf{X};G|\mathcal{V},\mathcal{C}),
\end{equation}
where $\textup{Aut}(G|\mathcal{V},\mathcal{C})$ denotes the automorphism group of the bipartite graph $G$ under prototype class $\mathcal{P}(\mathcal{V},\mathcal{C})$. Then, $\left[h(\mathbf{X};G)\right]_0$ is exactly the expected number of active prototype of object $G$.  
\end{lemma}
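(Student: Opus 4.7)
The plan is to reduce the claim to a per-class analysis via linearity of expectation, and then count prototypes inside each class by an orbit--stabilizer argument. First I would let $N(G)$ denote the random number of active prototypes of $G$, write $N(G)=\sum_P \mathbf{1}\{P\text{ is active}\}$ over all prototypes $P$, and apply linearity of expectation to obtain $\Ept{N(G)}=\sum_P \Prob{P\text{ is active}}$. Every prototype $(f,g)$ induces a unique pair $(\Se{V},\Se{C})$ of equivalence relations on $V$ and $C$ (declare two nodes equivalent iff they share an image under $f$ or $g$), so the set of all prototypes partitions disjointly into the classes $\Se{P}(\Se{V},\Se{C})$ indexed by $\Se{B}(G)$. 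By \Cref{theo: Characteristic Polynomial of Prototypes}, every prototype in a fixed class has the same activation probability $[h(\mathbf{X};G|\Se{V},\Se{C})]_0$, so
\begin{equation*}
\Ept{N(G)}=\sum_{\Se{P}(\Se{V},\Se{C})\in\Se{B}(G)} |\Se{P}(\Se{V},\Se{C})|\cdot [h(\mathbf{X};G|\Se{V},\Se{C})]_0.
\end{equation*}

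What remains, and what I expect to be the main obstacle, is to prove that $|\Se{P}(\Se{V},\Se{C})|=\frac{|\Se{V}|!|\Se{C}|!}{|\textup{Aut}(G|\Se{V},\Se{C})|}\binom{\kappa}{|\Se{V}|}\binom{\gamma}{|\Se{C}|}$. My plan is an orbit--stabilizer count on the set $\Se{L}$ of labeled selections, namely pairs of injections $\Phi:\Se{V}\hookrightarrow\MB{\kappa}$ and $\Psi:\Se{C}\hookrightarrow\MB{\gamma}$; a direct count gives $|\Se{L}|=|\Se{V}|!|\Se{C}|!\binom{\kappa}{|\Se{V}|}\binom{\gamma}{|\Se{C}|}$. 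Each $(\Phi,\Psi)\in\Se{L}$ pulls back through the quotient maps $V\twoheadrightarrow\Se{V}$ and $C\twoheadrightarrow\Se{C}$ to a prototype $(\Phi\circ q_V,\Psi\circ q_C)$ in $\Se{P}(\Se{V},\Se{C})$, and every element of the class arises this way. The group $\textup{Aut}(G|\Se{V},\Se{C})$ acts on $\Se{L}$ by pre-composition with its induced permutations on $\Se{V}$ and $\Se{C}$, and two labeled selections determine the same prototype precisely when they lie in the same orbit. Invoking freeness of this action---consistent with the example following \Cref{defi: auto}, where a trivial automorphism group yields a class of size exactly $|\Se{L}|$---each orbit has cardinality $|\textup{Aut}(G|\Se{V},\Se{C})|$, yielding the claimed formula for $|\Se{P}(\Se{V},\Se{C})|$.

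To finish, I would substitute these cardinalities into the class decomposition and exchange the finite sum with the constant-term functional by linearity, obtaining
\begin{equation*}
\Ept{N(G)}=\left[\sum_{\Se{P}(\Se{V},\Se{C})\in\Se{B}(G)} \frac{|\Se{V}|!|\Se{C}|!}{|\textup{Aut}(G|\Se{V},\Se{C})|}\binom{\kappa}{|\Se{V}|}\binom{\gamma}{|\Se{C}|} h(\mathbf{X};G|\Se{V},\Se{C})\right]_0,
\end{equation*}
whose bracketed polynomial is exactly $h(\mathbf{X};G)$ by definition, completing the argument. The step requiring the most care is the freeness claim: one must verify that any nontrivial $(\pi_V,\pi_C)\in\textup{Aut}(G|\Se{V},\Se{C})$ displaces at least one image under some $(\Phi,\Psi)$, so that the factor $|\textup{Aut}(G|\Se{V},\Se{C})|$ rather than a proper divisor appears in the denominator. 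This ultimately leans on the constraint in the definition of a prototype that two VNs (resp.\ CNs) in the same equivalence class cannot share a common CN (resp.\ VN) neighbor, which blocks nontrivial automorphisms from acting purely as internal permutations within an equivalence class and would otherwise invalidate the division by $|\textup{Aut}(G|\Se{V},\Se{C})|$.
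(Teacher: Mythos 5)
Your proposal is correct and follows essentially the same route as the paper's proof: linearity of expectation over all prototypes, grouping by prototype class, invoking \Cref{theo: Characteristic Polynomial of Prototypes} for the common per-prototype activation probability, and computing each class cardinality as $|\mathcal{V}|!|\mathcal{C}|!\binom{\kappa}{|\mathcal{V}|}\binom{\gamma}{|\mathcal{C}|}/|\textup{Aut}(G|\mathcal{V},\mathcal{C})|$. The only difference is that you make the orbit--stabilizer step explicit where the paper simply asserts that each assignment is counted $|\textup{Aut}(G|\mathcal{V},\mathcal{C})|$ times; the freeness you rightly flag is what the paper implicitly relies on by identifying each automorphism with its induced pair of permutations on $(\Se{V},\Se{C})$ (per the remark following \Cref{defi: auto}), under which the action on injective labelings is free.
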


\begin{proof} There are $|\mathcal{V}|!|\mathcal{C}|!\binom{\kappa}{|\mathcal{V}|}\binom{\gamma}{|\mathcal{C}|}$ assignments on indices of nodes in $G$ in the base matrix that satisfy the equivalence relation specified by $(\Se{V},\Se{C})$. Among these assignments, each one has been counted exactly $|\textup{Aut}(G|\mathcal{V},\mathcal{C})|$ times. Therefore, the cardinality of the prototype class $\mathcal{P}(\mathcal{V},\mathcal{C})$ is exactly $\frac{|\mathcal{V}|!|\mathcal{C}|!}{|\textup{Aut}(G|\mathcal{V},\mathcal{C})|}\binom{\kappa}{|\mathcal{V}|}\binom{\gamma}{|\mathcal{C}|} $. 

For any prototype $P$ of $G$, define a Bernoulli random variable $X_P$, where $\mathbb{P}\left[X_P=1\right]=\mathbb{P}\left[P\text{ is active}\right]$, $\mathbb{P}\left[X_P=0\right]=\mathbb{P}\left[P\text{ is not active}\right]$. Let $X=\sum_{P}X_P$ denotes the summation of $X_P$ over all possible prototypes of $G$. Then,

\begin{equation}
\begin{split}
\mathbb{E}\left[X\right]&=\sum_P\mathbb{E}\left[X_P\right]\\
&=\sum_{\Se{P}(\mathcal{V},\mathcal{C})\in \mathcal{B}(G)}\sum_{P\in \Se{P}(\mathcal{V},\mathcal{C})}\mathbb{E}\left[X_P\right]\\
&=\sum_{\Se{P}(\mathcal{V},\mathcal{C})\in \mathcal{B}(G)}\sum_{P\in \Se{P}(\mathcal{V},\mathcal{C})}\mathbb{P}\left[X_P=1\right]\\
&=\sum_{\Se{P}(\mathcal{V},\mathcal{C})\in \mathcal{B}(G)}\sum_{P\in \Se{P}(\mathcal{V},\mathcal{C})}\left[h(\mathbf{X};G|\mathcal{V},\mathcal{C})\right]_0\\
&=\sum_{\Se{P}(\mathcal{V},\mathcal{C})\in \mathcal{B}(G)}\frac{|\mathcal{V}|!|\mathcal{C}|!}{|\textup{Aut}(G|\mathcal{V},\mathcal{C})|}\binom{\kappa}{|\mathcal{V}|}\binom{\gamma}{|\mathcal{C}|} \left[h(\mathbf{X};G|\mathcal{V},\mathcal{C})\right]_0\\
&=\left[h(\mathbf{X};G)\right]_0.
\end{split}
\end{equation}
Thus, the lemma is proved.
\end{proof}

\begin{figure}
\centering
  \subfigure[Prototype class $1$.]{
       \includegraphics[width=0.23\textwidth]{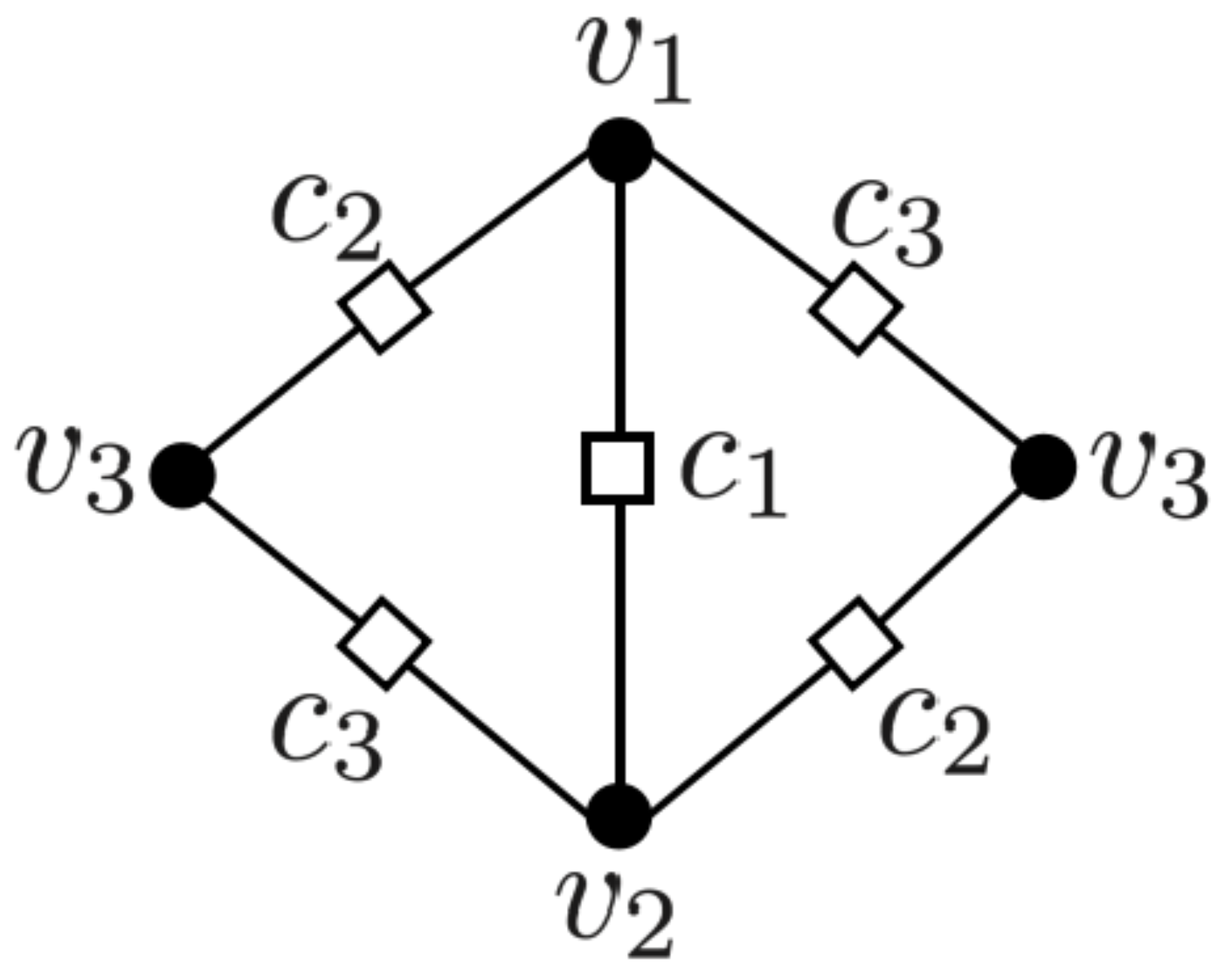}
       \label{1a}
       }
  \subfigure[Prototype class $2$.]{
        \includegraphics[width=0.23\textwidth]{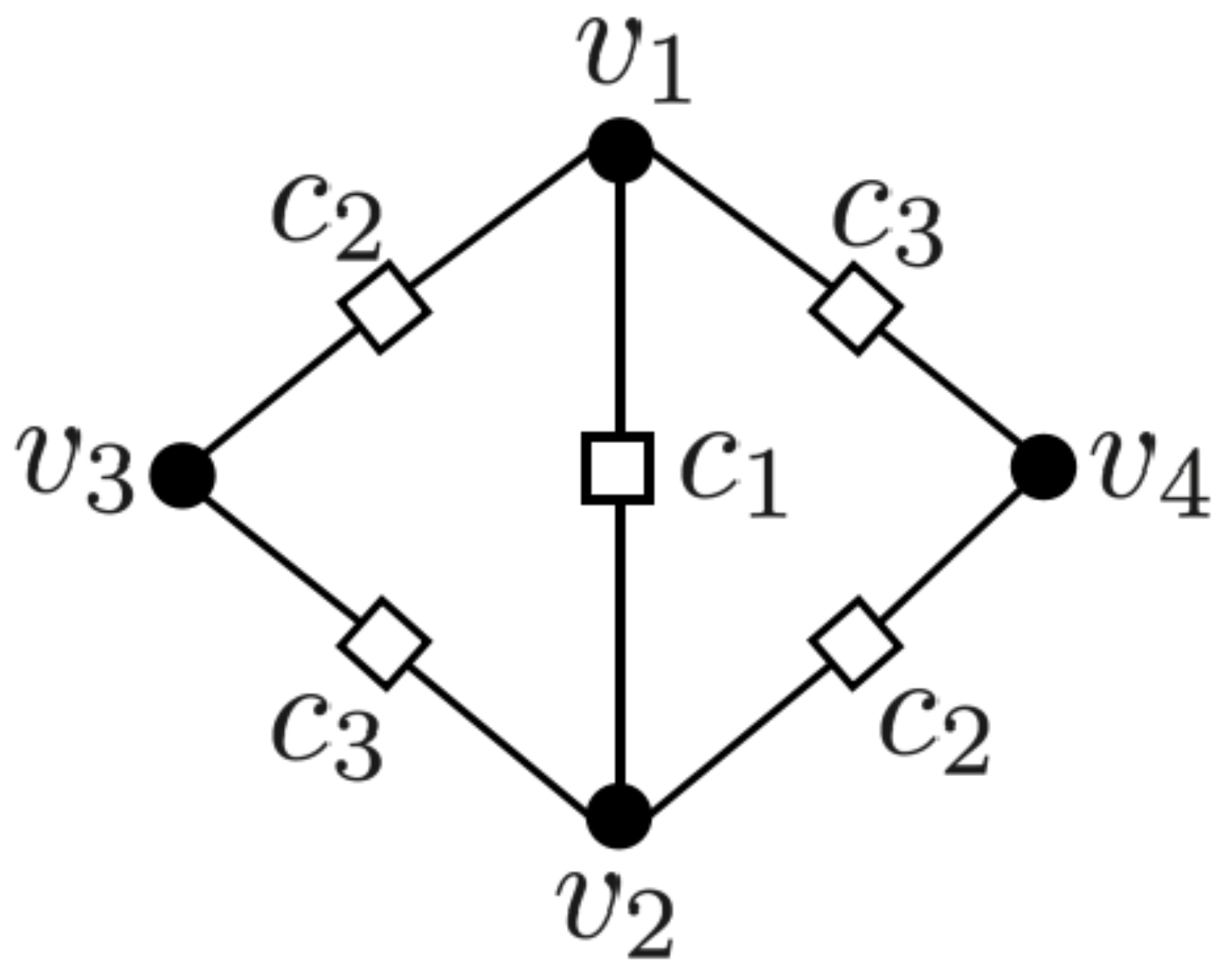}
        \label{1b}
        }
  \subfigure[Prototype class $3$.]{
       \includegraphics[width=0.23\textwidth]{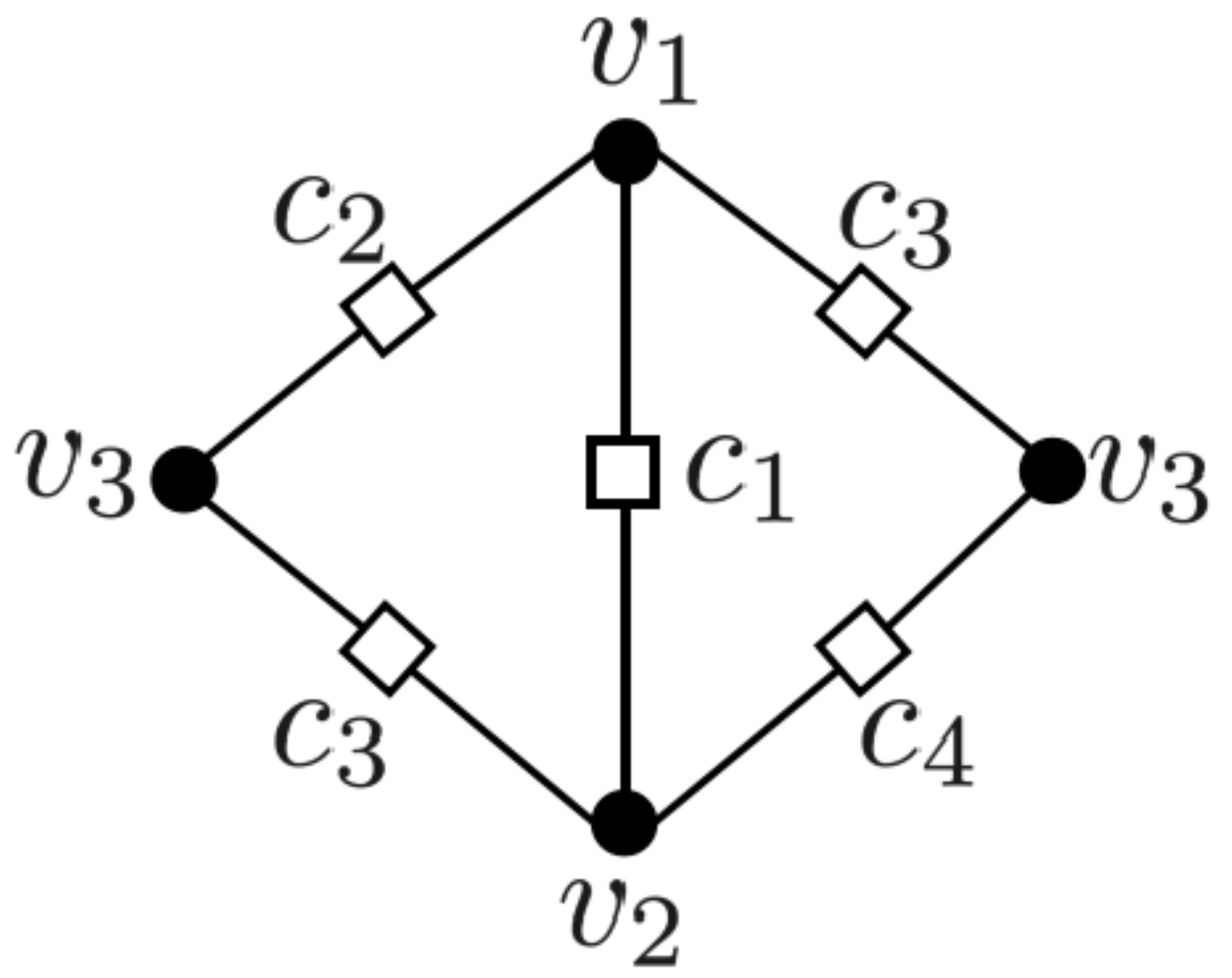}
       \label{1a}
       }
  \subfigure[Prototype class $4$.]{
        \includegraphics[width=0.23\textwidth]{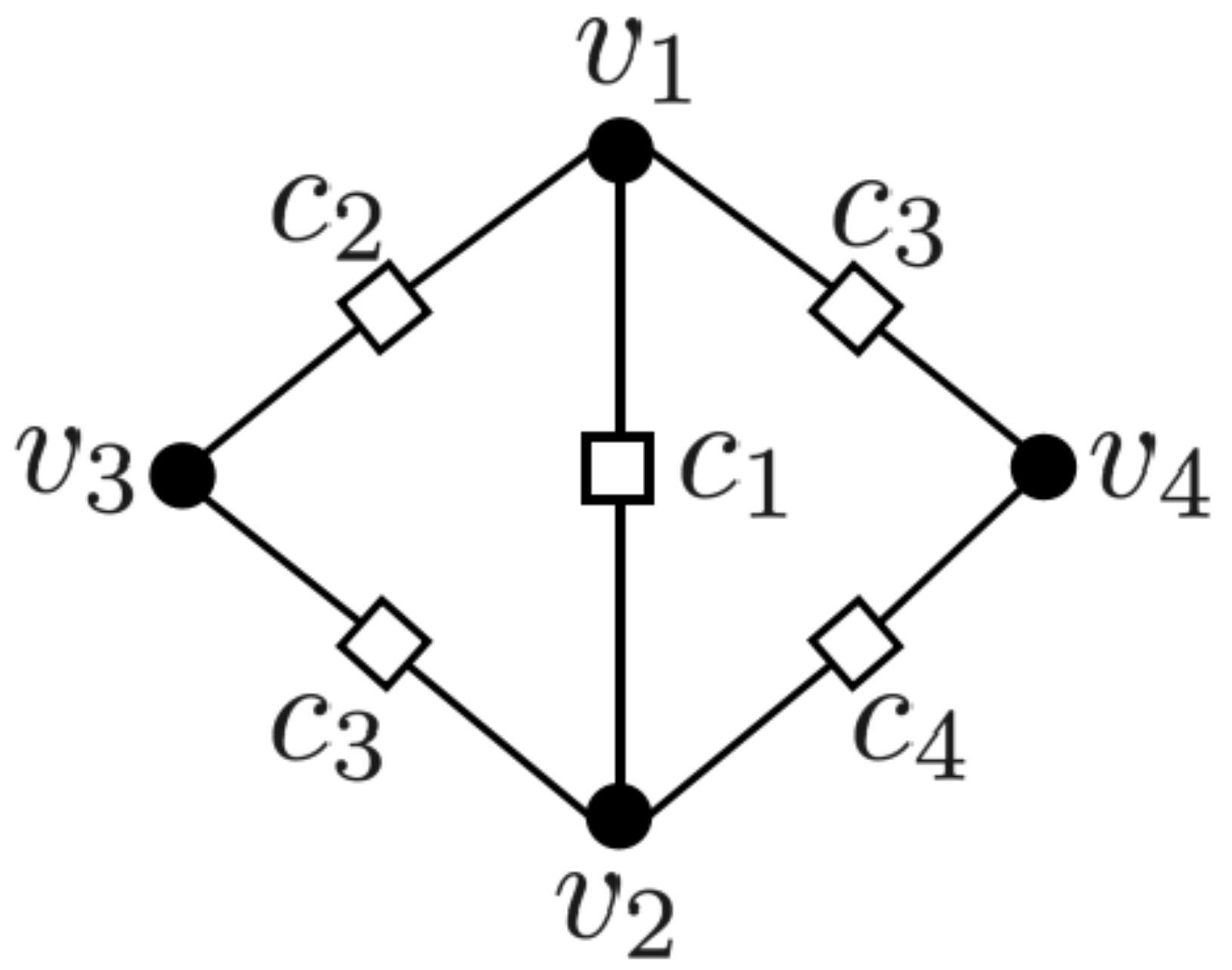}
        \label{1b}
        }
  \caption{The $4$ prototype classes of $2$ concatenated cycles-$6$ and their corresponding topologies.}
  \label{fig: cc_example} 
\end{figure}

\begin{exam}\label{exam: cc_6} Suppose $\gamma\in\{3,4\}$. Take the graph $G$ of two concatenated cycles-$6$ as an example; there are $4$ different possible prototype classes $(\mathcal{V}_i,\mathcal{C}_i)$, $1\leq i\leq 4$, as shown in Fig.~\ref{fig: cc_example}. The automorphism groups corresponding to the $4$ prototype classes, denote by $|\textup{Aut}(G|\mathcal{V}_i,\mathcal{C}_i)|$, $1\leq i\leq 4$, respectively, are: 
\begin{equation}
\begin{split}
\textup{Aut}(G|\mathcal{V}_1,\mathcal{C}_1)&=\{e,(v_1v_2),(c_2c_3),(v_1v_2)(c_1c_3)\},\\
\textup{Aut}(G|\mathcal{V}_2,\mathcal{C}_2)&=\{e,(v_1v_2)(c_2c_3),(v_3v_4)(c_2c_3),(v_1v_2)(v_3v_4)\},\\
\textup{Aut}(G|\mathcal{V}_3,\mathcal{C}_3)&=\{e,(v_1v_2)(c_2c_4)\},\\
\textup{Aut}(G|\mathcal{V}_4,\mathcal{C}_4)&=\{e,(v_1v_2)(v_3v_4)(c_2c_4)\},\\
\end{split}
\end{equation}
where the element $e$ in these groups is the identity element (no permutations). Therefore, the cardinality of the automorphism groups corresponding to the $4$ prototype classes are $|\textup{Aut}(G|\mathcal{V}_1,\mathcal{C}_1)|=4$, $|\textup{Aut}(G|\mathcal{V}_2,\mathcal{C}_2)|=4$, $|\textup{Aut}(G|\mathcal{V}_3,\mathcal{C}_3)|=2$, and $|\textup{Aut}(G|\mathcal{V}_4,\mathcal{C}_4)|=2$, respectively. Moreover, the characteristic polynomials for $G$ corresponding to each prototype class are:
\begin{equation}
\begin{split}
h(\mathbf{X};G|\mathcal{V}_1,\mathcal{C}_1)=&f(X_1X_2)f(X_1^{-1}X_2^{-1})f(X_1X_2^{-1})f(X_1^{-1}X_2)f(X_1)f(X_1^{-1})f(X_2)f(X_2^{-1}),\\
h(\mathbf{X};G|\mathcal{V}_3,\mathcal{C}_3)=&f(X_1X_2)f(X_1^{-1}X_2^{-1})f(X_1^{-1}X_2)f^2(X_1)f(X_1^{-1})f(X_2)f^2(X_2^{-1}),\\
h(\mathbf{X};G|\mathcal{V}_2,\mathcal{C}_2)=&h(\mathbf{X};G|\mathcal{V}_4,\mathcal{C}_4)=f(X_1X_2)f(X_1^{-1}X_2^{-1})f^2(X_1)f^2(X_1^{-1})f^2(X_2)f^2(X_2^{-1}).\\
\end{split}
\end{equation}

According to \Cref{lemma: general_couting}, when $\gamma=3$, the characteristic polynomial $h(\mathbf{X};G)$ is derived as follows:
\begin{equation}
\begin{split}
h(\mathbf{X};G)=&\frac{\kappa!\gamma!}{4(\kappa-3)!(\gamma-3)!}f(X_1X_2)f(X_1^{-1}X_2^{-1})f(X_1X_2^{-1})f(X_1^{-1}X_2)f(X_1)f(X_1^{-1})f(X_2)f(X_2^{-1})\\
&+\frac{\kappa!\gamma!}{4(\kappa-4)!(\gamma-3)!}f(X_1X_2)f(X_1^{-1}X_2^{-1})f^2(X_1)f^2(X_1^{-1})f^2(X_2)f^2(X_2^{-1})\\
=&\frac{3\kappa!}{2(\kappa-4)!}\bigg(f(X_1X_2)f(X_1^{-1}X_2^{-1})f^2(X_1)f^2(X_1^{-1})f^2(X_2)f^2(X_2^{-1})\\
&+\frac{1}{\kappa-3}f(X_1X_2)f(X_1^{-1}X_2^{-1})f(X_1X_2^{-1})f(X_1^{-1}X_2)f(X_1)f(X_1^{-1})f(X_2)f(X_2^{-1})\bigg).
\end{split}
\end{equation}
When $\gamma=4$, the characteristic polynomial $h(\mathbf{X};G)$ is derived as follows:
\begin{equation}
\begin{split}
h(\mathbf{X};G)=&\frac{\kappa!\gamma!}{4(\kappa-3)!(\gamma-3)!}f(X_1X_2)f(X_1^{-1}X_2^{-1})f(X_1X_2^{-1})f(X_1^{-1}X_2)f(X_1)f(X_1^{-1})f(X_2)f(X_2^{-1})\\
&+\frac{\kappa!\gamma!}{2(\kappa-3)!(\gamma-4)!}f(X_1X_2)f(X_1^{-1}X_2^{-1})f(X_1^{-1}X_2)f^2(X_1)f(X_1^{-1})f(X_2)f^2(X_2^{-1})\\
&+\left(\frac{\kappa!\gamma!}{4(\kappa-4)!(\gamma-3)!}+\frac{\kappa!\gamma!}{2(\kappa-4)!(\gamma-4)!}\right)f(X_1X_2)f(X_1^{-1}X_2^{-1})f^2(X_1)f^2(X_1^{-1})f^2(X_2)f^2(X_2^{-1})\\
=&\frac{18\kappa!}{(\kappa-4)!}\bigg(f(X_1X_2)f(X_1^{-1}X_2^{-1})f^2(X_1)f^2(X_1^{-1})f^2(X_2)f^2(X_2^{-1})\\
&+\frac{2}{3(\kappa-3)}f(X_1X_2)f(X_1^{-1}X_2^{-1})f(X_1^{-1}X_2)f^2(X_1)f(X_1^{-1})f(X_2)f^2(X_2^{-1})\\
&+\frac{1}{3(\kappa-3)}f(X_1X_2)f(X_1^{-1}X_2^{-1})f(X_1X_2^{-1})f(X_1^{-1}X_2)f(X_1)f(X_1^{-1})f(X_2)f(X_2^{-1})\bigg).
\end{split}
\end{equation}
\end{exam}

\begin{rem} Observe that in \Cref{exam: cc_6}, the number of assignments such that all edges are distinct dominates among all the cases, especially when $\kappa$ is large enough; we refer to such dominant assignments as the \textbf{typical assignments}. Therefore, it is normally sufficiently accurate to optimize over the characteristic polynomial corresponding to the typical assignments only. Specifically, for $2$ concatenated cycles of length $2i$ and $2j$, suppose the number of edges in common is $2k$. Then, the characteristic polynomial can be well approximated by $\tilde{h}(\mathbf{X};G)=Cf^k(X_1X_2)f^k(X_1^{-1}X_2^{-1})f^{i-k}(X_1)f^{i-k}(X_1^{-1})f^{j-k}(X_2)f^{j-k}(X_2^{-1})$ for some constant $C\in\mathbb{N}$.
\end{rem}

\subsection{Gradient-Descent Distributor}
\label{subsec: objects_gradient descent distributor}

\begin{theo} Given the bipartite graph $G(V,C,E)$ of an object and the prototype class $\Se{P}(\Se{V},\Se{C})$. Suppose $\Se{E}(\Se{V},\Se{C})$ is the equivalence class induced by $\Se{V}$ and $\Se{C}$. Let $(\mathbf{v})_t$ be the $t$-th entry of the vector $\mathbf{v}$. Following the notation in \Cref{theo: Characteristic Polynomial of Prototypes}, the gradient of $\left[h(\mathbf{X};G|\Se{V},\Se{C})\right]_0$ with respect to $\mathbf{p}$ is given by:

\begin{equation}
\left(\nabla_{\mathbf{p}} \left[h(\mathbf{X};G|\Se{V},\Se{C})\right]_0\right)_{t}=\left[\sum_{\bar{e}\in\mathcal{E}} \prod_{s\in S}X_s^{a_{t}\sum_{e\in \bar{e}} \delta_{e,s}}\prod_{\bar{e}'\in\mathcal{E}\setminus \{\bar{e}\}}f\left(\prod_{e\in \bar{e}'}\prod_{s\in S} X_s^{\delta_{e,s}}\right)\right]_0.
\end{equation}

\end{theo}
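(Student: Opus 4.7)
The plan is to differentiate the product $h(\mathbf{X};G|\Se{V},\Se{C})=\prod_{\bar{e}\in\mathcal{E}} f\!\left(\prod_{e\in\bar{e}}\prod_{s\in S} X_s^{\delta_{e,s}}\right)$ coordinate-wise with respect to $\mathbf{p}$, and then invoke linearity to exchange $\nabla_{\mathbf{p}}$ with the constant-term extraction $[\,\cdot\,]_0$. Since $f(Y;\mathbf{a},\mathbf{p})=\sum_{i=0}^{m_t} p_i Y^{a_i}$ is linear in $\mathbf{p}$, the $t$-th partial derivative is $\partial f(Y)/\partial p_t = Y^{a_t}$. Substituting $Y_{\bar{e}}=\prod_{e\in\bar{e}}\prod_{s\in S} X_s^{\delta_{e,s}}$ immediately gives
\[
\frac{\partial}{\partial p_t} f(Y_{\bar e}) \;=\; Y_{\bar e}^{\,a_t} \;=\; \prod_{s\in S} X_s^{\,a_t\sum_{e\in\bar{e}}\delta_{e,s}},
\]
which is exactly the monomial appearing inside the sum of the claimed formula.

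Next I would apply the ordinary product rule to the product over $\bar{e}\in\mathcal{E}$. Because each factor depends on all the entries of $\mathbf{p}$ (not just one), the usual multivariate product rule yields
\[
\frac{\partial}{\partial p_t}\prod_{\bar{e}\in\mathcal{E}} f(Y_{\bar e}) \;=\; \sum_{\bar{e}\in\mathcal{E}} \left(\frac{\partial}{\partial p_t} f(Y_{\bar e})\right)\prod_{\bar{e}'\in\mathcal{E}\setminus\{\bar{e}\}} f(Y_{\bar{e}'}).
\]
Plugging in the expression for $\partial f(Y_{\bar e})/\partial p_t$ computed above produces the bracketed expression in the theorem statement, before taking the constant term.

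Finally, I would note that the operation $[\,\cdot\,]_0$ that extracts the coefficient of $\prod_s X_s^{0}$ is $\mathbb{R}$-linear, and the partial derivative $\partial/\partial p_t$ is also $\mathbb{R}$-linear; these two linear operators therefore commute when applied to a Laurent polynomial in the $X_s$'s whose coefficients are polynomials in $\mathbf{p}$. Applying $[\,\cdot\,]_0$ to both sides of the product-rule expansion completes the derivation of the claim.

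I do not foresee a serious obstacle. The only minor bookkeeping care is in the exponent $a_t \sum_{e\in\bar{e}}\delta_{e,s}$: one has to remember that all edges lying in the same equivalence class $\bar{e}$ receive the \emph{same} partition value $a_{i_{\bar{e}}}\in\mathrm{vals}(\mathbf{a})$, so differentiating with respect to $p_t$ effectively fixes this common value to $a_t$ across all $e\in\bar e$, which is what produces the factor $a_t$ multiplying the signed sum $\sum_{e\in\bar e}\delta_{e,s}$ along each fundamental cycle $s$. Once this is observed, the result follows immediately from the product rule and the linearity of the constant-term functional.
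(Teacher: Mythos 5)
Your proposal is correct and follows essentially the same route as the paper's proof: apply the product rule over the equivalence classes $\bar{e}\in\mathcal{E}$, use the linearity of $f$ in $\mathbf{p}$ to get $\partial f(Y_{\bar e})/\partial p_t = Y_{\bar e}^{a_t} = \prod_{s\in S} X_s^{a_t\sum_{e\in\bar e}\delta_{e,s}}$, and then commute the derivative with the constant-term extraction $[\,\cdot\,]_0$ by linearity. No gaps; the bookkeeping remark about all edges in $\bar{e}$ sharing a common partition value is exactly the point the paper's derivation relies on when collapsing $\prod_{e\in\bar e}X_s^{\delta_{e,s}}$ into $X_s^{\sum_{e\in\bar e}\delta_{e,s}}$.
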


\begin{proof} We obtain the gradient with respect to $\mathbf{p}$ as follows:
\begin{equation}
\begin{split}
&\left(\nabla_{\mathbf{p}} h(\mathbf{X};G|\mathcal{V},\mathcal{C})\right)_{t}\\
=&\sum_{\bar{e}\in\mathcal{E}} \left(\nabla_{\mathbf{p}} f\left(\prod_{e\in \bar{e}}\prod_{s\in S} X_s^{\delta_{e,s}}\right)\right)_{\hspace{-0.3em}t} \hspace{0.5em}\prod_{\bar{e}'\in\mathcal{E}\setminus \{\bar{e}\}}f\left(\prod_{e\in \bar{e}'}\prod_{s\in S} X_s^{\delta_{e,s}}\right)\\
=&\sum_{\bar{e}\in\mathcal{E}} \left(\nabla_{\mathbf{p}} f\left(\prod_{s\in S}\prod_{e\in \bar{e}} X_s^{\delta_{e,s}}\right)\right)_{\hspace{-0.3em}t} \hspace{0.5em}\prod_{\bar{e}'\in\mathcal{E}\setminus \{\bar{e}\}}f\left(\prod_{e\in \bar{e}'}\prod_{s\in S} X_s^{\delta_{e,s}}\right)\\
=&\sum_{\bar{e}\in\mathcal{E}} \left(\nabla_{\mathbf{p}} f\left(\prod_{s\in S}X_s^{\sum_{e\in \bar{e}} \delta_{e,s}}\right) \right)_{\hspace{-0.3em}t} \hspace{0.5em}\prod_{\bar{e}'\in\mathcal{E}\setminus \{\bar{e}\}}f\left(\prod_{e\in \bar{e}'}\prod_{s\in S} X_s^{\delta_{e,s}}\right)\\
=&\sum_{\bar{e}\in\mathcal{E}} \prod_{s\in S}X_s^{a_{t}\sum_{e\in \bar{e}} \delta_{e,s}}\prod_{\bar{e}'\in\mathcal{E}\setminus \{\bar{e}\}}f\left(\prod_{e\in \bar{e}'}\prod_{s\in S} X_s^{\delta_{e,s}}\right).\\
\end{split}
\end{equation}
Therefore, 
\begin{equation}
\begin{split}
&\left(\nabla_{\mathbf{p}} \left[h(\mathbf{X};G|\mathcal{V},\mathcal{C})\right]_0\right)_{t}=\left[\left(\nabla_{\mathbf{p}} h(\mathbf{X};G|\mathcal{V},\mathcal{C})\right)_{t}\right]_0\\
=&\left[\sum_{\bar{e}\in\mathcal{E}} \prod_{s\in S}X_s^{a_{t}\sum_{e\in \bar{e}} \delta_{e,s}}\prod_{\bar{e}'\in\mathcal{E}\setminus \{\bar{e}\}}f\left(\prod_{e\in \bar{e}'}\prod_{s\in S} X_s^{\delta_{e,s}}\right)\right]_0.\\
\end{split}
\end{equation}
\end{proof}

\begin{figure}
\centering
\includegraphics[width=0.25\textwidth]{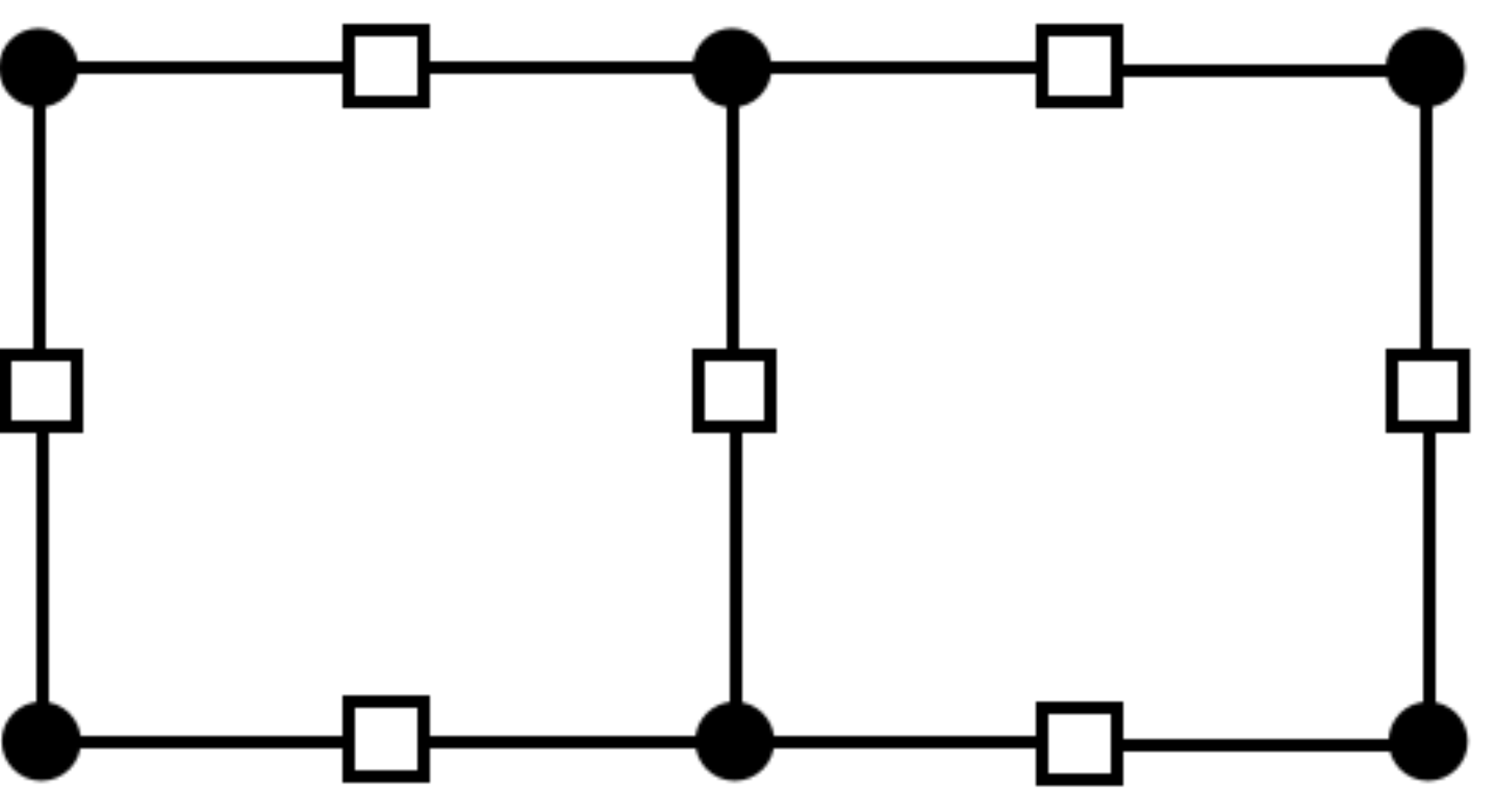}
  \caption{The targeted object consisting of two concatenated cycle-$8$ in \Cref{exam: GD_TC_6} and \Cref{exam: GD_TC_9}.}
  \label{fig: exam56} 
\end{figure}

Provided the explicit expression of $h(\mathbf{X};G)$ and its gradient, one can easily apply the gradient-descent algorithm to obtain an edge distribution that locally minimizes the expected number of active prototypes of the object specified by $G$. In \Cref{exam: GD_TC_6} and \Cref{exam: GD_TC_9}, we apply GRADE to two concatenated cycles-$8$ as shown in Fig.~\ref{fig: exam56}, under any prototype class $\Se{P}(\Se{V},\Se{C})$ such that $\Se{V}$ and $\Se{C}$ induce no equivalent edges in $E$. Denote by $P_{8-8}(\mathbf{a},\mathbf{p}|\Se{V},\Se{C})$ the probability that a prototype of this object becomes active after partitioning in an SC ensemble with coupling pattern $\mathbf{a}$ and edge distribution $\mathbf{p}$.

\begin{exam} \label{exam: GD_TC_6} Consider the following three cases of SC ensembles with $m=6$: 
\begin{enumerate}
  \item Full-memory codes with uniform edge distribution: $m_t=m=6$, $\mathbf{a}=(0,1,\dots,6)$ and $\mathbf{p}=\frac{1}{7} \mathbf{1}_{7}$. Then, $P_{8-8}(\mathbf{a},\mathbf{p}|\Se{V},\Se{C})=0.0049$;
  \item Full-memory codes with distribution obtained from GRADE: $m_t=m=6$, $\mathbf{a}=(0,1,\dots,6)$ and $\mathbf{p}=(0.2991,0.0899,0.0749,0.0733,0.0749,0.0896,0.2984)$. Then, $P_{8-8}(\mathbf{a},\mathbf{p}|\Se{V},\Se{C})=0.0032$;
  \item Non-full-memory codes with distribution obtained from GRADE: $m_t=3$, $\mathbf{a}=(0,1,4,6)$ and $\mathbf{p}=(0.2604,0.2063,0.2219,0.3114)$. Then, $P_{8-8}(\mathbf{a},\mathbf{p}|\Se{V},\Se{C})=0.0035$.
\end{enumerate}
\end{exam}

\begin{exam}\label{exam: GD_TC_9} Consider the following three cases of SC ensembles with $m=9$: 
\begin{enumerate}
  \item Full-memory codes with uniform edge distribution: $m_t=m=9$, $\mathbf{a}=(0,1,\dots,9)$ and $\mathbf{p}=\frac{1}{10} \mathbf{1}_{10}$. Then, $P_{8-8}(\mathbf{a},\mathbf{p}|\Se{V},\Se{C})=0.0024$;
  \item Full-memory codes with distribution obtained from GRADE: $m_t=m=9$, $\mathbf{a}=(0,1,\dots,9)$ and $\mathbf{p}=(0.2648,0.0803,0.0509,0.0526,0.0519,0.0519,0.0525,0.0508,0.0801,0.2644)$. Then, $P_{8-8}(\mathbf{a},\mathbf{p}|\Se{V},\Se{C})=0.0015$;
  \item Non-full-memory codes with distribution obtained from GRADE: $m_t=4$, $\mathbf{a}=(0,1,4,7,9)$ and $\mathbf{p}=(0.2479,0.1799,0.1262,0.1645,0.2814)$. Then, $P_{8-8}(\mathbf{a},\mathbf{p}|\Se{V},\Se{C})=0.0016$.
\end{enumerate}
\end{exam}

\begin{rem} In contrast to what we have shown regarding applying GRADE to single cycles, the gains obtained from applying GRADE to concatenated cycles are much more evident. As shown in \Cref{exam: GD_TC_6} and \Cref{exam: GD_TC_9}, for $m=6$ and $m=9$, the local minima obtained for full memory codes are quite close to the gains obtained for codes with coupling patterns $(0,1,4,6)$ and $(0,1,4,7,9)$, respectively (referred to as \textbf{topologically-coupled (TC) codes} later on). We show next in \Cref{section: construction} and \Cref{section: simulation} that TC codes have close performance to GD codes with full-memories, where both are obtained from applying GRADE-AO followed by CPO to concatenated cycles. 
\end{rem}

\begin{rem} Note that although we focus on non-tail-biting SC codes throughout this paper, this condition is by no means necessary. To extend our method to tail-biting codes, one just needs to change the cycle condition in (\ref{eqn: partition cycle}) from ``$\sum\nolimits_{k=1}^{g}\mathbf{P}(i_{k},j_{k})=\sum\nolimits_{k=1}^{g}\mathbf{P}(i_{k},j_{k+1})$'' to ``$\sum\nolimits_{k=1}^{g}\mathbf{P}(i_{k},j_{k})\equiv\sum\nolimits_{k=1}^{g}\mathbf{P}(i_{k},j_{k+1}) \mod L$''. If $L > m+1$, which is the typical case, the resultant optimal distribution is still very likely to be nonuniform because of the asymmetry among components indexed by $\{0,1,\dots,m\}$. This fact is important since while constructing non-tail-biting codes with large $\kappa$ and $m$, a large $L$ is typically desired due to the notable rate loss resulting from a small $L$. However, in certain practical applications, codes are typically of moderate length, which implies that a moderate $L$ is desirable. In such situations where $\kappa$ and $m$ are large, tail-biting codes do not suffer the same rate loss, and they can offer high error floor performance despite limiting the gain obtained from threshold saturation.
\end{rem}


\section{Algorithmic Optimization}
\label{section: construction}

We have developed the theory and the algorithm to obtain edge distributions that locally minimize the number of short cycles in \Cref{subsec: gradient descent} and generalized the results from cycles to arbitrary objects in \Cref{subsec: objects_gradient descent distributor}. In this section, we investigate algorithmic optimizers (AO) that search for excellent partitioning matrices under the guidance of GRADE. In particular, the edge distribution $\mathbf{p}_{opt}$ obtained through GRADE confines the search space to only contain matrices that have edge distributions near $\mathbf{p}_{opt}$.

We discuss both heuristic AOs based on semi-greedy algorithms and globally-optimal AOs based on variations of the OO technique proposed in \cite{esfahanizadeh2018finite,hareedy2020channel}. The heuristic AOs require low computational complexity and are applicable to arbitrary objects and any code parameters, but are only locally optimal. The OO-based AOs obtain the globally-optimal solutions, but currently only work on short cycles in SC codes with small pseudo-memories; we refer to these codes as \textbf{topologically-coupled (TC) codes}. The reason behind the nomenclature ``TC codes'' is the topological degrees of freedom they offer the code designer via the selection of the non-zero component matrices.

\subsection{Heuristic AO}
\label{subsec: heuristic AO}

In this subsection, we consider AOs that are based on heuristic methods. In this case, our proposed GRADE algorithm obtains an edge distribution to guide the AO. Starting from a random partitioning matrix $\mathbf{P}$ with the derived distribution, one can perform a semi-greedy algorithm that searches for partitioning matrix near the initial $\mathbf{P}$ that locally minimizes the number of targeted objects. Constraining the search space to contain $\mathbf{P}$'s that have distributions within small $L_1$ and $L_\infty$ distances from that of the original $\mathbf{P}$, and adopting the CPO next, significantly reduces the computational complexity to find a strong high-memory code. GRADE-guided heuristic AO has advantages in two aspects: 1) low complexity by reduced search space, and 2) higher probability of arriving at superior solution by providing a good enough initialization to AO that can avoid undesirable local minima.

\subsubsection{Cycle-Based Optimization}

\begin{algorithm}
\caption{Cycle-Based GRADE-A Optimizer (AO)} \label{algo: AO_cycle}
\begin{algorithmic}[1]
\Require 
\Statex $\gamma,\kappa,m,m_t,\mathbf{a}$: parameters of an SC ensemble;
\Statex $\mathbf{p}$: edge distribution obtained from \Cref{algo: GRADE};
\Statex $w$: weight of each cycle-$6$ assuming that of a cycle-$8$ is $1$;
\Statex $d_1$, $d_2$: parameters indicating the size of the search space;
\Ensure
\Statex $\mathbf{P}$: a locally optimal partitioning matrix;
\State Obtain the lists $\mathcal{L}_6(i,j)$, $\mathcal{L}_8(i,j)$ of cycles-$6$ candidates and cycle-$8$ candidates in the base matrix that contain node $(i,j)$, $1\leq i\leq \gamma$, $1\leq j\leq\kappa$;
\State Obtain $\mathbf{u}=\Tx{arg}\min\nolimits_{\mathbf{x}\in \{0,1,\dots,\gamma\kappa\}^{m+1}, ||\mathbf{x}||_1=\gamma\kappa}||\frac{1}{\gamma\kappa}\mathbf{x}-\mathbf{p}||_2$;
\For{$i\in\{0,1,\dots,m\}$} 
\State Place $\mathbf{u}\left[i+1\right]$ $i$'s into $\mathbf{P}$ randomly;
\EndFor
\State $\mathbf{d}\gets \mathbf{0}_{m+1}$;
\State $\Tx{noptimal}\gets \Tx{False}$;
\For{$i\in\{1,2,\dots,\gamma\}$, $j\in\{1,2,\dots,\kappa\}$}
\State $n_6\gets \Nm{\mathcal{L}_6(i,j)}$, $n_8\gets \Nm{\mathcal{L}_8(i,j)}$, $n\gets wn_6+n_8$;
\For{$v\in\{0,1,\dots,m\}$}
\State $\mathbf{d}'=\mathbf{d}$, $\mathbf{d}'\left[v+1\right]\gets \mathbf{d}'\left[v+1\right]+1$, $p\gets \mathbf{P}(i,j)$;
\If{$||\mathbf{d}'||_1\leq d_1$ and $||\mathbf{d}'||_{\infty}\leq d_2$}
\State $\mathbf{P}(i,j)\gets v$;
\State $t_6\gets \Nm{\mathcal{L}_6(i,j)}$, $t_8\gets \Nm{\mathcal{L}_8(i,j)}$, $t\gets wt_6+t_8$;
\If{$t<n$}
\State $\Tx{noptimal}\gets\Tx{True}$, $n\gets t$, $\mathbf{d}\gets\mathbf{d}'$, $\mathbf{P}(i,j)\gets v$;
\EndIf
\EndIf
\EndFor
\EndFor
\If{$\Tx{noptimal}$} 
\State \textbf{goto} step 6;
\EndIf
\State \textbf{return} $\mathbf{P}$;
\end{algorithmic}
\end{algorithm}

Based on the GRADE specified in \Cref{algo: GRADE}, we first present in \Cref{algo: AO_cycle} a corresponding AO that focuses on minimizing the weighted sum of the number of cycles-$6$ and cycles-$8$. We refer to codes obtained from GRADE-AO as \textbf{gradient-descent (GD) codes}. By replacing the initial distribution $\mathbf{p}$ with the uniform distribution, we obtain the so-called \textbf{uniform (UNF) codes}. In the special cases where the SC codes are not of full memory, i.e., the pseudo-memory is not identical with the memory, we refer to the codes obtained from GRADE-AO as \textbf{topologically-coupled (TC) codes}. We show in \Cref{section: simulation} by simulation that the distribution obtained by GRADE results in constructions that are better than those adopting uniform distribution and in existing literature. 

\subsubsection{Finer-Grained Optimization}

We next develop a finer-grained optimizer in \Cref{algo: AO_objects}, in which the targeted objects are two concatenated cycles where each of them is a cycle-$6$ or a cycle-$8$, as discussed in the examples of \Cref{subsec: objects_gradient descent distributor}. The critical part of the algorithm is enumerating all the objects of interest efficiently. Since we focus on \textit{concatenated} cycles of length $6$ or $8$, the key idea is to characterize concatenated cycles by the positions of the two degree $3$ VNs and the three paths connecting them. Here, we only consider objects that are elementary ASs. We call a path $\mathcal{P}=v_1$-$c_1$-$v_2$-$\cdots$-$c_{l}$-$v_{l+1}$ a type-$l$ path connecting $(c_1,v_1)$ and $(c_l,v_{l+1})$ and denote it by $L(\mathcal{P})=l$. Paths of type-$1$, type-$2$, and type-$3$ are shown in Fig.~\ref{fig: Paths}. Each concatenation of two cycles can be referred to as an $i$-$j$-$k$ object, where $i,j,k$ are the types of the three paths connecting the degree $3$ nodes in this object. Our targeted objects, where each of the two cycles is either a cycle-$6$ or a cycle-$8$, can only be $2$-$1$-$2$, $2$-$1$-$3$, $2$-$2$-$2$, or $3$-$1$-$3$ objects. Steps $1$-$5$ in \Cref{algo: AO_objects} are aimed at listing the paths of type $1$--$3$, and all the possible combinations of indices of the beginning and the ending CNs on each path.

\begin{figure}
\centering
\includegraphics[width=0.6\textwidth]{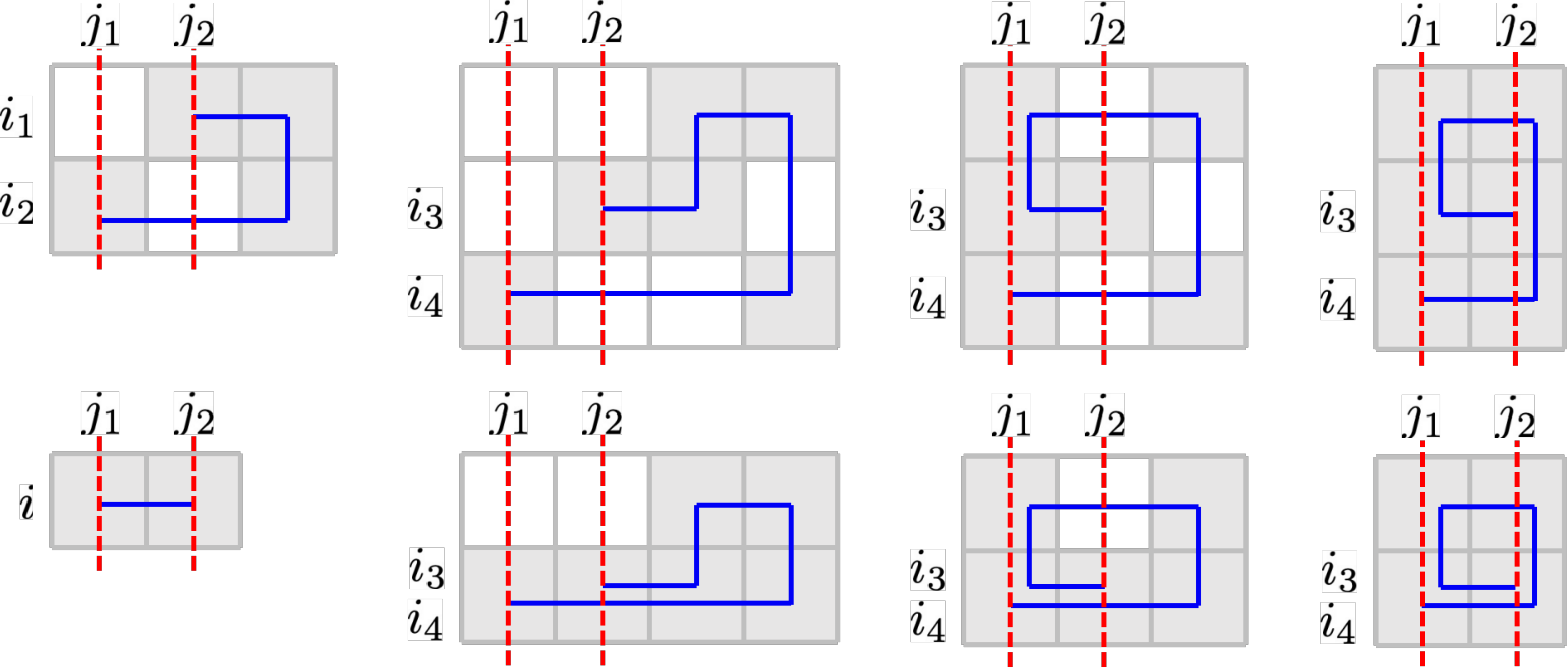}
  \caption{Type $1$-$3$ paths mentioned in \Cref{algo: AO_objects}: $\mathcal{L}_1(i,j_1,j_2)$, $\mathcal{L}_2(i_1,i_2,j_1,j_2)$, and $\mathcal{L}_3(i_3,i_4,j_1,j_2)$, $1\leq j_1<j_2 \leq\kappa$, $1\leq i,i_1,i_2,i_3,i_4\leq \gamma$, $i_1\neq i_2$.}
  \label{fig: Paths} 
\end{figure}

\begin{algorithm}
\caption{Fine-Grained GRADE-A Optimizer (AO)} \label{algo: AO_objects}
\begin{algorithmic}[1]
\Require 
\Statex $\gamma,\kappa,m,m_t,\mathbf{a}$: parameters of an SC code with full memory;
\Statex $\mathbf{p}$: edge distribution obtained from \Cref{algo: GRADE};
\Statex $\mathbf{w}=\left(w_1,w_2,w_3,w_4\right)$:  the weights of $2$-$1$-$2$, $2$-$1$-$3$, $2$-$2$-$2$, $3$-$1$-$3$ objects, respectively.
\Statex $d_1$, $d_2$: parameters indicating the size of the search space;
\Ensure
\Statex $\mathbf{P}$: a locally optimal partitioning matrix;
\State Obtain the lists $\mathcal{L}_1(i,j_1,j_2)$, $\mathcal{L}_2(i_1,i_2,j_1,j_2)$, and $\mathcal{L}_3(i_3,i_4,j_1,j_2)$, $1\leq j_1<j_2 \leq\kappa$, $1\leq i,i_1,i_2,i_3,i_4\leq \gamma$, $i_1\neq i_2$, where the lists are specified as follows:
\Statex a) $\mathcal{L}_1(i,j_1,j_2)$: all type-$1$ paths connecting $(i,j_1)$ and $(i,j_2)$ in the base matrix;
\Statex b) $\mathcal{L}_2(i_1,i_2,j_1,j_2)$: all type-$2$ paths connecting $(i_1,j_1)$ and $(i_2,j_2)$ in the base matrix;
\Statex c) $\mathcal{L}_3(i_3,i_4,j_1,j_2)$: all type-$3$ paths connecting $(i_3,j_1)$ and $(i_4,j_2)$ in the base matrix;
\State $\mathcal{I}_{212}\gets\{(i,i_1,i_2,i_3,i_4):1\leq i,i_1,i_2,i_3,i_4\leq \gamma, i_1<i_3, i_1\neq i_2, i_3\neq i_4\}$;
\State $\mathcal{I}_{213}\gets\{(i,i_1,i_2,i_3,i_4):1\leq i,i_1,i_2,i_3,i_4\leq \gamma, i_1\neq i_2\}$;
\State $\mathcal{I}_{222}\gets\{(i_1,i_2,i_3,i_4,i_5,i_6):1\leq i_1,i_2,i_3,i_4,i_5,i_6\leq \gamma, i_1<i_3<i_5, i_1\neq i_2, i_3\neq i_4, i_5\neq i_6\}$;
\State $\mathcal{I}_{313}\gets\{(i,i_1,i_2,i_3,i_4):1\leq i,i_1,i_2,i_3,i_4\leq \gamma, i_1<i_3\}$;
\State Obtain $\mathbf{u}=\Tx{arg}\min\nolimits_{\mathbf{x}\in \{0,1,2,\dots,\gamma\kappa\}^{m+1}, ||\mathbf{x}||_1=\gamma\kappa}||\frac{1}{\gamma\kappa}\mathbf{x}-\mathbf{p}||_2$;
\For{$i\in\{0,1,\dots,m\}$} 
\State Place $\mathbf{u}\left[i+1\right]$ $i$'s into $\mathbf{P}$ randomly;
\EndFor
\State $\mathbf{d}\gets \mathbf{0}_{m+1}$;
\State $n\gets M$; //$M$ is some very large constant
\State $\Tx{noptimal}\gets \Tx{False}$;
\For{$i\in\{1,2,\dots,\gamma\}$, $j\in\{1,2,\dots,\kappa\}$}
\For{$v\in\{0,1,\dots,m\}$}
\State $\mathbf{d}'=\mathbf{d}$, $\mathbf{d}'\left[v+1\right]\gets \mathbf{d}'\left[v+1\right]+1$, $p\gets \mathbf{P}(i,j)$;
\If{$||\mathbf{d}'||_1\leq d_1$ and $||\mathbf{d}'||_{\infty}\leq d_2$}
\State $\mathbf{P}(i,j)\gets v$;
\For{$1\leq j_1<j_2 \leq\kappa$, $1\leq i_0,i_1,i_2,i_3,i_4\leq \gamma$, $i_1\neq i_2$}
\State $v_{1,l}(i_0,j_1,j_2)\gets \lvert \{\mathcal{P}| L(\mathcal{P};\mathbf{P})=l, \mathcal{P}\in  \mathcal{L}_1(i_0,j_1,j_2)\}\rvert$, $-m\leq l\leq m$;
\State $v_{2,l}(i_1,i_2,j_1,j_2)\gets \lvert \{\mathcal{P}| L(\mathcal{P};\mathbf{P})=l, \mathcal{P}\in  \mathcal{L}_2(i_1,i_2,j_1,j_2)\}\rvert$, $-2m\leq l\leq 2m$;
\State $v_{3,l}(i_3,i_4,j_1,j_2)\gets \lvert \{\mathcal{P}| L(\mathcal{P};\mathbf{P})=l, \mathcal{P}\in  \mathcal{L}_3(i_3,i_4,j_1,j_2)\}\rvert$, $-m\leq l\leq m$;
\EndFor
\State $t_{212}\gets \sum\nolimits_{1\leq j_1<j_2\leq \kappa} \sum\nolimits_{(i_0,i_1,i_2,i_3,i_4)\in \mathcal{I}_{212}} \sum\nolimits_{-m\leq l\leq m} v_{1,l}(i_0,j_1,j_2)v_{2,l}(i_1,i_2,j_1,j_2)v_{2,l}(i_3,i_4,j_1,j_2)$; 
\State $t_{213}\gets \sum\nolimits_{1\leq j_1<j_2\leq \kappa} \sum\nolimits_{(i_0,i_1,i_2,i_3,i_4)\in \mathcal{I}_{213}} \sum\nolimits_{-m\leq l\leq m} v_{1,l}(i_0,j_1,j_2)v_{2,l}(i_1,i_2,j_1,j_2)v_{3,l}(i_3,i_4,j_1,j_2)$; 
\State $t_{222}\gets \sum\nolimits_{1\leq j_1<j_2\leq \kappa} \sum\nolimits_{(i_1,i_2,i_3,i_4,i_5,i_6)\in \mathcal{I}_{222}} \sum\nolimits_{-2m\leq l\leq 2m} v_{2,l}(i_1,i_2,j_1,j_2)v_{2,l}(i_3,i_4,j_1,j_2)v_{2,l}(i_5,i_6,j_1,j_2)$; 
\State $t_{313}\gets \sum\nolimits_{1\leq j_1<j_2\leq \kappa} \sum\nolimits_{(i_0,i_1,i_2,i_3,i_4)\in \mathcal{I}_{313}} \sum\nolimits_{-m\leq l\leq m} v_{1,l}(i_0,j_1,j_2)v_{3,l}(i_1,i_2,j_1,j_2)v_{3,l}(i_3,i_4,j_1,j_2)$; 
\State $t\gets w_1t_{212}+w_2t_{213}+w_3t_{222}+w_4t_{313}$;
\If{$t<n$}
\State $\Tx{noptimal}\gets\Tx{True}$, $n\gets t$, $\mathbf{d}\gets\mathbf{d}'$, $\mathbf{P}(i,j)\gets v$;
\EndIf
\EndIf
\EndFor
\EndFor
\If{$\Tx{noptimal}$} 
\State \textbf{goto} step 11;
\EndIf
\State \textbf{return} $\mathbf{P}$;
\end{algorithmic}
\end{algorithm}

\begin{rem} Note that in the finer-grained optimization, the condition of a concatenated-cycle pair in the protograph becoming a pair of concatenated cycles in the Tanner graph after lifting is that the two cycle candidates contained in this prototype all satisfy the cycle condition on lifting parameters specified in \Cref{lemma: cycle condition}. Therefore, after applying AO to minimize the number of concatenated-cycle pairs, instead of using the original CPO designed for cycle optimization in \cite{hareedy2020minimizing}, we adopt a modified version that is tailored for optimization over the number of pairs of concatenated cycles accordingly. 
\end{rem}

In a way similar to what we have done with approaches eliminating cycles, we define GD codes, UNF codes, and TC codes here. Moreover, as shown in \Cref{subsec: objects_gradient descent distributor}, the expected number of concatenated-cycle pairs in GD codes (with full memories) is close to that of GD-TC codes with carefully chosen pseudo-memories and coupling patterns.\footnote{We refer to the prototype of a pair of concatenated cycles as a concatenated-cycle pair.} In particular, memory $6$ GD ensembles can be approximated by GD-TC ensembles with pseudo-memory $3$ and coupling pattern $(0,1,4,6)$; memory $9$ GD ensembles can be approximated by GD-TC ensembles with pseudo-memory $4$ and coupling pattern $(0,1,4,7,9)$. This leads to the conjecture that the performance of GD-TC codes and the performance of GD codes are quite close, which is somewhat surprising provided that they differ a lot in edge distribution, and the impact of concatenated cycles on waterfall performance is not strictly characterized. In \Cref{subsec: simu_concat_cycle}, Monte-Carlo simulations support our conjecture, which enables more possibilities in TC codes. For example, TC codes can be globally-optimized given that their pseudo-memories are low; details will be discussed later on in \Cref{subsec: globally-optimal AO}.

\subsection{Globally-Optimal AO}
\label{subsec: globally-optimal AO}


In this subsection, we explore globally-optimal constructions of TC codes with small pseudo-memories. The motivation behind this task is to construct an SC code with memory $m$ under the same computational complexity needed to construct a full memory $m_t$ code, where $m_t<m$. Given $m_t$ and $m$, we first find the optimal $\mathbf{a}$, in terms of the minimum number of prototypes of interest, with length $m_t+1$ in a brute-force manner. Taking $m=4$ and $m_t=2$ as an example, the optimal coupling pattern with respect to the number of cycles is $\mathbf{a}=(0,1,4)$ and the corresponding optimal distribution is almost uniform. Moreover, we already know from \Cref{subsec: objects_gradient descent distributor} that regarding the optimal coupling pattern with respect to the number of concatenated-cycle pairs, $\mathbf{a}=(0,1,4,6)$ and $\mathbf{a}=(0,1,4,7,9)$ are not only the optimal coupling patterns for $(m,m_t)=(6,3)$ and $(m,m_t)=(9,4)$, respectively, but also approximate the optimal full memory GD ensembles quite closely in terms of performance.

Given the optimal coupling pattern $\mathbf{a}$, we then obtain an optimal partitioning matrix by the OO method proposed in \cite{esfahanizadeh2018finite} and \cite{hareedy2020channel}. We extend the OO method for memory $m_0$ SC codes to any TC code with pseudo-memory $m_t=m_0$, which does not increase the complexity of the approach. Note that despite the current OO works only on cycles, future steps can be taken towards the extension of OO into concatenated cycles, which has potential to lead to TC codes with excellent performance that is even better than the GD codes with full memories, provided that it is much harder to obtain globally-optimal solutions for GD codes with full memories.

Optimal TC codes with pseudo-memory $m_t$ have strictly fewer cycle candidates in their protographs than optimal SC codes with full memory $m=m_t$. Take $m=4$ and $m_t=2$ as an example. Suppose the optimal SC code has the partition $\bm{\Pi}=\mathbf{H}_0^{\Tx{P}}+\mathbf{H}_1^{\Tx{P}}+\mathbf{H}_2^{\Tx{P}}$. Consider the TC code with partition $\bm{\Pi}=\mathbf{H}_0^{\Tx{P}}+\mathbf{H}_1^{\Tx{P}}+\mathbf{H}_4^{\Tx{P}}$ such that $\mathbf{H}_2^{\Tx{P}}=\mathbf{H}_4^{\Tx{P}}$. Then, any cycle-$6$ candidate resulting from a cycle candidate in the base matrix assigned with $0$-$1$-$0$-$1$-$2$-$0$, $1$-$2$-$1$-$2$-$2$-$0$, or $0$-$1$-$2$-$1$-$x$-$x$, $x\in\{0,1,2\}$, in $\mathbf{P}$ no longer has a counterpart in the TC code, since by replacing $2$'s with $4$'s, assignments $0$-$1$-$0$-$1$-$4$-$0$, $1$-$4$-$1$-$4$-$4$-$0$, and $0$-$1$-$4$-$1$-$x$-$x$, $x\in\{0,1,4\}$, no longer satisfy the cycle condition in \Cref{lemma: cycle condition}. Moreover, there exists a bijection between the remaining candidates in the SC code and all candidates in the TC code through the replacement of $2$'s with $4$'s. 

Fig.~\ref{fig: TC_cycles} presents part of the protograph of a TC code with coupling pattern $(0,1,4)$ and that of its corresponding SC code with full memory $2$. The cycle-$6$ candidate colored by blue is assigned with $0$-$1$-$2$-$1$-$1$-$1$ in the SC code, which satisfies the cycle condition cycle candidates are generated in the protograph. However, the assignment becomes $0$-$1$-$4$-$1$-$1$-$1$ in the TC code, which no longer satisfies the cycle condition and no cycle candidates are generated in the protograph. The cycle-$6$ candidate colored by green corresponds to one that results in cycle-$6$ candidates in both the SC and the TC codes shown in the figure. We also marked out a cycle-$8$ candidate (colored by red) that only leads to cycle candidates in the SC code.

According to the aforementioned discussion, TC codes are better (have less cycles) than SC codes with the same circulant size and $m=m_t$. In \Cref{section: simulation}, we present simulation results of such codes and show that they can also outperform SC codes with the same constraint length (larger circulant size) and $m=m_t$.

\begin{figure}
\centering
\includegraphics[width=0.9\textwidth]{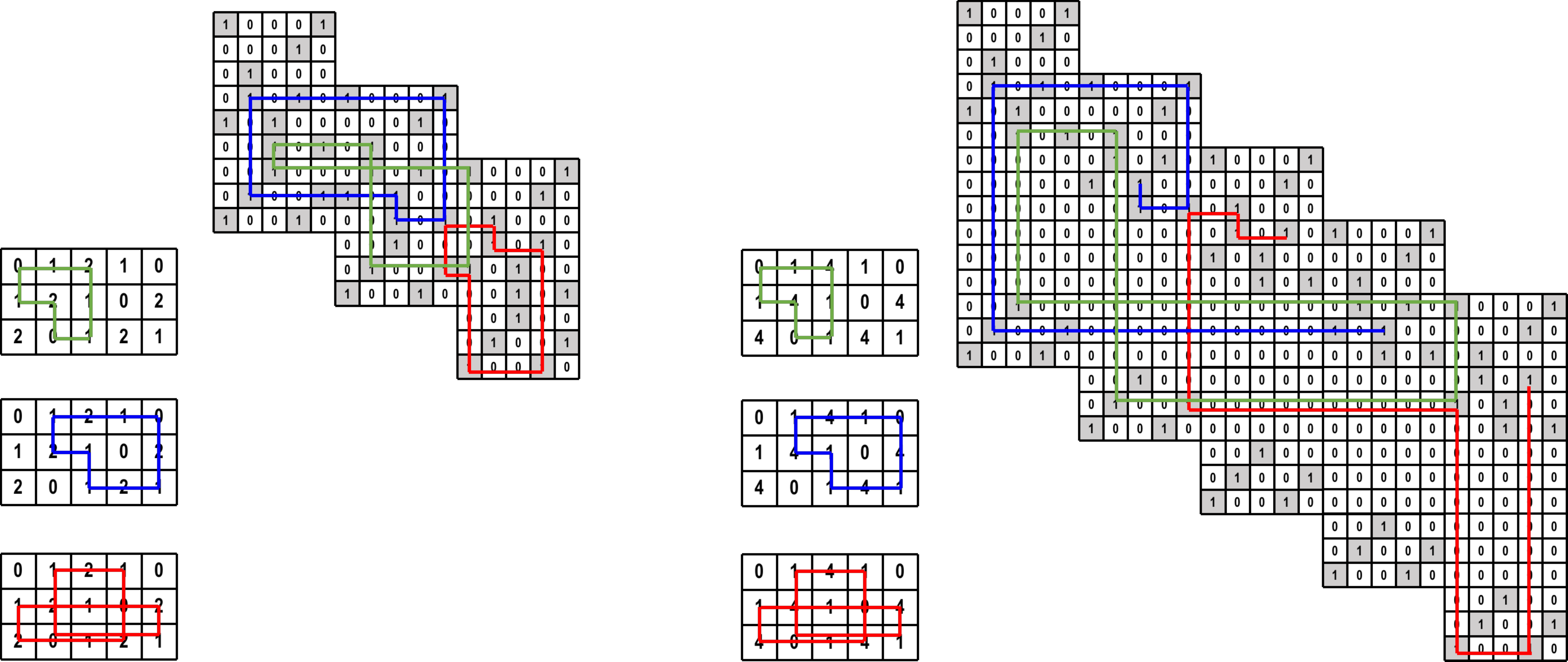}
  \caption{The first $5$ replicas of the protograph of a TC code with coupling pattern $(0,1,4)$ (the right panel), and the first $3$ replicas of the protograph of its corresponding SC code with memomry $2$ (the left panel). Three cycle candidates (colored by green, blue, and red, respectively) in the base matrix and their corresponding paths in the two protographs are marked out.}
  \label{fig: TC_cycles} 
\end{figure}

\section{Simulation Results}
\label{section: simulation}

In this section, we show the frame error rate/uncorrectable bit error rate (FER/UBER) curves of seven groups of SC codes designed by the GRADE-AO methods presented in \Cref{section: construction}. We demonstrate that codes constructed by the GRADE-AO methods offer significant performance gains compared with codes with uniform edge distributions and codes constructed through purely algorithmic methods.

\subsection{Optimization over Cycles}
\label{subsec: simu_cycle}

In this subsection, we simulate codes constructed based on optimizing the number of cycles using GRADE-AO specified in \Cref{section: framework} on the AWGN channel. Out of these three plots, Fig.~\ref{fig: FER_GD_UNF_3} and Fig.~\ref{fig: FER_GD_UNF_4_29} compare GD codes with UNF codes designed as in \Cref{subsec: heuristic AO}. Fig.~\ref{fig: FER_TC_SC_4_17} compares a TC code designed as in \Cref{subsec: globally-optimal AO} with optimal SC codes constructed through the OO-CPO method proposed in \cite{esfahanizadeh2018finite}. The GD/UNF codes have parameters $(\gamma,\kappa,m,z,L)=(3,7,5,13,100)$, $(3,17,9,7,100)$, and $(4,29,19,29,20)$, respectively. The TC code has parameters $(\gamma,\kappa,m_t,z,L)=(4,17,2,17,50)$ with the coupling pattern $\mathbf{a}=(0,1,4)$. For a fair comparison, we have selected two SC codes: one with a similar constraint length $(m+1)z$ and the other with an identical circulant power $z$. To ensure that the SC codes and the TC code have close rates and codelengths, the two SC codes have parameters $(\gamma,\kappa,m,z,L)=(4,17,2,28,30)$ and $(4,17,2,17,50)$, respectively. The statistics regarding the number of cycles of each code are presented in \Cref{table: cycle statistics}.

\begin{table}
\label{fig: partition_gamma4}
\centering
\caption{Statistics of the Number of Cycles}
\begin{tabular}{c|c|r|r}
\toprule
 $(\gamma,\kappa)$ & Code & Cycles-$6$ & Cycles-$8$ \\
\hline
\multirow{2}{*}{$(3,7)$} & GD & $0$ & $0$\\
 & UNF & $0$ & $6{,}292$\\
 \hline
\multirow{3}{*}{$(3,17)$} & GD & $0$ & $397{,}880$\\
 & UNF & $0$ & $559{,}902$ \\
 & Battaglioni \textit{et al.}\cite{battaglioni2017design}& $0$ & $451{,}337$\\
 \hline
\multirow{2}{*}{$(4,29)$} & GD & $0$ & $528{,}090$\\
 & UNF & $0$ & $1{,}087{,}268$\\
 \hline
\multirow{3}{*}{$(4,17)$} & TC & $15{,}436$ & -\\
 & SC (matched constraint length) & $19{,}180$ & -\\
 & SC (matched circulant size)& $74{,}579$ & -\\
\bottomrule
\end{tabular}
\label{table: cycle statistics}
\end{table}

Fig.~\ref{fig: FER_GD_UNF_3} shows FER curves of our GD/UNF comparisons with $(\gamma,\kappa)=(3,7)$ and $(3,17)$. The partitioning matrices and the lifting matrices of the codes are specified in Appendix \ref{append: AWGN_3_7} and Appendix \ref{append: AWGN_3_17}. When $\gamma=3$, cycles-$6$ are easily removed by the CPO. Therefore, we perform joint optimization on the number of cycles-$6$ and cycles-$8$ candidates by assigning different weights to cycle candidates in \Cref{algo: AO_cycle}. We observe a performance gain for the GD code with respect to the UNF code in both the waterfall region and the error floor region. Moreover, the number of cycles-$8$ in the $(3,17)$ GD code is reduced by $29\%$ and $12\%$ compared with the UNF code and the code constructed by Battaglioni \textit{et al.} in \cite{battaglioni2017design}, respectively. In addition, the $(3,17)$ GD code has no weight-$6$ absorbing sets (ASs) and $133$ weight-$7$ ASs, whereas the UNF code has $6$ weight-$6$ ASs and $361$ weight-$7$ ASs. As for the $(3,7)$ codes, all cycles-$6$ and cycles-$8$ are removed. Thus, the gain of the GD code compared with the UNF code exceeds the gain observed in the $(3,17)$ codes.

\begin{figure}
\centering
\includegraphics[width=0.45\textwidth]{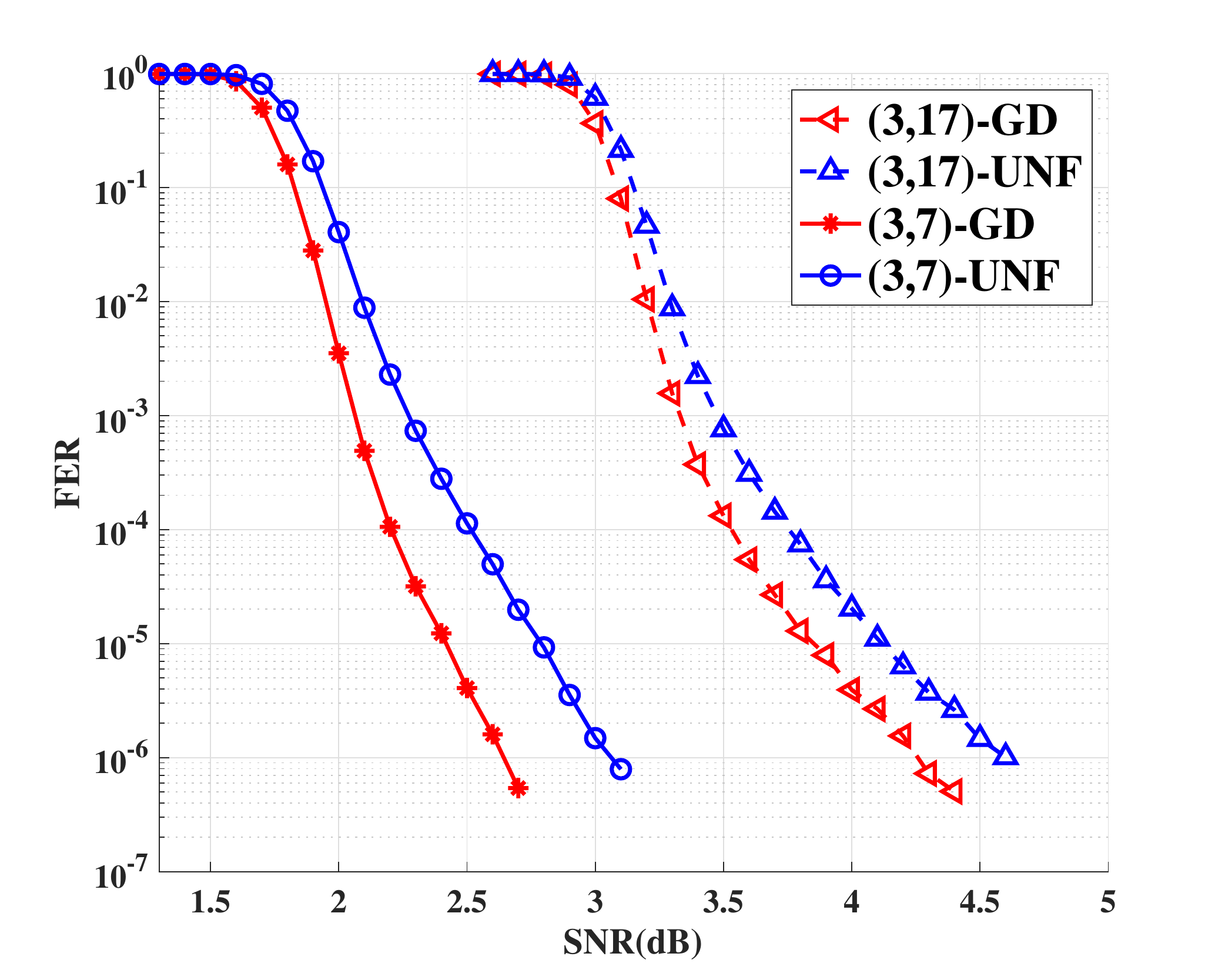}
\caption{FER curves of GD/UNF codes with $\gamma=3$ in the AWGN channel.}
\label{fig: FER_GD_UNF_3}
\end{figure}

Fig.~\ref{fig: FER_GD_UNF_4_29} shows FER curves of the GD/UNF comparison with $(\gamma,\kappa)=(4,29)$. The partitioning matrices and the lifting matrices of the codes are specified in Appendix \ref{append: AWGN_4_29}. Cycles-$6$ in the GD code and the UNF code are both removed, and the number of cycles-$8$ in the GD code demonstrates a $51.4\%$ reduction from the count observed in the UNF code. It is worth mentioning that both codes have no ASs of weights up to $8$, which is reflected in their FER curves via the sharp waterfall regions and the non-existing error floor regions. The FER of the GD/UNF codes decreases with a rate exceeding $12$ orders of magnitude per $0.5$ dB signal-to-noise ratio (SNR) increase. Moreover, the GD code has a significant gain of about $0.25$ dB over the UNF code. 

\begin{figure}
\centering
\includegraphics[width=0.45\textwidth]{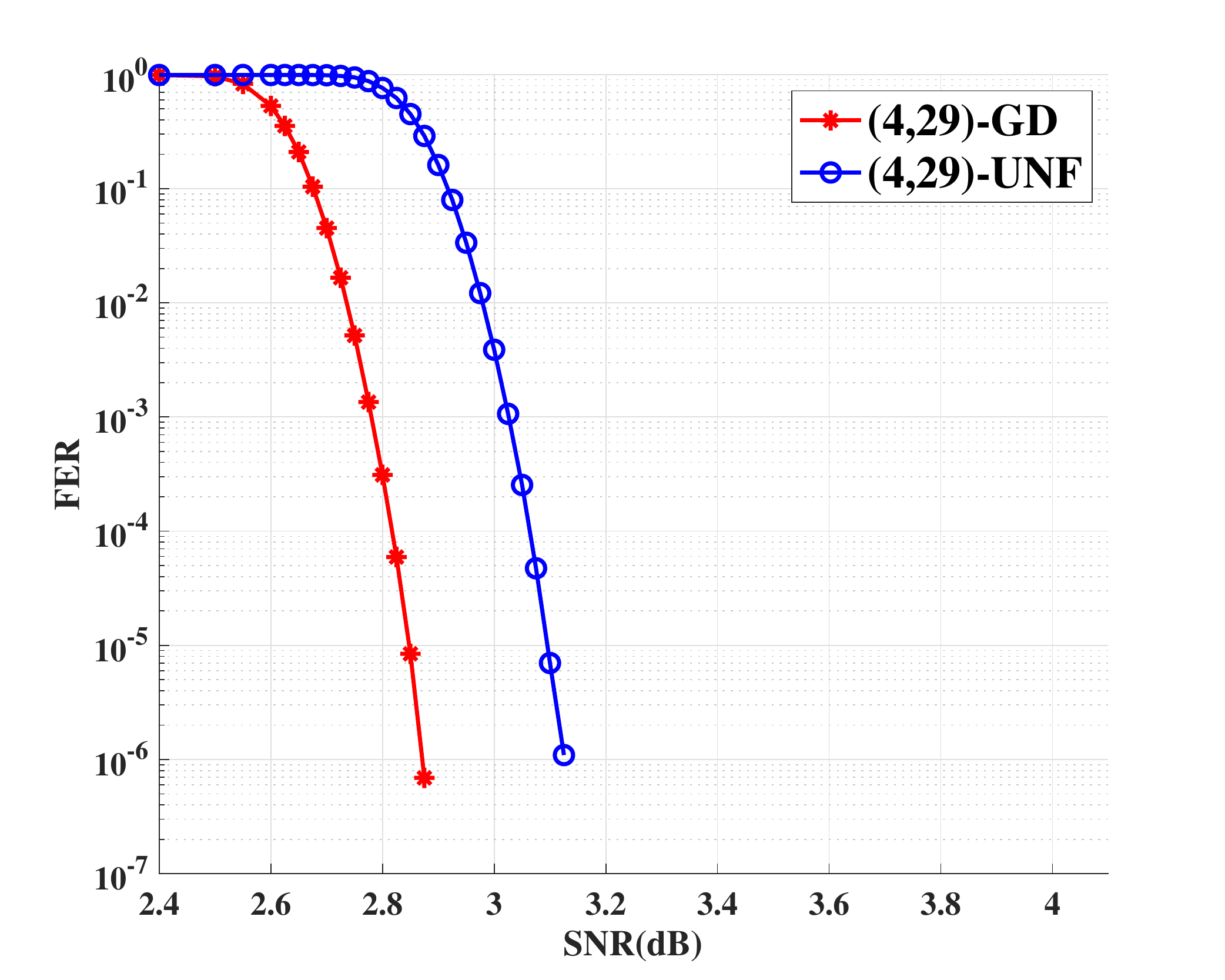}
\caption{FER curves of GD/UNF codes with $(\gamma,\kappa)=(4,29)$ in the AWGN channel.}
\label{fig: FER_GD_UNF_4_29}
\end{figure}

Fig.~\ref{fig: FER_TC_SC_4_17} shows the FER curves of the TC/SC codes with $(\gamma,\kappa)=(4,17)$. The partitioning matrices and the lifting matrices of the codes are specified in Appendix \ref{append: AWGN_4_17}. The number of cycles-$6$ in the $(4,17)$ TC code demonstrates a $79\%$ and a $20\%$ reduction from the counts observed in the SC codes with a matched constraint length and a matched circulant size, respectively. Moreover, the TC code has no weight-$6$ nor weight-$8$ ASs. It is shown that the TC code outperforms the optimal SC code with a matched constraint length, and that the gain is of greater magnitude when compared with the SC code of identical circulant size. 

\begin{rem} Note that although TC codes have higher memories and thus larger constraint lengths than SC codes of matched circulant sizes, they possess the same number of nonzero component matrices, and thus the same degrees of freedom in construction. This fact makes TC codes even more promising if we can devise for them windowed decoding algorithms with window sizes that are comparable to the corresponding SC codes of matched circulant sizes.
\end{rem}

\begin{figure}
\centering
\includegraphics[width=0.45\textwidth]{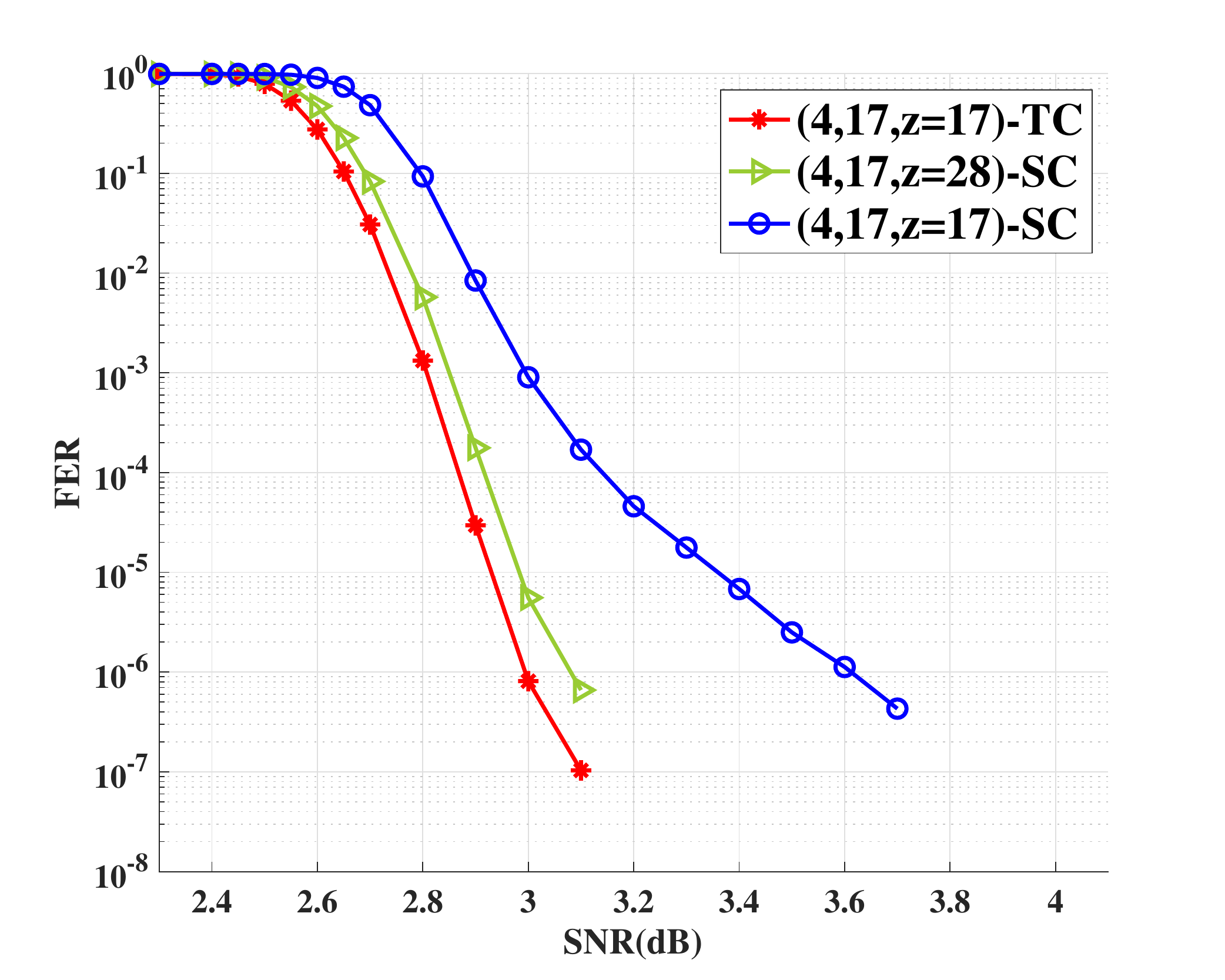}
\caption{FER curves of TC/SC codes with $(\gamma,\kappa)=(4,17)$ in the AWGN channel.}
\label{fig: FER_TC_SC_4_17}
\end{figure}

\subsection{Optimization over Concatenated Cycles}
\label{subsec: simu_concat_cycle}

In this subsection, we simulate codes constructed based on minimizing the number of concatenated-cycle pairs using the GRADE-AO specified in \Cref{section: generalization of GRADE} and \Cref{subsec: heuristic AO}. We compare GD/UNF codes with $m=6$ in addition to TC codes with $m_t=3$ and $\mathbf{a}=(0,1,4,6)$ on the binary symmetric channel (BSC), Flash channel, and magnetic recording channel. There are two groups of GD/TC/UNF codes that have parameters $(\gamma,\kappa,m,z,L)=(4,24,6,17,40)$ and $(4,20,6,13,20)$, respectively. The statistics regarding the number of cycles of each code are presented in \Cref{table: concat cycle statistics}. 

\begin{table}
\label{fig: partition_gamma4}
\centering
\caption{Statistics of the Number of Concatenated Cycles}
\begin{tabular}{c|c|r|r|r|r|r}
\toprule
 $(\gamma,\kappa)$ & Code & Cycles-$6$ & $2$-$1$-$2$ Objects & $2$-$2$-$2$ Objects & $2$-$1$-$3$ Objects & $3$-$1$-$3$ Objects \\
\hline
\multirow{3}{*}{$(4,24)$ (NLM)} & GD & $4{,}794$ & $1{,}751$ & $807{,}534$ & $1{,}162{,}035$ &$125{,}869{,}717$ \\
 & TC & $4{,}352$ & $4{,}301$ & $816{,}816$  &$1{,}125{,}400$  &$126{,}903{,}436$\\
 & UNF & $11{,}713$ & $7{,}293$ & $1{,}308{,}762$ &$2{,}615{,}297$ & $208{,}425{,}933$\\
 \hline
\multirow{3}{*}{$(4,24)$ (BSC)} & GD & $4{,}794$ & $1{,}802$ & $822{,}120$  &$1{,}212{,}100$ &$126{,}514{,}918$\\
 & TC & $4{,}403$ & $4{,}097$ & $807{,}534$ &$1{,}139{,}000$ &$126{,}434{,}525$\\ 
 & UNF & $11{,}713$ & $7{,}293$ & $1{,}308{,}762$ &$2{,}615{,}297$ & $208{,}425{,}933$\\
 \hline
\multirow{3}{*}{$(4,20)$} & GD & $2{,}171$ & $338$ & $178{,}334$ &$238{,}160$    &$19{,}638{,}372$\\
 & TC & $2{,}665$ & $1{,}404$ & $178{,}828$ &$262{,}509$ &$20{,}571{,}499$ \\
 & UNF & $2{,}444$ & $2{,}860$ & $287{,}014$ &$540{,}865$ &$34{,}362{,}393$\\
\bottomrule
\end{tabular}
\label{table: concat cycle statistics}
\end{table}

\begin{figure}
\centering
\includegraphics[width=0.45\textwidth]{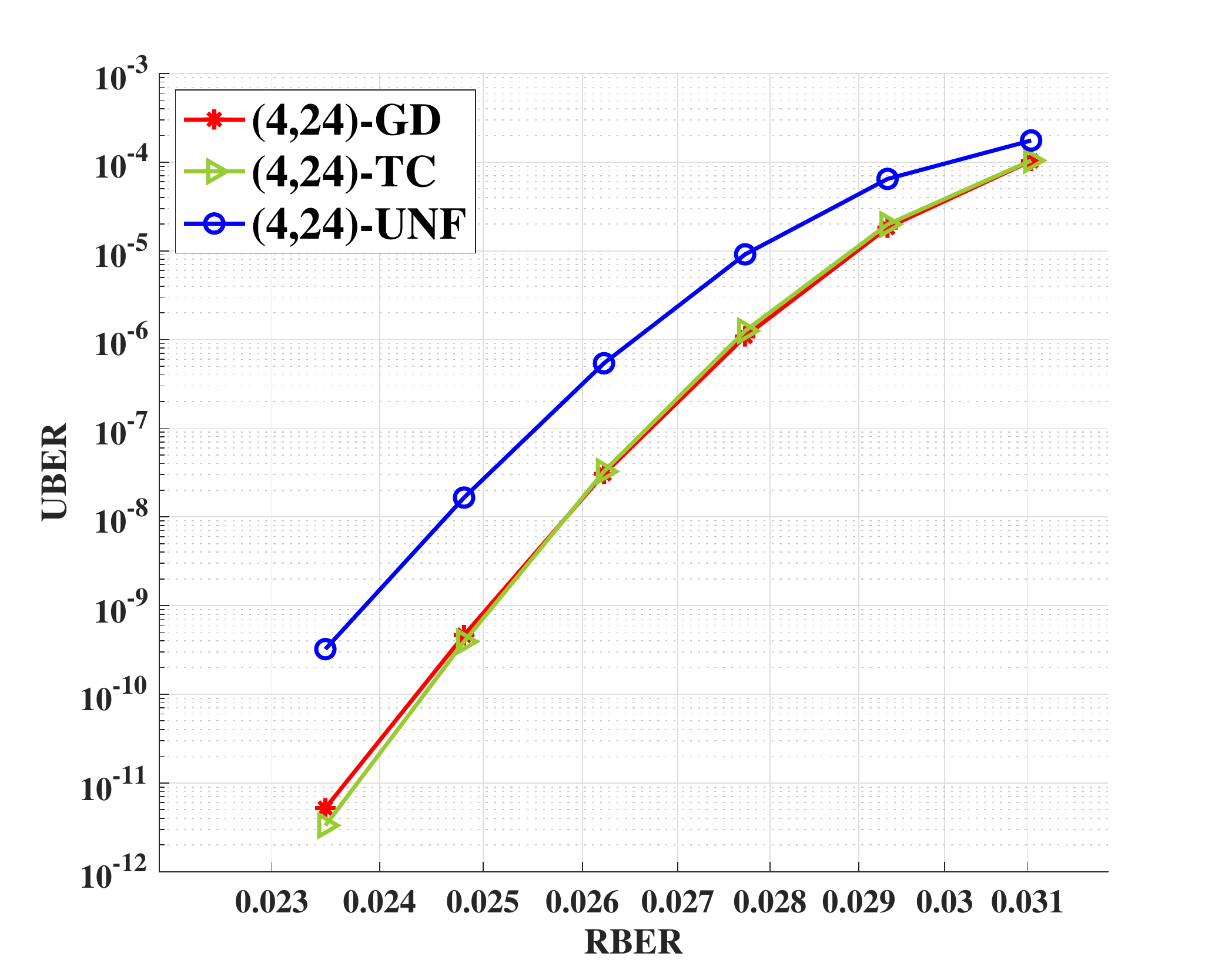}
\caption{UBER curves of GD/TC/UNF codes with $(\gamma,\kappa)=(4,24)$ in the NLM channel.}
\label{fig: FER_NLM_4_24}
\end{figure}

Fig.~\ref{fig: FER_NLM_4_24} shows UBER curves of the GD/TC/UNF codes with $(\gamma,\kappa)=(4,24)$ on a Flash channel.\footnote{Note that although we only provide simulation results on some typical channels for brevity in this paper, our approach is generally applicable to many other channels, such as the ones underlying three-dimensional cross point (3D XPoint) \cite{3DXpoint} and two-dimensional magnetic recording (TDMR) \cite{TDMR} systems.} The Flash channel used in this section is a practical, asymmetric Flash channel, which is the normal-Laplace mixture (NLM) Flash channel \cite{parnell2014modelling}. In the NLM channel, the threshold voltage distribution of sub-$20$nm multi-level cell (MLC) Flash memories is carefully modeled. The four levels are modeled as different NLM distributions, incorporating several sources of error due to wear-out effects, e.g., programming/erasing problems, thereby resulting in significant asymmetry. Furthermore, the authors of \cite{parnell2014modelling} provided accurate fitting results of their model for program/erase (P/E) cycles up to $10$ times the manufacturer's endurance specification (up to $30{,}000$ P/E cycles). We implemented the NLM channel based on the parameters described in \cite{parnell2014modelling}. Here, we use $3$ threshold voltage reads, and the sector size is $512$ bytes. For decoding, we use a finite-precision (FP) fast Fourier transform based $q$-ary sum-product algorithm (FFT-QSPA) LDPC decoder \cite{declercq2007decoding}. The decoder performs a maximum of $50$ iterations, and it stops if a codeword is reached sooner.

The partitioning matrices and the lifting matrices of the codes are specified in Appendix \ref{append: NLM_4_24}. The non-binary edge weights are set as in \cite{divsalar2014non} and \cite{bazarsky2013design}. The codes can be further optimized by applying the more advanced WCM framework presented in \cite{hareedy2016general} and \cite{hareedy2019combinatorial}. The first row of \Cref{table: concat cycle statistics} shows the statistics of unlabeled cycles and concatenated cycles in each code. The number of objects in GD/TC codes are reduced by around $40\%$ compared with the UNF code. No error floors are observed in any one of the UBER curves. The UBER of the GD/TC codes decreases with a rate exceeding $14$ orders of magnitude per $0.01$ RBER decrease. Moreover, the GD/TC codes have a significant gain of about $2$ orders of magnitude over the UNF code at RBER $0.0235$. It worths mentioning that the UBER curves of the GD/TC codes nearly overlap, which is in accordance with the closeness of the statistics of objects in them.

While NB codes are adopted in the simulations over the NLM channel, independent coding are more widely applied in practical Flash solutions in order to preserve high access speed. Therefore, we next present in Fig.~\ref{fig: FER_BSC_4_24} the UBER curves of the GD/TC/UNF codes with $(\gamma,\kappa)=(4,24)$ on the BSC, as a simplified model of single-level cell (SLC) channel with $1$ threshold voltage read. 

\begin{figure}
\centering
\includegraphics[width=0.45\textwidth]{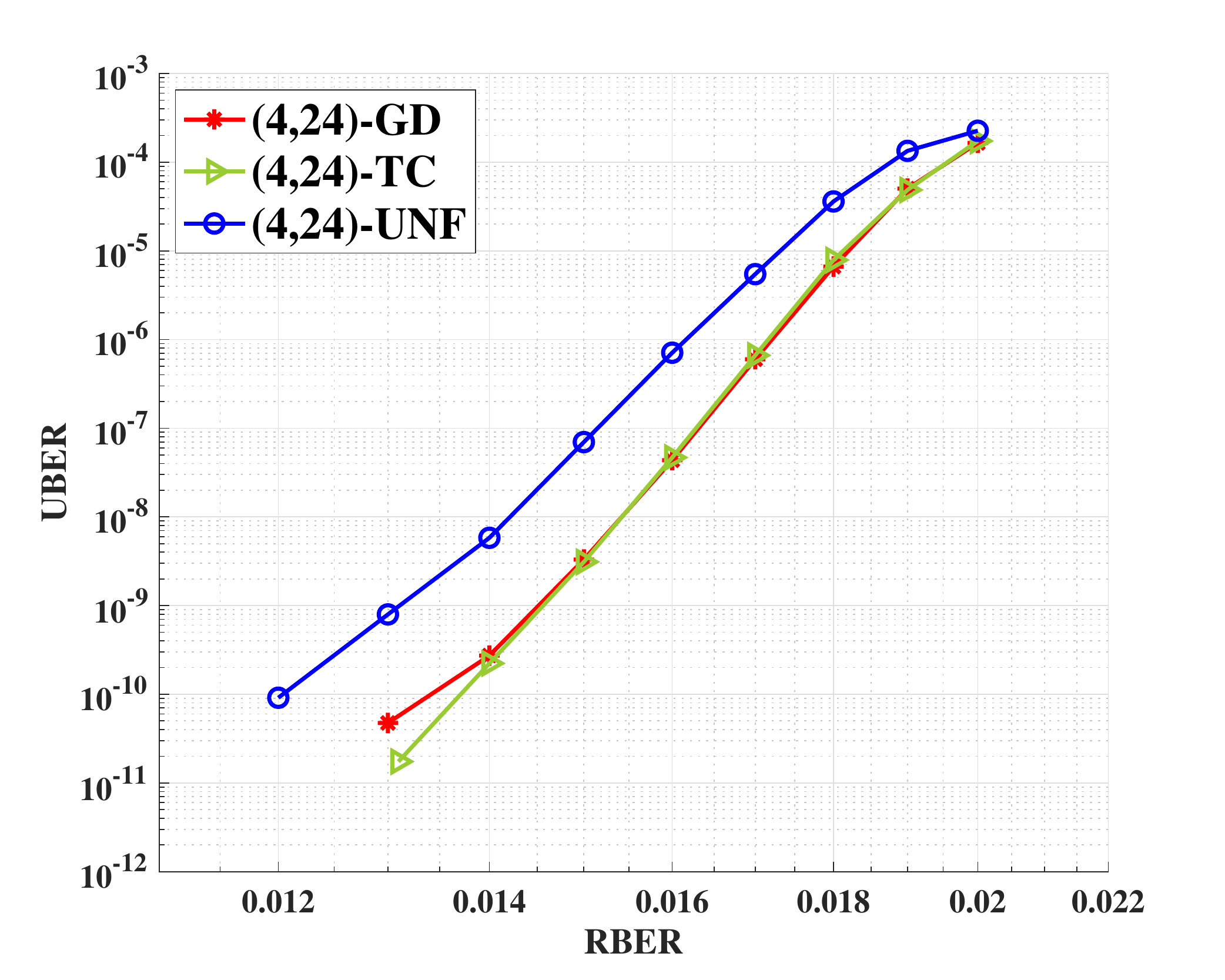}
\caption{UBER curves of GD/TC/UNF codes with $(\gamma,\kappa)=(4,24)$ in the BSC.}
\label{fig: FER_BSC_4_24}
\end{figure}

The partitioning matrices and the lifting matrices of the codes are specified in Appendix \ref{append: BSC_4_24}. While partitioning matrices of the $(4,24)$ NB codes constructed for the NLM channels are adopted as they are here, we have modified the lifting parameters slightly in order to remove all unlabeled ASs with weights less than or equal to $7$ in the GD/TC codes: this is achieved by changing one entry in each lifting matrix. According to \Cref{table: concat cycle statistics}, the number of objects in the GD/TC codes has not changed dramatically, and they still demonstrate a $40\%$ reduction compared with the count observed in the UNF code. As shown in Fig.~\ref{fig: FER_BSC_4_24}, the UBER curves of GD/TC codes are still close in the early waterfall region like they are in the NLM channel simulations; however, they start to deviate at RBER less than $0.014$. The TC code has no observed error floor in its performance curve as expected, and it has a $2$ orders of magnitude gain over the UNF code at RBER $0.013$. The GD code curve surprisingly floors despite that there are no ASs with weight less than $8$ in it: error profile analysis shows that the error floor at RBER $0.013$ results from only $2$ different large weight errors (one of weight $78$ and another of weight $168$) instead of structured small weight errors. 

\begin{rem} While the reason why the large weight errors observed in the error profile of the GD code are detrimental in the BSC simulations remains unexplored and is left for future investigation, the TC code is observed to be robust against these errors, which is specifically intriguing. Moreover, the fact that the waterfall performance of GD/TC codes is remarkably superior to that of UNF codes calls for an asymptotic analysis that takes edge distribution into consideration. The significant gain achieved by TC codes substantiates the potential of TC codes in Flash memories. 
\end{rem}

\begin{figure}
\centering
\includegraphics[width=0.45\textwidth]{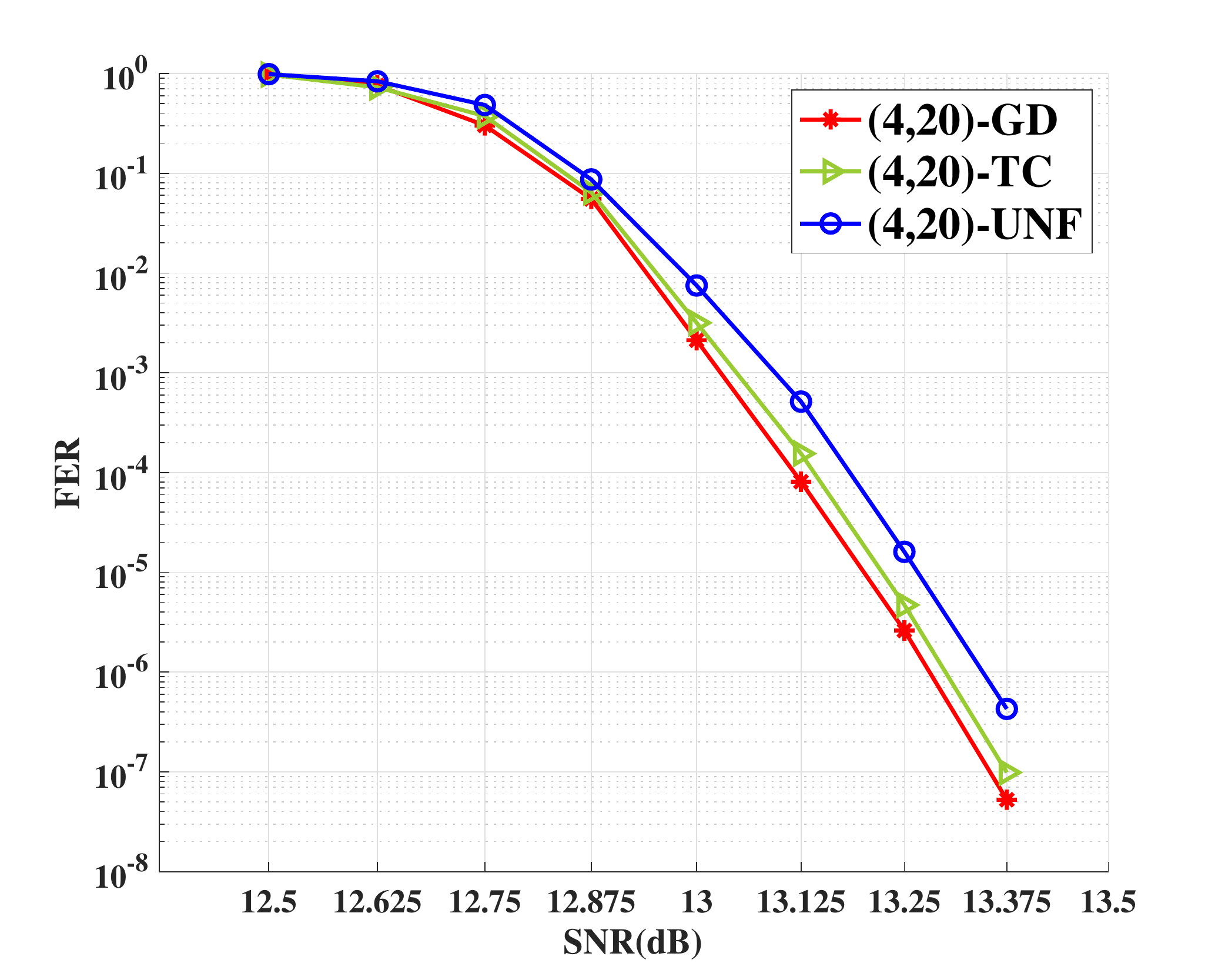}
\caption{FER curves of GD/TC/UNF codes with $(\gamma,\kappa)=(4,20)$ in the MR channel.}
\label{fig: FER_MR_4_20}
\end{figure}

Fig.~\ref{fig: FER_MR_4_20} shows FER curves of the GD/TC/UNF codes with $(\gamma,\kappa)=(4,20)$ on the MR channel. The MR system adopts the partial-response (PR) channel presented in \cite{hareedy2016nonbinary} and sequence detection. This PR channel incorporates the MR channel effects: inter-symbol interference (intrinsic memory), jitter, and electronic noise. The normalized channel density \cite{srinivasa2014communication,hareedy2016nonbinary} is $1.4$, and the PR equalization target is $\left(8,14,2\right)$. 
The filtering units are followed by a Bahl-Cocke-Jelinek-Raviv (BCJR) detector \cite{bcjr1974}, which is based on pattern-dependent noise prediction (PDNP) \cite{jaekyun2001pattern}, and again an FP FFT-QSPA ($q = 2$) LDPC decoder \cite{fossorier2004quasicyclic}. The number of global (detect-decoder) iterations is $10$, and the number of local (decoder only) iterations is $20$. Unless a codeword is reached, the decoder performs its prescribed number of local iterations for each global iteration. More details can be found in \cite{hareedy2016nonbinary}.

The partitioning matrices and the lifting matrices of the codes are specified in Appendix \ref{append: MR_4_20}. The number of targeted objects (concatenated-cycle pairs) observed in the GD code demonstrates an approximate $40\%$ reduction from the count observed in the UNF code. The FER of the GD/TC codes decreases with a rate that is approximately $13$ orders of magnitude per $1$ dB SNR increase. Moreover, the GD code has a significant gain of about $1$ dB over the UNF code at SNR $13.375$ dB. These results substantiate the remarkable impact of the GRADE-AO method in constructing SC codes with superior performance for storage devices, with potential usage in further applications including wireless communication systems.

\section{Conclusion}
\label{section: conclusion}
Discrete optimization of the constructions of spatially-coupled (SC) codes with high memories is known to be computationally expensive. Heuristic algorithms are efficient, but can hardly guarantee the performance because of the lack of theoretical guidance. In this paper, we proposed the so-called GRADE-AO method, a probabilistic framework that efficiently searches for locally optimal QC-SC codes with arbitrary memories. We obtained a locally optimal edge distribution that minimizes the expected number of the most detrimental objects via gradient descent. Starting from a random partitioning matrix with the derived edge distribution, we then applied a semi-greedy algorithm to find a locally optimal partitioning matrix near it. While the application of GRADE-AO in optimizing the number of short cycles has shown noticeable gains, we focused in this paper on minimizing the number of more detrimental objects, the concatenated cycles. This finer-grained optimization avoids unnecessary attention on individual cycles which are typically not problematic on their own, especially in codes with high VN degrees and irregular codes. Simulation results show that our proposed constructions have a significant performance gain over state-of-the-art codes; this gain is shown to be universal in both waterfall and error floor regions, as well as on channels underlying various practical systems. Future work includes extending the framework to other classes of underlying block codes.

\section*{Acknowledgment} 
The authors would like to thank Shyam Venkatasubramaian for his assistance in carrying out part of the simulations in this research, and would also like to thank Christopher Cannella and Arnab Kar for useful discussions regarding probabilistic optimization on SC code constructions while being at Duke University. 
\bibliography{ref}
\bibliographystyle{IEEEtran}

\newpage
\appendices

\section{Partitioning Matrices and Lifting Matrices for $(3,7)$ Codes on AWGN Channel}
\label{append: AWGN_3_7}

\begin{figure}[H]
\centering
  \subfigure[GD code.]{%
       \includegraphics[width=0.35\textwidth]{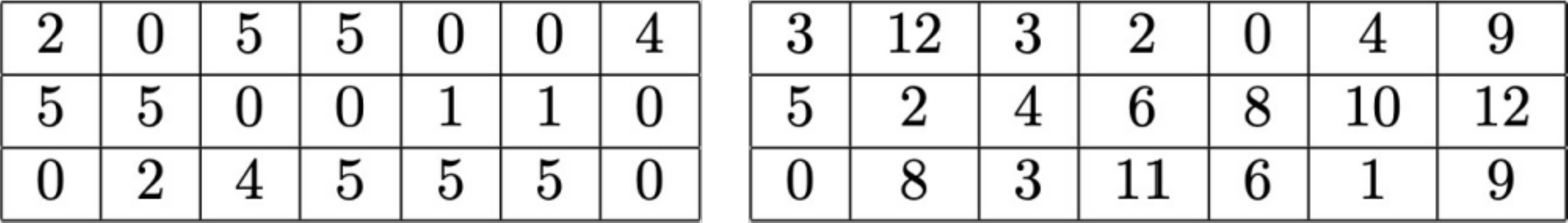}
       \label{1a}
       }
  \subfigure[UNF code.]{%
        \includegraphics[width=0.35\textwidth]{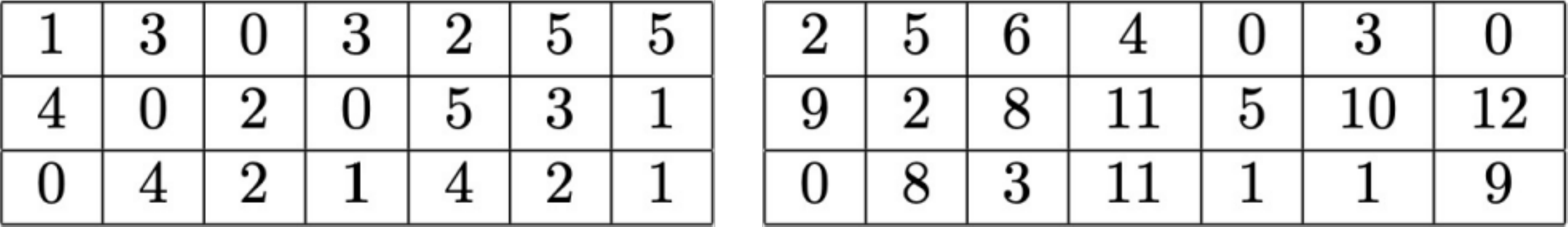}
        \label{1b}
        }
  \caption{Partitioning matrices (left) and lifting matrices (right) of GD/UNF codes with $(\gamma,\kappa,m,z,L)=(3,7,5,13,100)$.}
  \label{fig: code_GD_UNF_3_7} 
\end{figure}

\section{Partitioning Matrices and Lifting Matrices for $(3,17)$ Codes on AWGN Channel}
\label{append: AWGN_3_17}

\begin{figure}[H]
\centering
  \subfigure[GD code.]{%
       \includegraphics[width=0.38\textwidth]{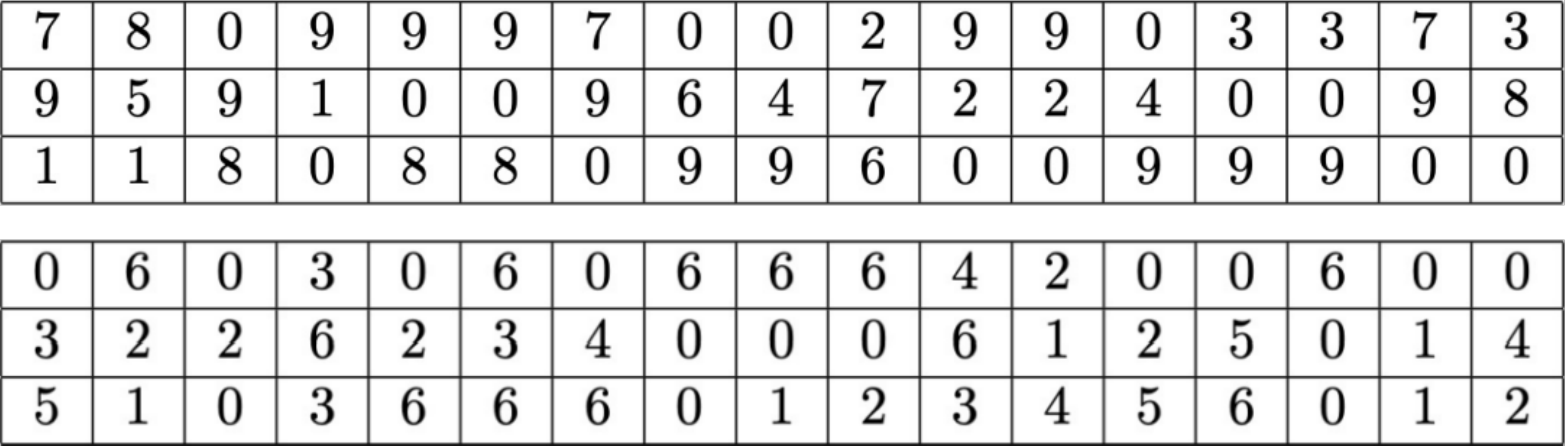}
       \label{1a}
       }
  \subfigure[UNF code.]{%
        \includegraphics[width=0.38\textwidth]{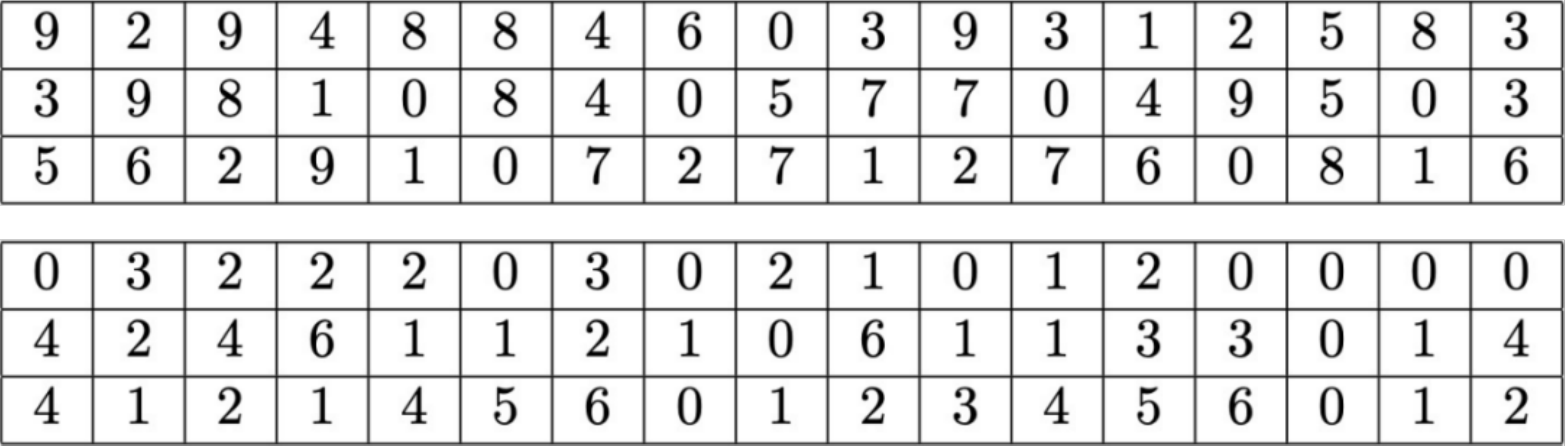}
        \label{1b}
        }
  \caption{Partitioning matrices (top) and lifting matrices (bottom) of GD/UNF codes with $(\gamma,\kappa,m,z,L)=(3,17,9,7,100)$.}
  \label{fig: code_GD_UNF_3_17} 
\end{figure}

\section{Partitioning Matrices and Lifting Matrices for $(4,29)$ Codes  on AWGN Channel}
\label{append: AWGN_4_29}

\begin{figure}[H]
\centering
\resizebox{0.87\textwidth}{!}{\begin{tabular}{|c|c|c|c|c|c|c|c|c|c|c|c|c|c|c|c|c|c|c|c|c|c|c|c|c|c|c|c|c|}
\hline
0&0&0&19&17&17&1&11&13&5&18&10&19&13&1&6&19&8&19&0&19&0&0&0&2&17&6&19&4\\
\hline
19&18&18&2&0&19&19&5&3&19&9&2&9&9&3&17&6&0&2&16&12&13&8&18&16&0&17&10&0\\
\hline
1&14&3&16&7&1&4&19&5&0&0&16&0&0&7&19&10&19&16&18&3&18&15&3&19&8&19&1&15\\
\hline
16&0&14&1&11&2&15&2&19&16&18&0&19&19&19&0&0&5&1&0&9&4&19&14&7&12&0&19&1\\
\hline
\end{tabular}}
\vspace{5pt}

\resizebox{0.87\textwidth}{!}{\begin{tabular}{|c|c|c|c|c|c|c|c|c|c|c|c|c|c|c|c|c|c|c|c|c|c|c|c|c|c|c|c|c|}
\hline
7&1&18&21&5&4&17&0&6&16&26&8&13&7&5&6&9&2&0&0&0&0&0&5&0&4&19&0&0\\
\hline
3&15&4&1&5&3&12&19&10&21&19&3&4&19&28&1&3&5&12&18&11&10&15&17&19&21&18&25&27\\
\hline
0&8&16&24&3&11&20&20&6&14&22&1&9&2&25&21&12&7&28&6&15&23&19&10&0&1&4&13&21\\
\hline
0&18&7&25&1&3&21&10&28&17&6&24&13&2&20&9&27&16&5&23&12&1&19&8&26&15&4&22&11\\
\hline
\end{tabular}}

\caption{Partitioning matrix (top) and lifting matrix (bottom) for GD code with $(\gamma,\kappa,m,z,L)=(4,29,19,29,20)$.}
  \label{fig: code_GD_4_29} 
\end{figure}

\begin{figure}[H]
\centering
\resizebox{0.87\textwidth}{!}{\begin{tabular}{|c|c|c|c|c|c|c|c|c|c|c|c|c|c|c|c|c|c|c|c|c|c|c|c|c|c|c|c|c|}
\hline
0&17&3&13&1&14&4&12&4&15&10&2&17&2&18&11&17&15&11&3&13&12&13&6&2&5&14&13&14\\
\hline
8&0&19&19&18&8&5&18&13&6&11&3&2&4&11&3&9&15&16&7&7&12&19&16&4&9&0&13&3\\
\hline
14&6&12&10&12&1&17&9&7&5&16&19&1&15&5&19&6&5&15&7&0&2&3&10&15&9&6&7&11\\
\hline
17&14&0&2&9&18&12&1&8&11&4&4&7&10&1&8&14&8&0&16&17&16&1&0&10&18&18&10&8\\
\hline
\end{tabular}}
\vspace{5pt}

\resizebox{0.87\textwidth}{!}{\begin{tabular}{|c|c|c|c|c|c|c|c|c|c|c|c|c|c|c|c|c|c|c|c|c|c|c|c|c|c|c|c|c|}
\hline
12&1&1&7&14&27&4&26&25&2&0&6&15&7&24&1&1&6&17&5&13&19&2&0&11&0&0&0&1\\
\hline
5&2&4&22&8&5&23&1&4&18&28&1&19&17&22&6&3&3&14&9&11&13&15&3&0&2&3&25&27\\
\hline
23&8&2&24&3&7&1&27&6&14&21&12&9&17&5&4&12&20&28&7&7&13&2&25&18&26&5&13&21\\
\hline
28&18&7&4&14&3&21&10&28&17&6&24&13&2&9&8&1&26&5&23&12&1&19&8&26&15&4&22&11\\
\hline
\end{tabular}}

\caption{Partitioning matrix (top) and lifting matrix (bottom) for UNF code with $(\gamma,\kappa,m,z,L)=(4,29,19,29,20)$.}
  \label{fig: code_UNF_4_29} 
\end{figure}

\section{Partitioning Matrices and Lifting Matrices for $(4,17)$ Codes on AWGN Channel}
\label{append: AWGN_4_17}

\begin{figure}[H]
\centering
  \subfigure[TC code with $(z,L)=(4,17,2,17,50)$ and $\mathbf{a}=(0,1,4)$.]{%
        \includegraphics[width=0.46\textwidth]{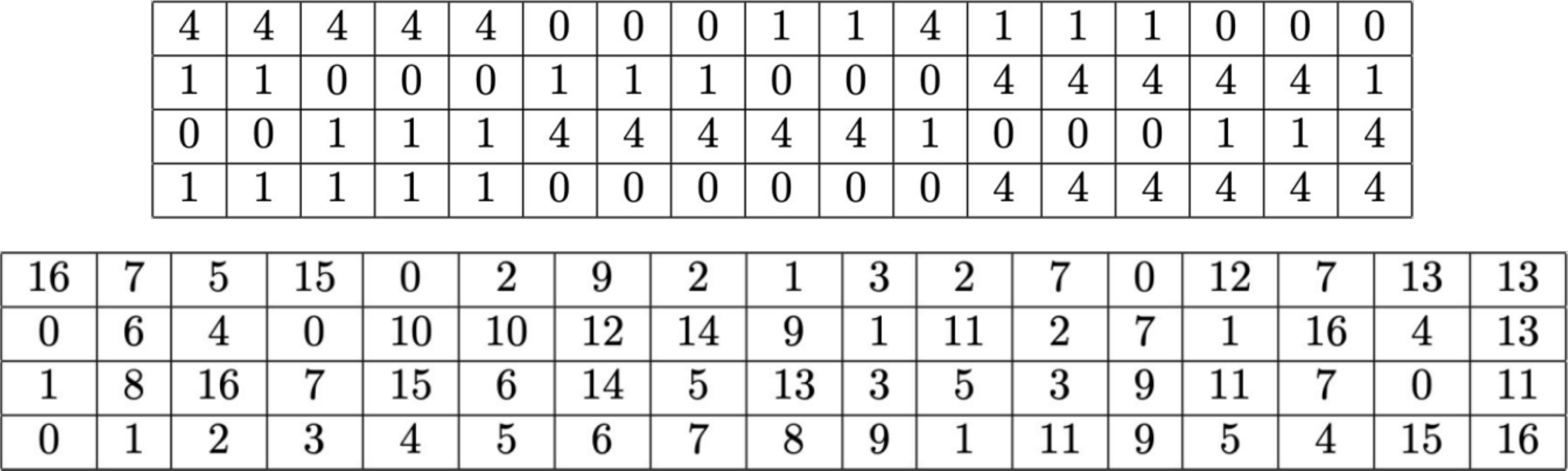}
        \label{1b}
        }
  \subfigure[SC codes with $(z,L)=(17,50)$ (middle) and $(z,L)=(28,30)$ (bottom).]{%
        \includegraphics[width=0.5\textwidth]{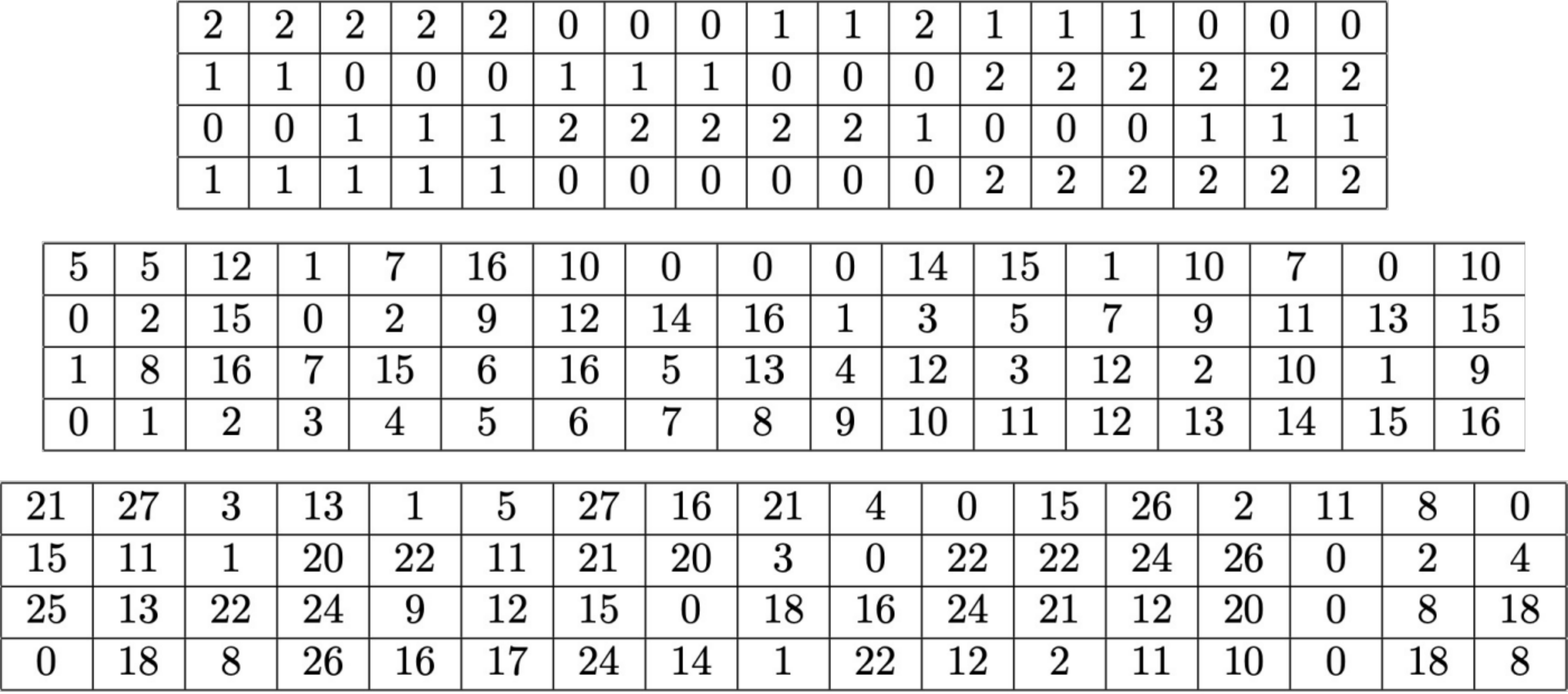}
        \label{1d}
        }
  \caption{Partitioning matrices (top) and lifting matrices (bottom) for TC/SC codes with $(\gamma,\kappa,m_t)=(4,17,2)$.}
  \label{fig: code_TC_SC_4_17} 
\end{figure}

\section{Partitioning Matrices and Lifting Matrices for $(4,24)$ Codes in Simulations on BSC}
\label{append: BSC_4_24}

\begin{figure}[H]
\centering
\resizebox{0.63\textwidth}{!}{\begin{tabular}{|c|c|c|c|c|c|c|c|c|c|c|c|c|c|c|c|c|c|c|c|c|c|c|c|c|c|c|c|c|}
\hline
0&6&2&0&6&1&0&1&6&6&6&6&0&0&0&4&5&4&5&0&0&5&0&1\\
\hline
0&0&6&2&6&6&6&0&1&1&5&6&4&6&0&0&1&6&6&0&6&1&4&0\\
\hline
3&3&6&6&0&6&1&5&3&0&0&1&2&6&6&6&2&0&0&6&4&6&0&6\\
\hline
6&5&1&5&2&0&3&5&0&3&0&0&6&0&6&3&6&2&0&6&0&0&6&4\\
\hline
\end{tabular}}
\vspace{5pt}

\resizebox{0.75\textwidth}{!}{\begin{tabular}{|c|c|c|c|c|c|c|c|c|c|c|c|c|c|c|c|c|c|c|c|c|c|c|c|c|c|c|c|c|}
\hline
11&15&1&7&8&11&14&1&5&6&16&9&12&0&5&13&1&0&3&5&15&0&0&1\\
\hline
9&2&8&6&8&10&10&3&5&8&5&5&7&12&9&14&15&0&2&4&6&8&10&12\\
\hline
0&2&6&7&15&12&4&5&13&4&7&8&11&2&1&1&8&0&8&16&9&15&11&5\\
\hline
0&1&2&3&3&5&6&7&14&8&14&11&12&13&14&15&16&0&1&2&3&4&11&6\\
\hline
\end{tabular}}

\caption{Partitioning matrix (top) and lifting matrix (bottom) for GD Code with $(\gamma,\kappa,m,z,L)=(4,24,6,17,40)$.}
  \label{fig: BSC_GD_4_24} 
\end{figure}

\begin{figure}[H]
\centering
\resizebox{0.63\textwidth}{!}{\begin{tabular}{|c|c|c|c|c|c|c|c|c|c|c|c|c|c|c|c|c|c|c|c|c|c|c|c|c|c|c|c|c|}
\hline
0&1&4&6&6&1&0&1&4&6&4&6&6&6&0&6&1&1&1&0&4&0&1&0\\
\hline
1&6&0&6&4&6&6&0&1&6&1&6&0&0&6&0&0&6&6&4&0&1&0&4\\
\hline
6&6&6&0&0&0&1&4&0&0&6&1&1&4&6&4&6&4&0&4&6&6&4&0\\
\hline
4&0&1&1&1&6&4&6&6&0&0&1&4&4&0&1&6&0&4&6&4&4&6&6\\
\hline
\end{tabular}}
\vspace{5pt}

\resizebox{0.75\textwidth}{!}{\begin{tabular}{|c|c|c|c|c|c|c|c|c|c|c|c|c|c|c|c|c|c|c|c|c|c|c|c|c|c|c|c|c|}
\hline
13&4&1&7&1&11&3&6&6&6&1&12&15&14&5&4&0&12&0&0&0&0&0&0\\
\hline
5&11&4&9&13&15&0&14&11&1&7&8&7&9&9&12&2&0&2&4&6&14&4&16\\
\hline
0&8&16&1&15&6&14&5&14&4&9&3&11&2&10&1&9&0&8&13&12&15&6&11\\
\hline
0&9&2&3&4&5&6&2&8&9&8&4&12&13&14&15&0&9&15&13&3&2&8&4\\
\hline
\end{tabular}}

\caption{Partitioning matrix (top) and lifting matrix (bottom) for TC code with $(\gamma,\kappa,m,z,L)=(4,24,6,17,40)$.}
  \label{fig: BSC_TC_4_24} 
\end{figure}

\begin{figure}[H]
\centering
\resizebox{0.63\textwidth}{!}{\begin{tabular}{|c|c|c|c|c|c|c|c|c|c|c|c|c|c|c|c|c|c|c|c|c|c|c|c|c|c|c|c|c|}
\hline
0&5&6&5&0&6&2&2&5&1&2&6&2&0&6&2&2&3&2&3&0&3&1&1\\
\hline
5&4&0&5&0&2&3&1&0&0&4&0&1&4&3&6&2&4&6&4&6&6&4&3\\
\hline
1&1&5&0&6&4&6&5&1&5&1&4&2&5&0&3&1&3&2&1&5&3&4&3\\
\hline
6&1&2&2&4&4&1&5&6&6&3&1&5&0&0&4&5&1&4&3&0&0&3&6\\
\hline
\end{tabular}}
\vspace{5pt}

\resizebox{0.75\textwidth}{!}{\begin{tabular}{|c|c|c|c|c|c|c|c|c|c|c|c|c|c|c|c|c|c|c|c|c|c|c|c|c|c|c|c|c|}
\hline
8&16&7&7&0&9&6&14&6&14&1&13&14&8&0&0&4&6&0&0&0&0&3&16\\
\hline
3&2&6&6&13&10&12&14&10&3&3&11&7&9&11&7&15&0&13&4&4&8&10&12\\
\hline
2&8&16&1&15&6&14&5&13&2&12&3&11&2&13&1&9&0&8&16&7&15&6&14\\
\hline
1&1&2&3&13&5&11&7&8&9&10&11&12&13&14&15&16&0&1&2&3&4&5&6\\
\hline
\end{tabular}}

\caption{Partitioning matrix (top) and lifting matrix (bottom) for UNF code with $(\gamma,\kappa,m,z,L)=(4,24,6,17,40)$.}
  \label{fig: BSC_UNF_4_24} 
\end{figure}

\section{Partitioning Matrices and Lifting Matrices for $(4,24)$ Codes in Simulations on NLM Channel}
\label{append: NLM_4_24}

\begin{figure}[H]
\centering
\resizebox{0.63\textwidth}{!}{\begin{tabular}{|c|c|c|c|c|c|c|c|c|c|c|c|c|c|c|c|c|c|c|c|c|c|c|c|c|c|c|c|c|}
\hline
0&6&2&0&6&1&0&1&6&6&6&6&0&0&0&4&5&4&5&0&0&5&0&1\\
\hline
0&0&6&2&6&6&6&0&1&1&5&6&4&6&0&0&1&6&6&0&6&1&4&0\\
\hline
3&3&6&6&0&6&1&5&3&0&0&1&2&6&6&6&2&0&0&6&4&6&0&6\\
\hline
6&5&1&5&2&0&3&5&0&3&0&0&6&0&6&3&6&2&0&6&0&0&6&4\\
\hline
\end{tabular}}
\vspace{5pt}

\resizebox{0.75\textwidth}{!}{\begin{tabular}{|c|c|c|c|c|c|c|c|c|c|c|c|c|c|c|c|c|c|c|c|c|c|c|c|c|c|c|c|c|}
\hline
16&15&1&7&8&11&14&1&5&6&16&9&12&0&5&13&1&0&3&5&15&0&0&1\\
\hline
9&2&8&6&8&10&10&3&5&8&5&5&7&12&9&14&15&0&2&4&6&8&10&12\\
\hline
0&2&6&7&15&12&4&5&13&4&7&8&11&2&1&1&8&0&8&16&9&15&11&5\\
\hline
0&1&2&3&3&5&6&7&14&8&14&11&12&13&14&15&16&0&1&2&3&4&11&6\\
\hline
\end{tabular}}

\caption{Partitioning matrix (top) and lifting matrix (bottom) for GD code with $(\gamma,\kappa,m,z,L)=(4,24,6,17,40)$.}
  \label{fig: NLM_GD_4_24} 
\end{figure}

\begin{figure}[H]
\centering
\resizebox{0.63\textwidth}{!}{\begin{tabular}{|c|c|c|c|c|c|c|c|c|c|c|c|c|c|c|c|c|c|c|c|c|c|c|c|c|c|c|c|c|}
\hline
0&1&4&6&6&1&0&1&4&6&4&6&6&6&0&6&1&1&1&0&4&0&1&0\\
\hline
1&6&0&6&4&6&6&0&1&6&1&6&0&0&6&0&0&6&6&4&0&1&0&4\\
\hline
6&6&6&0&0&0&1&4&0&0&6&1&1&4&6&4&6&4&0&4&6&6&4&0\\
\hline
4&0&1&1&1&6&4&6&6&0&0&1&4&4&0&1&6&0&4&6&4&4&6&6\\
\hline
\end{tabular}}
\vspace{5pt}

\resizebox{0.75\textwidth}{!}{\begin{tabular}{|c|c|c|c|c|c|c|c|c|c|c|c|c|c|c|c|c|c|c|c|c|c|c|c|c|c|c|c|c|}
\hline
13&4&1&7&1&11&3&6&6&6&1&12&15&14&5&4&0&12&0&0&0&0&0&0\\
\hline
5&11&4&9&13&15&0&14&11&1&7&8&7&9&9&12&2&0&2&4&6&14&4&16\\
\hline
0&8&16&1&15&6&14&5&14&4&9&3&11&2&10&1&9&0&8&13&12&15&6&11\\
\hline
0&5&2&3&4&5&6&2&8&9&8&4&12&13&14&15&0&9&15&13&3&2&8&4\\
\hline
\end{tabular}}

\caption{Partitioning matrix (top) and lifting matrix (bottom) for TC code with $(\gamma,\kappa,m,z,L)=(4,24,6,17,40)$.}
  \label{fig: NLM_TC_4_24} 
\end{figure}

\begin{figure}[H]
\centering
\resizebox{0.63\textwidth}{!}{\begin{tabular}{|c|c|c|c|c|c|c|c|c|c|c|c|c|c|c|c|c|c|c|c|c|c|c|c|c|c|c|c|c|}
\hline
0&5&6&5&0&6&2&2&5&1&2&6&2&0&6&2&2&3&2&3&0&3&1&1\\
\hline
5&4&0&5&0&2&3&1&0&0&4&0&1&4&3&6&2&4&6&4&6&6&4&3\\
\hline
1&1&5&0&6&4&6&5&1&5&1&4&2&5&0&3&1&3&2&1&5&3&4&3\\
\hline
6&1&2&2&4&4&1&5&6&6&3&1&5&0&0&4&5&1&4&3&0&0&3&6\\
\hline
\end{tabular}}
\vspace{5pt}

\resizebox{0.75\textwidth}{!}{\begin{tabular}{|c|c|c|c|c|c|c|c|c|c|c|c|c|c|c|c|c|c|c|c|c|c|c|c|c|c|c|c|c|}
\hline
8&16&7&7&0&9&6&14&6&14&1&13&14&8&0&0&4&6&0&0&0&0&3&16\\
\hline
3&2&6&6&13&10&12&14&10&3&3&11&7&9&11&7&15&0&13&4&4&8&10&12\\
\hline
2&8&16&1&15&6&14&5&13&2&12&3&11&2&13&1&9&0&8&16&7&15&6&14\\
\hline
1&1&2&3&13&5&11&7&8&9&10&11&12&13&14&15&16&0&1&2&3&4&5&6\\
\hline
\end{tabular}}

\caption{Partitioning matrix (top) and lifting matrix (bottom) for UNF code with $(\gamma,\kappa,m,z,L)=(4,24,6,17,40)$.}
  \label{fig: NLM_UNF_4_24} 
\end{figure}

\section{Partitioning Matrices and Lifting Matrices for $(4,20)$ Codes on MR Channel}
\label{append: MR_4_20}

\begin{figure}[H]
\centering
\resizebox{0.52\textwidth}{!}{\begin{tabular}{|c|c|c|c|c|c|c|c|c|c|c|c|c|c|c|c|c|c|c|c|c|c|c|c|c|c|c|c|c|}
\hline
0&0&3&6&0&1&2&2&0&6&6&0&6&5&6&0&6&0&5&6\\
\hline
3&0&1&0&5&0&6&3&2&0&6&3&1&6&6&6&0&6&6&5\\
\hline
6&6&0&1&1&6&1&6&5&0&0&4&5&0&0&4&0&5&2&3\\
\hline
1&6&6&4&6&4&6&0&6&4&0&6&0&1&0&0&6&2&0&0\\
\hline
\end{tabular}}
\vspace{5pt}

\resizebox{0.6\textwidth}{!}{\begin{tabular}{|c|c|c|c|c|c|c|c|c|c|c|c|c|c|c|c|c|c|c|c|c|c|c|c|c|c|c|c|c|}
\hline
3&5&4&8&3&11&0&2&0&1&0&9&6&9&0&0&2&2&1&1\\
\hline
7&2&6&6&8&4&8&1&2&6&6&4&6&7&7&7&6&8&5&10\\
\hline
0&7&3&11&0&7&9&1&12&0&2&10&5&0&8&3&11&7&1&9\\
\hline
0&5&10&2&7&12&4&11&1&6&11&5&8&7&5&10&2&7&12&4\\
\hline
\end{tabular}}

\caption{Partitioning matrix (top) and lifting matrix (bottom) for GD code with $(\gamma,\kappa,m,z,L)=(4,20,6,13,20)$.}
  \label{fig: MR_GD_4_20} 
\end{figure}

\begin{figure}[H]
\centering
\resizebox{0.52\textwidth}{!}{\begin{tabular}{|c|c|c|c|c|c|c|c|c|c|c|c|c|c|c|c|c|c|c|c|c|c|c|c|c|c|c|c|c|}
\hline
6&0&0&6&0&0&4&6&0&1&6&6&6&1&6&6&0&0&1&1\\
\hline
1&6&4&4&4&1&1&6&4&6&0&0&0&6&0&4&6&0&4&6\\
\hline
4&0&6&4&1&6&1&1&4&4&4&6&0&0&1&0&1&6&6&4\\
\hline
0&6&4&0&6&4&6&0&6&0&1&1&4&4&6&1&6&6&0&0\\
\hline
\end{tabular}}
\vspace{5pt}

\resizebox{0.63\textwidth}{!}{\begin{tabular}{|c|c|c|c|c|c|c|c|c|c|c|c|c|c|c|c|c|c|c|c|c|c|c|c|c|c|c|c|c|}
\hline
1&2&8&5&11&3&0&1&10&4&1&12&10&0&10&3&0&6&0&0\\
\hline
8&7&4&12&5&10&12&1&3&5&10&10&4&8&2&7&6&12&10&12\\
\hline
0&11&3&11&0&1&9&4&12&7&8&4&9&10&8&3&11&2&1&10\\
\hline
0&12&10&2&7&12&4&9&1&6&4&10&11&0&5&9&2&7&12&4\\
\hline
\end{tabular}}

\caption{Partitioning matrix (top) and lifting matrix (bottom) for TC code with $(\gamma,\kappa,m,z,L)=(4,20,6,13,20)$.}
  \label{fig: MR_TC_4_20} 
\end{figure}

\begin{figure}[H]
\centering
\resizebox{0.52\textwidth}{!}{\begin{tabular}{|c|c|c|c|c|c|c|c|c|c|c|c|c|c|c|c|c|c|c|c|c|c|c|c|c|c|c|c|c|}
\hline
0&0&6&5&5&1&5&1&6&3&3&2&2&3&2&6&0&3&0&2\\
\hline
5&5&2&1&1&2&5&2&0&3&5&3&5&5&1&6&6&1&2&2\\
\hline
6&0&1&5&2&4&1&4&4&0&4&4&6&0&2&0&6&3&4&4\\
\hline
0&4&0&3&4&5&1&5&1&6&1&0&1&4&6&3&0&6&3&6\\
\hline
\end{tabular}}
\vspace{5pt}

\resizebox{0.63\textwidth}{!}{\begin{tabular}{|c|c|c|c|c|c|c|c|c|c|c|c|c|c|c|c|c|c|c|c|c|c|c|c|c|c|c|c|c|}
\hline
6&11&1&1&4&9&11&0&3&11&12&0&0&1&0&9&2&0&10&1\\
\hline
10&3&4&5&9&2&2&3&8&5&7&9&5&12&0&4&6&8&10&12\\
\hline
12&0&3&11&5&1&9&12&12&7&4&4&5&12&3&7&11&6&0&9\\
\hline
9&5&10&2&7&10&4&9&1&6&11&5&8&5&5&11&9&7&12&4\\
\hline
\end{tabular}}

\caption{Partitioning matrix (top) and lifting matrix (bottom) for UNF code with $(\gamma,\kappa,m,z,L)=(4,20,6,13,20)$.}
  \label{fig: MR_UNF_4_20} 
\end{figure}

\end{document}